\documentclass[journal, onecolumn, 11pt]{IEEEtran}

\ifCLASSINFOpdf

\else

\fi

\usepackage{tikz}
\usepackage{tkz-graph}
\usetikzlibrary[positioning,patterns,shapes,backgrounds,patterns,arrows,decorations.pathreplacing]

\hyphenation{op-tical net-works semi-conduc-tor}

\usepackage{amsmath, amsfonts, amsthm, amssymb, wasysym, txfonts, mathrsfs}
\usepackage{pdfpages, nicefrac, graphicx, bbm, enumerate}
\allowdisplaybreaks

\newtheorem{thm}{Theorem}[section]
\newtheorem{co}[thm]{Corollary}
\newtheorem{lem}[thm]{Lemma}

\newtheorem{definition}[thm]{Definition}

\newtheorem{example}[thm]{Example}
\newenvironment{exmp}{\begin{example}\rm}{\end{example}}
\newtheorem{remark}[thm]{Remark}

\newcommand{\F}{\mathbb{F}}

\newcommand{\s}{\mathbf{s}}
\renewcommand{\t}{\mathbf{t}}

\newcommand{\x}{\mathbf{x}}
\newcommand{\X}{\mathbf{X}}
\newcommand{\y}{\mathbf{y}}
\newcommand{\Y}{\mathbf{Y}}
\newcommand{\z}{\mathbf{z}}

\DeclareMathOperator{\markov}{\setlength{\unitlength}{.5cm} \begin{picture}(1,1)  \put(0,.22){\line(1,0){1}}  \put(.5,.22){\circle{.3}}   \end{picture}}

\begin{document}

\title{Authentication Against a Myopic Adversary}

\author{Allison~Beemer,
        Eric~Graves, 
        J\"{o}rg~Kliewer,
        Oliver~Kosut,
        and Paul~Yu~%
\thanks{A. Beemer and J. Kliewer are with the Department
of Electrical and Computer Engineering, New Jersey Institute of Technology, Newark,
NJ, 07103; O. Kosut is with the School of Electrical, Computer and Energy Engineering, Arizona State University, Tempe, AZ 85287; and E. Graves and P. Yu are with the Computer and Information Sciences Division, U.S. Army Research Laboratory, Adelphi, MD 20783.}%
\thanks{This research was sponsored by the Combat Capabilities Development Command Army Research Laboratory and was accomplished under Cooperative Agreement Number W911NF-17-2-0183. The views and conclusions contained in this document are those of the authors and should not be interpreted as representing the official policies, either expressed or implied, of the Combat Capabilities Development Command Army Research Laboratory or the U.S. Government. The U.S. Government is authorized to reproduce and distribute reprints for Government purposes not withstanding any copyright notation here on.}
\thanks{This work was presented in part at the 2019 IEEE Conference on Communications and Network Security (CNS), and appeared in \cite{BKKGY19}.}}

\markboth{}
{}

\maketitle

\begin{abstract}
We consider keyless authentication for point-to-point communication in the presence of a myopic adversary. In particular, the adversary has access to a non-causal noisy version of the transmission and uses this knowledge to choose the state of an arbitrarily-varying channel between legitimate users. The receiver succeeds by either decoding accurately or correctly detecting adversarial interference.
We show that a channel condition called $U$-overwritability, which allows the adversary to make its false message appear legitimate and untampered with, is a sufficient condition for zero authentication capacity. We present a useful way to compare adversarial channels, and show that once an AVC becomes $U$-overwritable, it remains $U$-overwritable for all ``less myopic'' adversaries. Finally, we show that stochastic encoders are necessary for positive authentication capacity in some cases, and examine in detail a binary adversarial channel that illustrates this necessity. Namely, for this channel, we show that when the adversarial channel is degraded with respect to the main channel, the no-adversary capacity of the underlying channel is achievable with a deterministic encoder. Otherwise, provided the channel to the adversary is not perfect, a stochastic encoder is necessary for positive authentication capacity; if such an encoder is allowed, the no-adversary capacity is again achievable.
\end{abstract}
\begin{IEEEkeywords}
Authentication, keyless authentication, arbitrarily-varying channel, myopic adversary, channel capacity
\end{IEEEkeywords}

\IEEEpeerreviewmaketitle

\section{Introduction}

When communicating over unsecured channels, verifying the trustworthiness of a received signal is critical. Thus, it may be useful for a receiver to declare adversarial tampering, even if it cannot decode the message perfectly. This allows for messages to be rejected unless they are confirmed to be trustworthy. This is known as \textit{authentication}. We consider authentication over an arbitrarily-varying channel (AVC) with an adversary who has some noisy version of the transmitted sequence. An AVC is a channel that takes as inputs both the legitimate and an adversarial transmission \cite{BBT60}, where the adversarial transmission (called a \textit{state}) may be maliciously chosen with the goal of causing a decoding error at the receiver. 
A plethora of variations on the AVC appear in the literature, in which the adversary has varying degrees of power and knowledge of the legitimate transmission, and the transmitter and receiver may or may not have access to shared secret information. In the case of authentication, we say that the decoder is successful if one of the following scenarios occurs: (1) the adversary is not active, and the decoder recovers the correct message, or (2) the adversary is active and the decoder either recovers the correct message or declares adversarial interference. This is a relaxation of the classical AVC, where the decoder is successful if and only if it correctly recovers the transmitted message.  Indeed, we find that authentication is ``easier'' in the sense that the AVC capacity can be zero, even though the authentication capacity for the same channel is positive (or even unchanged from the no-adversary setting).

The current work lies at the intersection of two broader areas of previous study: authentication over AVCs (with an adversary oblivious to the transmitted sequence), and traditional error correction over AVCs with myopic adversaries. Lying squarely in the former category are \cite{KK16, KK18}. In \cite{KK16}, a channel condition called \textit{overwritability} is introduced, and it is shown that authentication can be achieved with high probability for non-overwritable AVCs at positive (but asymptotically vanishing) rates. The results are extended in \cite{KK18}, where overwritability is shown to completely classify authentication capacity over an (oblivious adversary) AVC, with all non-overwritable channels having capacity equal to the non-adversarial capacity of the AVC. 
Meanwhile, more knowledgeable adversaries over the AVC were studied in \cite{S10,CCG10,DJL19,ZVJS18}. The capacity of an AVC with a myopic adversary who does not have knowledge of the codebook realization (i.e., the legitimate users share common randomness) is characterized in \cite{S10}. The authors of \cite{CCG10} give the capacity of the AVC in the case where the legitimate users have a sufficiently large shared key, and the adversary can see the transmission sequence (i.e. the channel input) perfectly. In \cite{DJL19}, the authors give bounds on the capacity of an AVC in the presence of a power-constrained myopic adversary, showing that once the channel to the adversary is bad enough, it is essentially oblivious. In the context of Gaussian channels, \cite{ZVJS18} examines the case where both the transmitter and myopic adversary are quadratically power-constrained. 

In this paper, we focus on keyless authentication when the adversary has full knowledge of the encoding procedure, and some limited knowledge of the transmitted sequence. More specifically, the adversary chooses a state based on a non-causal noisy version of the legitimate transmission, as well as full knowledge of the codebook being used, but no direct knowledge of the message being transmitted. The adversary is not power-constrained, a freedom that is reasonable given we are considering authentication and not pure error-correction, and the possible attacks by the adversary are determined by the AVC between legitimate users.

Prior work on authentication with a variety of more capable adversaries includes \cite{S84,M97,LEP09,GYS16,GK16,PGYB18}. Simmons \cite{S84} considered authentication in the case where the adversary knows the encoding procedure and may intercept the transmission perfectly, but the legitimate users are allowed some shared key with which they randomize their choice of codebook. In \cite{M97}, Maurer considers authentication for secret key agreement in the setting where each of the legitimate users and the adversary have access to a random variable, the three of which are correlated in some way. In that model, the adversary may substitute its transmission for that of the actual transmitter (over an otherwise noiseless channel), with the goal of hoodwinking the legitimate receiver; the (authentication) capacity is characterized by whether the joint distribution is \textit{simulatable} by the adversary. An inner bound on keyless authentication capacity is given by Graves, Yu, and Spasojevic in \cite{GYS16} for the case in which the adversary knows both the encoding procedure and the message, but does not have any additional knowledge of the transmitted sequence, which is allowed to be the output of a stochastic encoder. Lai, El Gamal, and Poor \cite{LEP09} consider authentication in the presence of a myopic adversary who carries out \textit{impersonation} or \textit{substitution} attacks; the transmitter and receiver are allowed a shared secret key. 
The attacks considered in the current work reduce to substitution attacks when the underlying AVC allows for direct modification of a transmission.
This model is extended in \cite{GK16}, where the authors additionally provide inner bounds for the keyed and keyless authentication capacities with impersonation/substitution attacks, and extend the simulatability condition of \cite{M97} to characterize nonzero capacity for this set of attacks.
Finally, \cite{PGYB18} improves on the inner bounds of \cite{LEP09,GK16} for keyed authentication.

A myopic adversary model bridges the gap between oblivious and omniscient adversaries. As mentioned above, authentication capacity over the AVC in the presence of an oblivious adversary is characterized in \cite{KK18} by the channel condition of {overwritability}. In this work, we show that an analogous condition for a myopic adversary called \textit{$U$-overwritability}, first introduced in \cite{BKKGY19}, is a sufficient condition for zero authentication capacity for both deterministic and stochastic encoders. When we are restricted to deterministic encoders, a sufficiently capable adversary becomes essentially omniscient, giving more cases of zero authentication capacity; interestingly, when we allow for stochastic encoders, the authentication capacity has the potential to increase to the non-adversarial capacity of the underlying channel, which is illustrated by example. These results naturally lead to the questions about the relationships between the possible channels to the adversary: namely, how capable can an adversary be before the channel becomes $U$-overwritable? We examine this question by endowing the set of channels to the adversary with a partially ordered set structure, showing that the up-set of any $U$-overwritable channel gives a set of channels which also lead to $U$-overwritability. This relationship is perhaps most useful when we know or want to say something about $U$-overwritablility for omniscient or oblivious adversaries.

The paper is organized as follows. In Section \ref{section:prelims} we introduce the model and necessary notation. We give several cases for which the authentication capacity is zero in Section \ref{section:zero_cap}, and demonstrate that there are cases where the deterministic authentication capacity is zero but a stochastic encoder allows for a positive authentication capacity. Section \ref{section:compare_adversaries} gives results derived by comparing different channels to the adversary. In Section \ref{section:MBAC}, we examine a specific binary model, proving that when the channel is not $U$-overwritable, the authentication capacity is equal to the underlying non-adversarial capacity, and that stochastic encoders must be used to achieve this in a portion of cases. Section \ref{section:conclusions} concludes the paper.


\section{Preliminaries}
\label{section:prelims}

Notation: Throughout the paper, an \textit{$(M,n)$ code} is a code with $M$ codewords and block length $n$. The notation $H(\cdot)$ will indicate the binary entropy function, $\|\cdot\|$ will denote the Hamming weight of a vector, $\mathbb{E}$ will denote expected value, $\oplus$ will be used to indicate (coordinate-wise) modulo $2$ addition, and ``$\markov$'' will indicate a Markov chain. A random variable will be denoted using a capital letter, with the corresponding alphabet and alphabet elements written with script and lowercase. For example, $X$ is a random variable with alphabet $\mathcal{X}$ such that for all $x\in \mathcal{X}$, $P_{X}(x)$ (or $P(x)$ when the random variable is understood) is the probability that $X=x$.
If $P_X(x)$ is an integer multiple of $\nicefrac{1}{n}$ for all $x\in \mathcal{X}$, we write $\tau_{X}$ to indicate the type class corresponding to $P_{X}$ (i.e. the set of vectors of length $n$ whose empirical distributions match the distribution $P_{X}$). 
Finally, we define the typical set $\mathcal{T}_{\epsilon}^{(n)}(X)$ to be the set of sequences $\x\in \mathcal{X}^{n}$ such that the empirical probability of each $x\in \mathcal{X}$ in $\x$ is within $\epsilon \cdot P_{X}(x)$ of $P_{X}(x)$. We may write $\mathcal{T}_{\epsilon}^{(n)}$ when the random variable(s) are clear from context.

We consider authentication when there is a legitimate transmitter and receiver as well as an active adversary who induces some channel state at each transmission. We assume that the adversary has full knowledge of the codebook of the legitimate parties, though not necessarily of the specific transmission being sent. More formally, let $W_{Y|X,S}$ be a discrete memoryless adversarial channel with the sets $\mathcal{X}, \mathcal{S}$, and $\mathcal{Y}$ as the input, state, and output alphabets.
In the event of multiple channel uses, we write $W(\y\mid \x, \s)$ for the product channel, where the sequence lengths are understood. Similarly, there is a discrete memoryless channel between the legitimate transmitter and the adversary given by $U_{Z|X}$, where $\mathcal{Z}$ is the output alphabet at the adversary, so that the adversary has a noisy version $\z$ of any transmitted sequence $\x$ with probability $U(\z \mid \x)$. See Figure \ref{channel_model} for a depiction of this setup. 

\begin{figure}[htb!]
\centering
\tikzstyle{block} = [draw, fill=lightgray, rectangle, minimum height=20pt, minimum width=40pt, text centered]
\begin{tikzpicture}
    \node[block] (enc) at (0,0) {$\begin{array}{c} \text{Transmitter} \\ F \end{array}$};
    \node[block] (chan) at (4,0) {$W(y\mid x,s)$};
    \node[block] (dec) at (8,0) {$\begin{array}{c} \text{Receiver} \\ \phi \end{array}$};
    \node[block] (adv_chan) at (2,2) {$U(z\mid x)$};
    \node[block] (adv) at (4,4) {$\begin{array}{c} \text{Adversary} \\ J(\s\mid \z) \end{array}$};
    
    \draw[->,thick] (-2,0) -- (enc); 
    \draw[->,thick] (enc) -- (adv_chan); 
    \draw[->,thick] (enc) -- (chan); 
    \draw[->,thick] (adv_chan) -- (adv);
    \draw[->,thick] (adv) -- (chan);
    \draw[->,thick] (chan) -- (dec);
    \draw[->,thick] (dec) -- (10,0);
    
    \node at (-2,0) [left] {$i\in [M]$};
    \node at (2,0) [below] {$\X_{i}\in \mathcal{X}^{n}$};
    \node at (6,0) [below] {$\y \in \mathcal{Y}^{n}$};
    \node at (10,0) [right] {$\hat{i}\in [M]\cup \{0\}$};
    \node at (4,2) [right] {$\s \in \mathcal{S}^{n}$};
    \node at (1,1) [above left] {$\X_{i}\in \mathcal{X}^{n}$};
    \node at (3,3) [above left] {$\z \in \mathcal{Z}^{n}$};
\end{tikzpicture}
\caption{\label{channel_model} The transmitter sends a length-$n$ sequence $\X_{i}:=F(i)$, which is the (possibly stochastic) encoding of a message $i\in [M]$. The adversary views a noisy version $\z$ of this transmission and sends state $\s$ to the channel between transmitter and receiver. The receiver then receives $\y$, whose probability is conditioned on both the legitimate transmission and the adversarial state, and decodes to $\phi(\y)\in [M]\cup \{0\}$, where ``0'' is a declaration of adversarial interference.}
\end{figure}
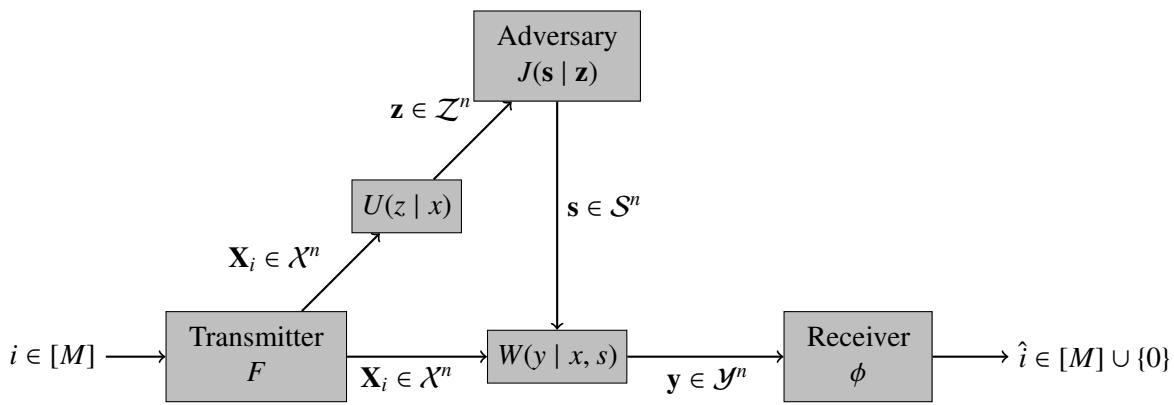

A \textit{myopic adversary} will be formally defined by a distribution $J_{S^{n}\mid Z^{n}}$,
where the adversary's choice of $J$ may depend on $U$ and on the codebook used by the legitimate users. Notice that the adversary has a noncausal noisy view of the transmission, so that they may view the entire block length before choosing a state. The case in which the adversary's choice is deterministic, as in \cite{S10}, is included in the above definition. 

Let $s_{0}\in \mathcal{S}$ and the corresponding constant state sequence $\s_{0}\in \mathcal{S}^{n}$ represent the state in which there is no adversary: i.e. $W_{Y|X, S=s_{0}}$ is a non-adversarial channel. An $(M,n)$ \textit{authentication code} for $W$ is an encoder/decoder pair, where the encoder is possibly \textit{stochastic}:
\begin{align*}
	F&: \{ 1, 2, \ldots, M\} \to \mathcal{X}^{n} \\
	\phi&: \mathcal{Y}^{n} \to \{ 0, 1, 2, \ldots, M\},
\end{align*}
where an output of ``0'' under $\phi$ indicates a declaration of adversarial interference. The decoder $\phi$ is successful if (1) the output is equal to the input message, or (2) $\s\neq \s_{0}$ and the output is equal to $0$. In other words, the decoder either successfully detects adversarial interference, or decodes correctly. 

There are a variety of ways to formally define a {stochastic encoder} (also called a \textit{randomized encoder}), all of which allow the transmitter to make use of some local randomness. We use the following (see, e.g. \cite{EK11}, pg. 550).

\begin{definition}
\label{def:stoch_enc}
Given a message set $[2^{nR}]$, a {\em stochastic encoder} generates a codeword in $\mathcal{X}^{n}$ for $m\in [2^{nR}]$ according to some conditional distribution $P_{\X|M}(\x \mid m)$.
\end{definition}

Let $\phi^{-1}(A)\subseteq \mathcal{Y}^{n}$ represent the set of channel outputs which decode to some $i\in A$ under $\phi$, and let $\phi^{-1}(A)^{c}$ be the complement in $\mathcal{Y}^{n}$ of this set. Let $\X_{i}:=F(i)$ denote the length-$n$ encoding of message $i$; notice that if the encoder is stochastic, $F(i)$ can take on one of several values, and so $\X_{i}$ is a random variable.
Given transmitted message $i$ and choice of myopic adversary $J$, we define the probability of
error for the authentication code $(F,\phi)$ as:
\begin{equation}
\label{equation:error_given_message}
e(i, J) =
 \mathbb{E}_{\X \mid i}\left[\sum_{\z}U(\z \mid \X) J(\s_{0} \mid \z)W(\phi^{-1}(i)^{c} \mid \X, \s_{0})+ \\
 \sum_{\s \neq \s_{0}}\sum_{\z}U(\z \mid \X) J(\s \mid \z)W(\phi^{-1}(\{i,0\})^{c} \mid \X, \s)\right]
\end{equation}
where the expectation is over the realizations of the stochastic encoder conditioned on message $i$ being sent. If the adversary is absent, then $J(\s_{0} \mid \z)=1$ for all $\z \in \mathcal{Z}^{n}$. We denote this distribution by $J_{\s_{0}}$, and observe that with this choice of adversarial action, \eqref{equation:error_given_message} reduces to the standard error probability given that message $i$ is transmitted.
\begin{equation*}
e(i, J_{\s_{0}}) = \mathbb{E}_{\X\mid i}\left[W(\phi^{-1}(i)^{c} \mid \X, \s_{0})\right].
\end{equation*}
Similarly, if the adversary decides that $J(\s \mid \z)=1$ for all $\z\in \mathcal{Z}^{n}$ and some particular $\s \neq \s_{0}$, denoted $J_{\s}$,
\begin{equation*}
e(i, J_{\s}) = \mathbb{E}_{\X \mid i}\left[W(\phi^{-1}(\{i,0\})^{c} \mid \X, \s)\right].
\end{equation*}

The above two cases show the reduction to the so-called \textit{oblivious} case, in which the adversary has no knowledge about the transmission before choosing a state vector $\s$, and so should simply choose $\s$ to optimize its chances of causing decoding failure. 
 
We assume that each message in $[M]:=\{1,2,\ldots,M\}$ is transmitted with equal probability. Then the average probability of error over all possible messages for a given adversarial choice of $J$ is given by
\begin{equation}
e(J) = \frac{1}{M}\sum_{i=1}^{M}e(i,J).
\end{equation}
The maximal probability of error is given by
\begin{equation}
e_{\max}(J) = \max_{i}e(i,J).
\end{equation}

\begin{remark}
For fixed $i$ and any choice of $J$, $e(i,J)\leq \max_{\s}e(i,J_{\s})$, implying that $\sup_{J}e(i,J)= \max_{\s}e(i,J_{\s})$. However, since the maximizing choice of $\s$ may differ by message $i$, and a myopic adversary has access to some information about $i$, this does not imply that $e(J)\leq \max_{\s}e(J_{\s})$ for all $J$. This marks a fundamental difference from the oblivious adversary model: in that case, an adversarial strategy is defined by a specific choice of $\s$ that is made independent of the transmitted message, and the probability of error for that choice is given by $e(\s)=\frac{1}{M}\sum_{i}e(i,\s)$. Here, it is clearly true that $e(\s)\leq \max_{\s}e(\s)$ for every choice of adversarial strategy $\s$.
\end{remark}

We say that a rate $R$ is \textit{achievable} if there exists a sequence of
$(2^{nR}, n)$ authentication codes such that
$\sup_{J}e(J) \to 0 \text{ as } n\to \infty$, or $\sup_{J}e_{\max}(J) \to 0 \text{ as } n\to \infty$. We will primarily consider the average probability of error in this paper, and will explicitly state when we are claiming a stronger achievability result (i.e. with reference to the maximal probability of error).
Notice that $\sup_{J}e(J)$ is the highest error probability the adversary can hope for, achieved by designing $J$ optimally. The \textit{authentication capacity} $C_{\text{auth}}$ is the supremum of all achievable rates. Let $C$ denote the capacity in the no-adversary setting (i.e., $J=J_{\s_{0}}$). 

The work in \cite{KK18} shows that a channel property called \textit{overwritability} exactly determines when the authentication capacity is nonzero for oblivious adversaries. 

\begin{definition}
An adversarial channel $W_{Y|X,S}$ with no-adversary state $s_0$
is {\em overwritable} if there exists a distribution $P_{S\mid X'}$ such
that
\begin{equation}
\sum_{s}P_{S|X'}(s\mid x')W(y\mid x,s)=W(y\mid x',s_{0}) \text{ for all } x, x', y.
\end{equation}
\end{definition}

Less formally, over an overwritable channel, an adversary can seamlessly make their own false message appear legitimate to the receiver without being detected. 

\begin{thm}\cite{KK18}
\label{KK18_theorem}
If a channel with an oblivious adversary is not overwritable, then $C_{\text{auth}} = C$;
if it is overwritable, $C_{\text{auth}} = 0$.
\end{thm}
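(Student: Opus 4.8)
The inequality $C_{\text{auth}}\le C$ is immediate: a sequence of authentication codes with $\sup_{J}e(J)\to 0$ in particular has $e(J_{\s_{0}})\to 0$, so it is a sequence of codes with vanishing error over the no-adversary channel $W_{Y|X,s_{0}}$, and hence has rate at most $C$. For the converse in the overwritable case, I would exhibit the \emph{overwriting attack}: fix a distribution $P_{S\mid X'}$ witnessing overwritability and let the oblivious adversary pick a decoy message $i'$ uniformly at random, draw a decoy codeword $\X_{i'}$ (using its knowledge of the encoder), and generate the state $\s$ coordinatewise by applying $P_{S\mid X'}$ to $\X_{i'}$; this uses the codebook only, never the transmitted message. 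By the overwriting identity the channel output is then distributed exactly as $W(\cdot\mid\X_{i'},\s_{0})$, i.e.\ as an unattacked transmission of message $i'$, so any code with $e(J_{\s_{0}})\to 0$ makes the receiver output $i'$ with probability tending to $1$; since $i'\neq i$ with probability $1-1/M$ and the output is not $0$, this drives $e(J)\to 1$ along any sequence of $(2^{nR},n)$ codes with $R>0$. Hence $\sup_{J}e(J)$ cannot vanish at positive rates, and $C_{\text{auth}}=0$.

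For the reverse inequality, suppose $W_{Y|X,S}$ is not overwritable; I must show $C_{\text{auth}}\ge C$. Fix $R<C$, choose a \emph{full-support} input distribution $P_{X}$ with $I(P_{X},W_{Y|X,s_{0}})>R$ (a small perturbation of a capacity-achieving distribution, at a rate cost that vanishes with the perturbation, so $R$ can be taken arbitrarily close to $C$), draw a random codebook of $2^{nR}$ codewords i.i.d.\ from $P_{X}$, and decode with a \emph{typicality-plus-authentication} rule: on input $\y$, output the unique $i$ with $(\X_{i},\y)\in\mathcal{T}_{\epsilon}^{(n)}$ for the joint law $P_{X}(x)W(y\mid x,s_{0})$ if one exists, and output $0$ otherwise. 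For a typical codebook the no-adversary error vanishes by the usual joint-typicality argument, since $R<I(P_{X},W_{Y|X,s_{0}})$. Because for an oblivious adversary $e(J)\le\max_{\s}e(\s)$, it suffices to consider a fixed state sequence $\s$ chosen with knowledge of the codebook; against it the decoder fails only if $\y\sim W(\cdot\mid\X_{i},\s)$ turns out to be jointly typical, under $W_{Y|X,s_{0}}$, with some \emph{other} codeword $\X_{j}$. The heart of the matter is to bound the probability of this ``confusion'' event, uniformly over $\s$, by arguing contrapositively: from any $\s$ that causes confusion with non-negligible probability I would extract, via the empirical joint statistics of the true codeword, the decoy codeword, the state, and the output, a conditional distribution $P_{S\mid X'}$ for which $\sum_{s}P_{S\mid X'}(s\mid x')W(y\mid x,s)$ is arbitrarily close to $W(y\mid x',s_{0})$ for all $x,x',y$, and then pass to the limit, using compactness of the probability simplex, to obtain an exact overwriting distribution, contradicting the hypothesis. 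What makes the extraction go through is that, since $\s$ is fixed without knowledge of the transmitted message, a ``simulation'' of a decoy codeword $\X_{j}$ good enough to pass the authentication test must succeed for a constant fraction of the true codewords $\X_{i}$ at once; as the code is i.i.d.\ $P_{X}$ and $P_{X}$ has full support, the letters of those codewords exhaust $\mathcal{X}$, which is exactly what upgrades an averaged consistency condition to the pointwise overwriting identity.

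The main obstacle is carrying out this confusion bound uniformly in $\s$ without an untenable loss. A union bound over the exponentially many state sequences is far too crude, so I would first reduce $\s$ to a bounded amount of type information and then exploit carefully that a single state sequence can be tailored to at most a vanishing fraction of the $2^{nR}$ codewords; the gap between ``for most codewords the state looks generic'' and ``for the few codewords the adversary cares about the state can mimic an overwriting attack'' is exactly the phenomenon that must be controlled. This delicacy is what confined the earlier analysis to positive but asymptotically vanishing rates, and pushing all the way to $C$ is where the real content of the theorem lies; it uses crucially that an oblivious adversary is entirely blind to the transmission, so that, for instance, a sublinear preamble can carry randomness the adversary cannot anticipate and thereby turn any residual authentication failure into a detectable, hence declarable, event at no asymptotic cost in rate. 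Combined with the trivial upper bound $C_{\text{auth}}\le C$ and the overwriting converse, this yields the dichotomy in the statement.
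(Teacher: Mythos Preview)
This theorem is quoted from \cite{KK18} and is not proved in the present paper, so there is no in-paper proof to compare against. That said, the paper does prove a strict generalization of the ``overwritable $\Rightarrow C_{\text{auth}}=0$'' direction: Theorem~\ref{theorem:U_over} shows $U$-overwritability forces zero capacity, and the oblivious case is recovered by taking $U$ to be the trivial channel (so $Z$ is independent of $X$). Your overwriting-attack argument---pick a uniform decoy message $i'$, draw a decoy codeword, and apply $P_{S\mid X'}$ coordinatewise---is exactly the specialization of the adversary $\tilde J$ constructed in the proof of Theorem~\ref{theorem:U_over}, and your error bound $1-1/M$ matches the $(M-1)/(2M-1)$ there up to the accounting for $e(J_{\s_0})$. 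So on that half you and the paper agree.

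For the achievability direction your proposal is a plausible outline but not a proof, and you say as much. Two comments. First, the ``extract an approximate overwriting distribution by compactness'' idea is the right contrapositive, but your penultimate paragraph conflates two distinct mechanisms: a one-shot typicality decoder analyzed via types, and a separate ``sublinear preamble carrying randomness.'' The actual argument in \cite{KK18} is closer to the second: one first builds an authentication scheme at asymptotically zero rate (this is where non-overwritability enters, via an identification/tag code whose failure would manufacture an overwriting $P_{S\mid X'}$), and then concatenates it with an ordinary capacity-achieving code for $W_{Y\mid X,s_0}$; the tag costs $o(n)$ channel uses, so the rate loss vanishes. Your direct single-phase typicality decoder would require a Csisz\'ar--Narayan style codebook lemma tailored to the authentication error event, which you have not supplied and which is the real work. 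Second, your claim that full-support $P_X$ is what ``upgrades an averaged consistency condition to the pointwise overwriting identity'' is on target, but beware: the adversary chooses $\s$ after seeing the codebook, so the reduction to types must hold for \emph{every} $\s$ simultaneously for a \emph{fixed} good codebook, not merely in expectation over the random code. That uniformity is exactly the step you flag as the obstacle, and it is not resolved by the sketch as written.
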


Overwritability should be compared with symmetrizability for the standard AVC problem \cite{CN88}: $W_{Y|X,S}$ 
is \textit{symmetrizable} if there exists $P_{S|X'}$ such
that
$\sum_{s}P_{S|X'}(s\mid x')W(y\mid x,s)=\sum_{s}P_{S|X'}(s\mid x)W(y\mid x',s)$ for all $x, x', y$.
In \cite{KK18} it is shown that overwritability implies symmetrizability, but that the converse does not hold. Since a classical AVC has zero capacity if and only if it is symmetrizable \cite{CN88}, this implies that the authentication capacity can be positive even though the AVC capacity is zero.
There is an equivalence between overwritability and \textit{simulatability}, as introduced in \cite{M97}, when the adversary is oblivious to the transmission but has complete control over the communication channel (i.e. can substitute in its transmission for the legitimate transmission at will).

When the adversary is not oblivious, we propose a modification of overwritability, called \textit{$U$-overwritability}, that will help us to characterize capacity in the myopic case. 

\begin{definition}
\label{U-overwritable}
We say that an adversarial channel $W_{Y|X,S}$ with no-adversary state $s_0$ is {\em $U$-overwritable}, where $U_{Z|X}$ is a conditional distribution, if there exists a distribution $P_{S\mid X', Z}$ such that
\begin{equation}
\sum_{s, z}U(z\mid x)P_{S\mid X', Z}(s\mid x', z)W(y\mid x,s)=W(y\mid x',s_{0}) \quad \forall x, x', y.
\end{equation}
\end{definition}

Let $\mathbbm{1}_{Z \mid X}$ be the deterministic identity distribution: that is, $\mathbbm{1}(z\mid x)=1$ if $z=x$, and zero otherwise. This corresponds to the case of a so-called \textit{omniscient} adversary, and induces a special case of $U$-overwritability, as defined below.

\begin{definition}
\label{definition:I-over}
An adversarial channel $W_{Y|X,S}$ with no-adversary state $s_0$ is {\em $I$-overwritable} if there exists a distribution $P_{S\mid X', X}$ such that
\begin{equation}
\sum_{s}P_{S\mid X', X}(s\mid x', x)W(y\mid x,s)=W(y\mid x',s_{0}) \quad \forall x, x', y.
\end{equation}
\end{definition}

With the introduction of a noisy channel to the adversary, there are several relevant problems in terms of authentication. The first is characterizing the channel capacity given an AVC $W_{Y|X,S}$ and a channel $U_{Z|X}$ to the adversary. Another is determining the robustness of a channel $W_{Y|X,S}$ against myopic adversaries: in other words, what is the ``best'' channel to the adversary that the legitimate users can withstand (i.e. still transmit reliably with a positive rate). We consider these problems in the remainder of the paper.


\section{Zero Authentication Capacity}
\label{section:zero_cap}

In this section, we characterize the conditions under which the authentication capacity is zero. In Section \ref{subsec:det_zero}, we consider the case where we are restricted to deterministic encoders. Section \ref{subsec:stoch_zero} shows that these results do not necessarily hold when the encoder is allowed some local randomness.

First, we give a result that will apply to both deterministic and stochastic encoders.

\begin{thm}
\label{theorem:U_over}
If the channel to the adversary is given by $U_{Z|X}$ and $W_{Y|X,S}$ is $U$-overwritable, $C_{\text{auth}}=0$.
\end{thm}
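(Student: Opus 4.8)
The plan is to show that no positive rate is achievable by exhibiting, for every $(M,n)$ authentication code, a single myopic adversary $J$ whose average error satisfies $e(J)\ge 1-\tfrac1M-\bar e_0$, where $\bar e_0:=e(J_{\s_0})$ is the code's average error in the no-adversary setting. Since any sequence of codes of positive rate $R$ has $M=2^{nR}\to\infty$, and achievability ($\sup_J e(J)\to 0$) forces in particular $e(J_{\s_0})=\bar e_0\to 0$ (by testing against $J_{\s_0}$), this bound would give $e(J)\to 1$, contradicting $\sup_J e(J)\to 0$; hence $C_{\text{auth}}=0$.

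First I would lift $U$-overwritability to the $n$-fold product channel: because $W$ and $U$ are memoryless and Definition \ref{U-overwritable} is a single-letter identity, applying $P_{S\mid X',Z}$ coordinate-wise yields a distribution $P^{(n)}_{S\mid X',Z}$ with $\sum_{\s,\z}U(\z\mid\x)P^{(n)}_{S\mid X',Z}(\s\mid\x',\z)W(\y\mid\x,\s)=W(\y\mid\x',\s_0)$ for all $\x,\x',\y$. The adversary then plays an ``impersonation-by-overwriting'' attack: draw a fake message $I'$ uniformly from $[M]$, independently of everything else; form the (possibly stochastic) encoding $\X_{I'}=F(I')$ using the public encoder; observe $\z$; and select the state from $P^{(n)}_{S\mid X',Z}(\cdot\mid\X_{I'},\z)$. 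The crucial consequence of the product identity is that, conditioned on $\X_{I'}$, the output $\y$ has distribution exactly $W(\cdot\mid\X_{I'},\s_0)$ — i.e. $\y$ is statistically \emph{indistinguishable from a genuine, untampered transmission of message $I'$}, and its distribution does not depend on the true message $i$ at all.

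Next I would bound the error. Dropping the $\s_0$-term of \eqref{equation:error_given_message} via $\phi^{-1}(i)\subseteq\phi^{-1}(\{i,0\})$ shows $1-e(i,J)\le\Pr[\phi(\y)\in\{i,0\}]$ under this attack, and by the indistinguishability above this equals $\tfrac1M\sum_{i'}\mathbb{E}_{\X_{i'}}[W(\phi^{-1}(\{i,0\})\mid\X_{i'},\s_0)]$, independent of $i$. Averaging over $i$ and exchanging the order of summation, I would apply the pointwise identity $\sum_{i}\mathbbm{1}[\phi(\y)\in\{i,0\}]=\mathbbm{1}[\phi(\y)\in[M]]+M\,\mathbbm{1}[\phi(\y)=0]\le 1+M\,\mathbbm{1}[\phi(\y)=0]$ to obtain $e(J)\ge 1-\tfrac1M-\tfrac1M\sum_{i'}\mathbb{E}_{\X_{i'}}[W(\phi^{-1}(0)\mid\X_{i'},\s_0)]$. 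Finally, $\phi^{-1}(0)\subseteq\phi^{-1}(i')^c$ bounds the last term by $\bar e_0$, giving $e(J)\ge 1-\tfrac1M-\bar e_0$ as claimed.

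The main obstacle — and the step where the hypothesis does all the work — is the indistinguishability claim: one must verify carefully that the overwriting distribution is fed only the adversary's fake codeword $\X_{I'}$ and its noisy view $\z$ (never the true message or true codeword), so that marginalizing over $\z\sim U(\cdot\mid\X_i)$ makes $\X_i$ vanish entirely; this is precisely the content of the $U$-overwritability identity. It is exactly this that traps the decoder: it cannot simultaneously output $I'$ (which it must do when $I'$ is genuinely transmitted with no adversary) and output something in $\{i,0\}$ (which it must do here, where $i\ne I'$ with probability $1-\tfrac1M$). The remaining counting and bounding steps are routine, and the argument applies verbatim whether $F$ is deterministic or stochastic, since $\X_{I'}$ is drawn from the same public encoder in either case.
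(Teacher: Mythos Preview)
Your proposal is correct and follows essentially the same strategy as the paper: the adversary draws a uniform fake message $I'$, encodes it with the public (possibly stochastic) encoder, and applies the coordinate-wise overwriting kernel $P_{S\mid X',Z}$, so that by the $U$-overwritability identity the output law equals $W(\cdot\mid \X_{I'},\s_0)$ regardless of the true codeword. The only cosmetic difference is in the final bookkeeping---the paper lower-bounds $e(\tilde J)$ via the event $\{\phi(\y)=j\}$ for $j\ne i$ to get $e(\tilde J)\ge \tfrac{M-1}{M}(1-e(J_{\s_0}))$, whereas you upper-bound $1-e(J)$ using the pointwise count $\sum_i \mathbbm{1}[\phi(\y)\in\{i,0\}]\le 1+M\,\mathbbm{1}[\phi(\y)=0]$ to get $e(J)\ge 1-\tfrac1M-e(J_{\s_0})$; both inequalities yield the same conclusion.
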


\begin{proof}
Suppose $W_{Y|X,S}$ is $U$-overwritable, and let $P_{S|X',Z}$ be the distribution guaranteed by Definition \ref{U-overwritable}. Consider a sequence of $(2^{nR}, n)$ authentication codes with fixed $R>0$, and let $M:=2^{nR}$. Let $\tilde{J}(\s \mid \z)=\frac{1}{M}\sum_{j=1}^{M}\mathbb{E}_{\X'\mid j}\left[P_{S|X',Z}(\s \mid \X', \z)\right]$, where the expectation is over the stochastic encoding of message $j$. Then,
\begin{align}
e(\tilde{J}) &= \frac{1}{M}\sum_{i=1}^{M} e(\tilde{J}, i) \\
&\geq \frac{1}{M^{2}}\sum_{i,j}\mathbb{E}_{\X\mid i}\left[\sum_{\s,\z}U(\z \mid \X) \mathbb{E}_{\X'\mid j}\left[P_{S|X',Z}(\s \mid \X', \z)\right]W(\phi^{-1}(\{i,0\})^{c} \mid \X, \s)\right]\\
&= \frac{1}{M^{2}}\sum_{i,j}\mathbb{E}_{\X\mid i}\left[\mathbb{E}_{\X' \mid j}\left[W(\phi^{-1}(\{i,0\})^{c} \mid \X', \s_{0})\right]\right] \label{equation:u-over}\\
&\geq \frac{1}{M^{2}}\sum_{j}\sum_{i\neq j}\mathbb{E}_{\X'\mid j}\left[W(\phi^{-1}(j) \mid \X', \s_{0})\right] \label{equation:i-j}\\
&\geq \frac{1}{M^{2}}\sum_{j}\sum_{i\neq j}(1-e(j,J_{\s_{0}}))  \\
&\geq \frac{M-1}{M}\left(\frac{1}{M}\sum_{j}(1-e(j,J_{\s_{0}}))\right) \ \\
&\geq \frac{M-1}{M}\left(1-e(J_{\s_{0}})\right), 
\end{align} 
where \eqref{equation:u-over} follows from $U$-overwritability and \eqref{equation:i-j} follows because $j\in\{i,0\}^{c}$ as long as $i\neq j$. The deterministic distribution $J_{\s_{0}}$ is as defined in Section \ref{section:prelims}. Since $\sup_{J}e(J) \geq e(\tilde{J})$ and $\sup_{J}e(J) \geq e(J_{\s_{0}})$,
\[ \sup_{J}e(J) \geq \frac{M-1}{M}\left(1-e(J_{\s_{0}})\right) \geq \frac{M-1}{M}\left(1-\sup_{J}e(J)\right). \]
We conclude that  $\sup_{J}e(J) \geq \frac{M-1}{M}\left(1-\sup_{J}e(J)\right)$. This reduces to $\sup_{J}e(J) \geq \frac{M-1}{2M-1}$,
 which approaches $1/2$ as $n\to \infty$. Thus, $C_{\text{auth}}=0$.
\end{proof}

\subsection{Deterministic Encoders}
\label{subsec:det_zero}

The encoder $f$ of a $(2^{Rn},n)$ authentication code is deterministic if each message $i\in [2^{Rn}]$ has a single associated codeword $f(i):=\x_{i}\in \mathcal{X}^{n}$ that will be transmitted across the channel $W_{Y|X,S}$. That is, if $f$ is a function from $[2^{nR}]$ to $\mathcal{X}^{n}$. For deterministic encoders, knowing the message is synonymous with knowing the transmitted sequence. Thus, in this scenario, if the adversary can reliably decode the message, we can consider them as an essentially omniscient adversary. We show for the set of deterministic encoders that if the no-adversary channel between legitimate users is \textit{stochastically degraded} with respect to the channel to the adversary, then $I$-overwritability guarantees $C_{\text{auth}}=0$.

\begin{definition}
We say that a channel $P_{Y|X}$ is {\em stochastically degraded} with respect to $P_{Z|X}$ if there exists $P_{Y|Z}$ such that for all $x\in \mathcal{X},y\in \mathcal{Y}$:
\[ P_{Y|X}(y|x)=\sum_{z\in \mathcal{Z}}P_{Y|Z}(y|z)P_{Z|X}(z|x).\]
\end{definition}

\begin{thm}
\label{theorem:essentially_omniscient}
Suppose that $W_{Y|X, S=s_{0}}$ is stochastically degraded with respect to $U_{Z|X}$.
Then, if $W_{Y|X,S}$ is $I$-overwritable, and we are restricted to a deterministic encoder, $C_{\text{auth}}=0$.
\end{thm}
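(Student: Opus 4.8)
The plan is to reduce this to Theorem~\ref{theorem:U_over} by showing that, under the stochastic degradedness hypothesis, $I$-overwritability of $W_{Y|X,S}$ implies $U$-overwritability of $W_{Y|X,S}$ for the given channel $U_{Z|X}$ to the adversary. Once that implication is established, Theorem~\ref{theorem:U_over} immediately gives $C_{\text{auth}}=0$, and the restriction to deterministic encoders is irrelevant (or at most simplifies matters). So essentially everything rides on one algebraic construction: producing a valid $P_{S\mid X',Z}$ from the $I$-overwritability distribution $P_{S\mid X',X}$ and the degrading kernel.

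Here is the construction I would carry out. Let $P_{S\mid X',X}$ be the distribution guaranteed by $I$-overwritability, so $\sum_s P_{S\mid X',X}(s\mid x',x)W(y\mid x,s)=W(y\mid x',s_0)$ for all $x,x',y$. The adversary, seeing $z$ and wanting to act on behalf of a false codeword $x'$, does not know $x$, but $z$ is a noisy observation of $x$ through $U_{Z|X}$. The naive idea of ``invert $U$ to guess $x$ from $z$'' need not work because $U$ need not be invertible; instead I would exploit the hypothesis in the other direction. Since $W_{Y|X,S=s_0}$ is stochastically degraded with respect to $U_{Z|X}$, there is a kernel $P_{Y|Z}$ with $W(y\mid x,s_0)=\sum_z P_{Y|Z}(y\mid z)U(z\mid x)$ for all $x,y$. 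The key observation is that we can feed the adversary's $z$ \emph{through this degrading kernel's associated action}: we want to choose $P_{S\mid X',Z}(s\mid x',z)$ so that $\sum_{s}P_{S\mid X',Z}(s\mid x',z)W(y\mid x,s)=P_{Y|Z}(y\mid z)\cdot(\text{something})$ — more precisely, so that after averaging against $U(z\mid x)$ we recover $\sum_z U(z\mid x)P_{Y|Z}(y\mid z)=W(y\mid x',s_0)$ when $x'$ plays the role of $x$ in the degrading relation. The honest way to make this precise: apply $I$-overwritability not to the real channel but note that $I$-overwritability says the adversary can, knowing both $x$ and $x'$, simulate $W(\cdot\mid x',s_0)$. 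Apply this with the \emph{target} codeword playing the role of the channel input: set $P_{S\mid X',Z}(s\mid x',z)$ to be a mixture over a hypothesized true input, weighted so that the $U$-average telescopes. The cleanest route is: define $P_{S\mid X',Z}(s\mid x',z) = P_{S\mid X',X}(s\mid x',x'')$ summed against a reverse channel $P_{X''|Z}$ obtained from $U$ and some input distribution — but this is fragile, so I would instead try to directly verify that choosing $P_{S\mid X', Z}$ to emulate the degrading kernel works, i.e. guess $P_{S|X',Z}(s\mid x',z)$ built from $P_{Y|Z}$ via $I$-overwritability applied with $x'$ as the ``true'' input.

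The cleanest correct argument I expect: since $W_{Y|X,s_0}$ is degraded w.r.t.\ $U$, we have $W(y\mid x', s_0) = \sum_z U(z\mid x') P_{Y|Z}(y\mid z)$. We want the left side of the $U$-overwritability condition, $\sum_{s,z} U(z\mid x)P_{S\mid X',Z}(s\mid x',z) W(y\mid x,s)$, to equal $W(y\mid x',s_0)$ for \emph{all} $x$, in particular it must not depend on $x$ — and that independence-from-$x$ is exactly what $I$-overwritability buys, because $\sum_s P_{S\mid X',X}(s\mid x',x) W(y\mid x,s) = W(y\mid x',s_0)$ is independent of $x$. So I would simply try $P_{S\mid X', Z}(s \mid x', z) := P_{S\mid X', X}(s\mid x', \xi(z))$ for a fixed (deterministic or stochastic) map $\xi$ from $\mathcal Z$ to $\mathcal X$; then the left side becomes $\sum_z U(z\mid x)\big[\sum_s P_{S|X',\xi(z)}(s) W(y\mid x,s)\big]$. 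This inner bracket is \emph{not} $W(y\mid x',s_0)$ unless $\xi(z)=x$, so this does not immediately close — which is precisely why the degradedness hypothesis, not just $I$-overwritability, is needed. The resolution is to let $P_{S|X',Z}$ depend on $z$ through $P_{Y|Z}$ in a way that produces $P_{Y|Z}(y\mid z)$ as the per-$z$ contribution independent of $x$: I would look for $P_{S|X',Z}(s\mid x',z)$ with $\sum_s P_{S|X',Z}(s\mid x',z) W(y\mid x,s) = $ (a quantity depending only on $z,y$, equal to $P_{Y|Z}(y\mid z)$), using the fact that $I$-overwritability lets the adversary reproduce \emph{any} channel of the form $W(\cdot\mid x',s_0)$ and, by taking convex combinations / a suitable relabeling, any channel in its ``reachable set,'' which the degradedness guarantees contains the kernel $P_{Y|Z}$.

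The main obstacle, then, is exactly this last point: verifying that $I$-overwritability plus ``$W(\cdot\mid\cdot,s_0)$ degraded through $U$'' lets the adversary, using only $z$ and $x'$, produce an output distribution that, averaged over $z\sim U(\cdot\mid x)$, equals $W(y\mid x',s_0)$ regardless of $x$. I would handle it by writing out the $U$-overwritability left-hand side, inserting the degrading identity $W(y\mid x',s_0)=\sum_z U(z\mid x')P_{Y|Z}(y\mid z)$ for the target, and matching terms $z$-by-$z$ using the $I$-overwritability identity applied coordinatewise — the technical care is in making sure the quantifier ``for all $x$'' is respected, which is where $I$-overwritability's $x$-independence of its output is used. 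After the implication $I\text{-overwritable} \Rightarrow U\text{-overwritable}$ is in hand, invoke Theorem~\ref{theorem:U_over} to conclude $C_{\text{auth}}=0$; the deterministic-encoder hypothesis in the statement is then a (possibly strictly weaker) special case, so the theorem follows a fortiori.
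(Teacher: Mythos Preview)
Your reduction does not go through: the implication ``$I$-overwritable plus $W_{Y|X,s_0}$ degraded with respect to $U$ implies $U$-overwritable'' is \emph{false}. The paper's own MBAC$_{p,q}$ with $0<q\le p<1/2$ (and equally Example~\ref{example:stoch_enc_pos_rate}) is a direct counterexample: there $W_{Y|X,s_0}=\text{BSC}(p)$ is degraded with respect to $U=\text{BSC}(q)$, the channel is $I$-overwritable (take $s=x'\oplus x$), yet Remark~\ref{rmk:MBAC-U-over} shows the channel is \emph{not} $U$-overwritable for any $q>0$. Consequently the stochastic-encoder authentication capacity is $1-H(p)>0$ (Theorem~\ref{theorem:MBACp>q}), which is flatly incompatible with your plan to invoke Theorem~\ref{theorem:U_over}. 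Your closing remark that the deterministic hypothesis would then be ``a (possibly strictly weaker) special case'' is exactly the red flag: if your route worked it would prove the stronger stochastic-encoder statement, which the paper goes to some length to refute.

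The symptom in your algebra is that you keep needing $\sum_s P_{S|X',Z}(s\mid x',z)W(y\mid x,s)$ to be independent of $x$ \emph{for each fixed $z$}, whereas $I$-overwritability only gives $x$-independence of $\sum_s P_{S|X',X}(s\mid x',x)W(y\mid x,s)$ when the kernel is allowed to depend on $x$ itself. No amount of mixing against a reverse channel $P_{X''|Z}$ rescues this, because the ``wrong'' $x$ still sits inside $W(y\mid x,s)$.

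The paper's proof uses the deterministic-encoder hypothesis in an essential way. The adversary simulates the legitimate receiver: it passes its observation $\z$ through the degrading kernel $P_{Y|Z}$ to produce a fake $\y$, applies the \emph{decoder} $\phi$ to obtain an estimate of the message, and then---because encoding is deterministic---equates that message estimate with an estimate $\widehat\x$ of the transmitted codeword. It then applies the $I$-overwriting kernel $P_{S|X',X}(\cdot\mid \x_j,\widehat\x)$ targeted at a random false codeword $\x_j$. Degradedness guarantees the adversary's decoding is at least as reliable as the receiver's, so with probability roughly $1-e(J_{\s_0})$ it has $\widehat\x=\x_i$ and the $I$-overwrite succeeds exactly. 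The step ``$\phi(\y)=i\Rightarrow$ codeword is $\x_i$'' (equation~\eqref{equation:deterministic-necessary}) is precisely where a stochastic encoder would break the argument.
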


\begin{proof}
Suppose that $W_{Y|X,S}$ is $I$-overwritable, and let $P_{S|X',X}$ be the distribution guaranteed by $I$-overwritability. Because $W_{Y|X, S=s_{0}}$ is stochastically degraded with respect to $U_{Z|X}$, there exists $P_{Y|Z}$ such that for all $x,y$, we have $\sum_{z} U(z|x) P_{Y|Z}(y|z)=W(y|x,s_0)$. Consider a sequence of $(2^{nR}, n)$ authentication codes with $R>0$, and let $M:=2^{nR}$. With an abuse of notation, we will let $\phi(\y)$ denote either a message or a codeword; they are in one-to-one correpondence due to the assumption of a deterministic decoder. Let \[\tilde{J}(\s \mid \z)=\frac{1}{M}\sum_{j=1}^{M}\sum_{\y}P_{S|X',X}(\s \mid \x_{j}, \phi(\y))P_{Y|Z}(\y\mid \z),\] 
where $\phi$ is the non-adversarial decoding rule of the legitimate receiver.  Then,
\begin{align}
e(\tilde{J}) &\geq 
\frac{1}{M^{2}}\sum_{i,j,\s,\z,\y}U(\z \mid \x_{i}) P_{S|X',X}(\s \mid \x_{j}, \phi(\y))\nonumber\\
& \quad \cdot P_{Y|Z}(\y\mid \z)W(\phi^{-1}(\{i,0\})^{c} \mid \x_{i}, \s)\\
&=\frac{1}{M^{2}}\sum_{i,j,\s,\y} W(\y\mid \x_{i},\s_{0}) P_{S|X',X}(\s \mid \x_{j}, \phi(\y))W(\phi^{-1}(\{i,0\})^{c} \mid \x_{i}, \s)\label{eqn:stoch_deg_cond}\\
&\geq \frac{1}{M^{2}}\sum_{\substack{i,j,\s, \\ \y \text{ s.t. } \phi(\y)= i}} W(\y\mid \x_{i},\s_{0}) P_{S|X',X}(\s \mid \x_{j}, \phi(\y))W(\phi^{-1}(\{i,0\})^{c} \mid \x_{i}, \s) \\
&= \frac{1}{M^{2}}\sum_{i,j,\s} \left[1-e(i,J_{\s_{0}})\right] P_{S|X',X}(\s \mid \x_{j}, \x_{i})W(\phi^{-1}(\{i,0\})^{c} \mid \x_{i}, \s)\label{equation:deterministic-necessary} \\
&=\frac{1}{M^{2}}\sum_{i,j}\left[1-e(i,J_{\s_{0}})\right]W(\phi^{-1}(\{i,0\})^{c} \mid \x_{j}, \s_{0}) \label{equation:i-over}\\
&\geq \frac{1}{M}\sum_{i}\left[1-e(i,J_{\s_{0}})\right]\left(\frac{1}{M}\sum_{j\neq i}W(\phi^{-1}(j) \mid \x_{j}, \s_{0})\right)  \\
&\geq \frac{1}{M}\sum_{i}\left[1-e(i,J_{\s_{0}})\right]\left(\frac{1}{M}\sum_{j\neq i}\left[1-e(j,J_{\s_{0}})\right]\right)  \\
&= \left(\frac{1}{M}\sum_{i}\left[1-e(i,J_{\s_{0}})\right]\right)^2 - \frac{1}{M^{2}}\sum_{i}\left[1-e(i,J_{\s_{0}})\right]^2  \\
&= \left[1- e(J_{\s_0})\right]^2 - \frac{1}{M^{2}}\sum_{i}\left[1-e(i,J_{\s_{0}})\right]^2  \\
&\geq \left[1-e(J_{\s_{0}})\right]^{2} - \frac{1}{M}
\end{align}
where \eqref{eqn:stoch_deg_cond} follows from stochastic degradation, and \eqref{equation:i-over} follows from $I$-overwritability. Thus, $\sup_{J}e(J)> \left[1-e(J_{\s_{0}})\right]^{2} -\frac{1}{M} \geq \left[1-\sup_{J}e(J))\right]^{2} -\frac{1}{M}$, and 
\[ \sup_{J}e(J) \geq \frac{3-\sqrt{5+ \frac{4}{M}}}{2}\]
We conclude that $\sup_{J}e(J)$ is bounded away from zero as $n\to \infty$, and thus that $C_{\text{auth}}=0$.
\end{proof}


\subsection{Stochastic Encoders}
\label{subsec:stoch_zero}

When a stochastic encoder is used, the adversary has knowledge of the distribution $P_{\X|M}$, though not necessarily the specific transmitted message or sequence. In this case, the ability of a myopic adversary to determine the intended message does not imply that it has perfect knowledge of the transmission itself. As the latter may be necessary for successful malicious interference, this constitutes the distinction from the work in Section \ref{subsec:det_zero}.

For the model in which the adversary has perfect knowledge of the message and encoding procedure, but is completely oblivious the transmission sequence itself, \cite{GYS16} gives an example of a channel for which stochastic encoding will increase the channel capacity. This example is expanded below to the model in which the adversary has a noisy version of the transmission (from which it may or may not deduce the message). Via this example, we demonstrate the necessity of the deterministic encoder condition in Theorem \ref{theorem:essentially_omniscient}. In the proof of Theorem \ref{theorem:essentially_omniscient}, the deterministic encoder makes an appearance in Equation \eqref{equation:deterministic-necessary}, where knowledge of $\phi(\y)=i$ is treated as equivalent to knowledge of $\x_{i}$.

Recall that \textit{binary symmetric channel} (BSC) with crossover probability $p$, denoted BSC($p$), is a binary-input, binary-output channel such that the probability the output differs from the input is equal to $p$. A \textit{binary erasure channel} (BEC) with erasure probability $p$ outputs an erasure symbol with probability $p$, and otherwise transmits binary inputs reliably.

\begin{exmp}
\label{example:stoch_enc_pos_rate}
In this example, we exhibit a channel with the property that unless the adversary has perfect knowledge of the transmitted codeword, the receiver will be able to authenticate with high probability. Over this channel, a deterministic encoder results in $C_{\text{auth}}=0$, but we will show that a stochastic encoder will allow for a positive authentication capacity.

Consider a binary-input channel with output alphabet $\{0, 1, \varepsilon\}$, where $\varepsilon$ denotes an erasure symbol. The adversary either chooses not to act (denoted $s_{0}$), or chooses a state equal to 0 or 1. If the adversary does not act, the channel operates as a BSC($p$). If the adversary inputs a 0 or 1, there are two possibilities: if the chosen state symbol matches the transmitted symbol, the channel operates as a BSC($1-p$). Otherwise, the transmitted symbol is erased. An example is shown in Figure \ref{fig:state_seq_example}. Notice that the adversary causes authentication failure if and only if it is able to simultaneously flip enough bits to convince the receiver that a different message was sent, and not trigger any erasures whatsoever. 

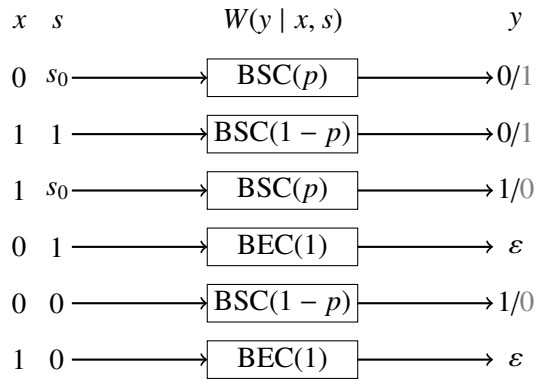
\begin{figure}
\centering
\begin{tikzpicture}[scale=1]
\node at (-4.5,2.25) {$x$};
\node at (-4,2.25) {$s$};
\node at (-1,2.25) {$W(y\mid x,s)$};
\node at (2.1,2.25) {$y$};

\node at (-4,1.5) {$s_{0}$};
\node at (-4,0.75) {$1$};
\node at (-4,0) {$s_{0}$};
\node at (-4,-0.75) {$1$};
\node at (-4,-1.5) {$0$};
\node at (-4,-2.25) {$0$};

\node at (-4.5,1.5) {$0$};
\node at (-4.5,0.75) {$1$};
\node at (-4.5,0) {$1$};
\node at (-4.5,-0.75) {$0$};
\node at (-4.5,-1.5) {$0$};
\node at (-4.5,-2.25) {$1$};

\draw[->,thick] (-3.8,1.5) -- (-2,1.5);
\draw[->,thick] (-3.8,0.75) -- (-2,0.75);
\draw[->,thick] (-3.8,0) -- (-2,0);
\draw[->,thick] (-3.8,-0.75) -- (-2,-0.75);
\draw[->,thick] (-3.8,-1.5) -- (-2,-1.5);
\draw[->,thick] (-3.8,-2.25) -- (-2,-2.25);

\draw (-2,1.25) rectangle (0,1.75);
\node at (-1,1.5) {BSC($p$)};
\draw (-2,0.5) rectangle (0,1);
\node at (-1,.75) {BSC($1-p$)};
\draw (-2,-0.25) rectangle (0,0.25);
\node at (-1,0) {BSC($p$)};
\draw (-2,-1) rectangle (0,-0.5);
\node at (-1,-0.75) {BEC($1$)};
\draw (-2,-1.75) rectangle (0,-1.25);
\node at (-1,-1.5) {BSC($1-p$)};
\draw (-2,-2.5) rectangle (0,-2);
\node at (-1,-2.25) {BEC($1$)};

\draw[->,thick] (0,1.5) -- (1.8,1.5);
\draw[->,thick] (0,0.75) -- (1.8,0.75);
\draw[->,thick] (0,0) -- (1.8,0);
\draw[->,thick] (0,-0.75) -- (1.8,-0.75);
\draw[->,thick] (0,-1.5) -- (1.8,-1.5);
\draw[->,thick] (0,-2.25) -- (1.8,-2.25);

\node at (2.1,1.5) {0/{\color{gray}1}};
\node at (2.1,0.75) {0/{\color{gray}1}};
\node at (2.1,0) {1/{\color{gray}0}};
\node at (2.1,-0.75) {$\varepsilon$};
\node at (2.1,-1.5) {1/{\color{gray}0}};
\node at (2.1,-2.25) {$\varepsilon$};

\end{tikzpicture}
\caption{\label{fig:state_seq_example} The word $\x$=011001 is sent. The adversary chooses state sequence $\s=s_{0}$1$s_{0}$100. For the positions in which the adversary has chosen not to act, the channel is a BSC($p$), where we assume without loss of generality that $p<0.5$. When the adversary's state matches the transmitted symbol, the channel statistics flip. The fourth and the sixth symbols in the output sequence $\y$ will be erasure symbols, since the transmitted symbol and state symbol do not match. Thus, in this example, the receiver can declare with absolute confidence that there has been adversarial interference.}
\end{figure}

It is straightforward to see that the channel is $I$-overwritable. Suppose, then, that the channel to the adversary is a BSC($q$), where $q\leq p$. By Theorem \ref{theorem:essentially_omniscient}, $C_{\text{auth}}=0$ if the encoder is deterministic. However, we will show that a stochastic encoder will allow for positive authentication capacity. To do so, we design a stochastic encoder as follows. Let $\gamma>0$, and encode the message $i$ using a deterministic binary code designed for the BSC concatenation of a BSC($\gamma$) and a BSC($p$); call this codeword $\mathbf{t}_{i}$. Next, simulate an artificial BSC$(\gamma)$ and send the output of this simulated channel, denoted $\mathbf{x}$, through the AVC. The non-adversarial channel to the receiver is now the concatenated channel for which the code was designed, and so messages are transmitted reliably in the absence of an adversary. Meanwhile, the channel to the adversary is less noisy than the main channel, and so we assume that given its observation $\z$, the adversary can determine the codeword $\t_{i}$. However, it cannot exactly determine the actual transmitted sequence $\x$. For an illustration of this, see Figure \ref{fig:ambiguity_example}.

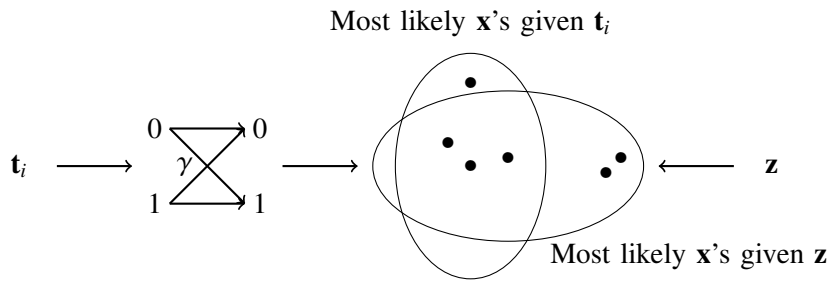
\begin{figure}
\centering
\begin{tikzpicture}[scale=1]
\node at (-6,0) {$\mathbf{t}_{i}$};
\draw[->,thick] (-5.5,0) -- (-4.5,0);

\node at (-4.2,0.5) {$0$};
\node at (-4.2,-0.5) {$1$};
\node at (-2.8,0.5) {$0$};
\node at (-2.8,-0.5) {$1$};
\node at (-3.8,0) {$\gamma$};
\draw[->,thick] (-4,0.5) -- (-3,0.5);
\draw[->,thick] (-4,-0.5) -- (-3,0.5);
\draw[->,thick] (-4,0.5) -- (-3,-0.5);
\draw[->,thick] (-4,-0.5) -- (-3,-0.5);

\draw[->,thick] (-2.5,0) -- (-1.5,0);

\node at (0,1.9) {Most likely $\x$'s given $\mathbf{t}_{i}$};
\draw (0,0) ellipse (1cm and 1.5cm);

\node at (0,0) {$\bullet$};
\node at (0.5,0.1) {$\bullet$};
\node at (-0.3,0.3) {$\bullet$};
\node at (0,1.1) {$\bullet$};
\node at (2,0.1) {$\bullet$};
\node at (1.8,-0.1) {$\bullet$};

\node at (2.9,-1.2) {Most likely $\x$'s given $\z$};
\draw (0.5,0) ellipse (1.8cm and 1cm);

\node at (4,0) {$\z$};
\draw[->,thick] (3.5,0) -- (2.5,0);
\end{tikzpicture}
\caption{\label{fig:ambiguity_example} The message $i$ is encoded as $\t_{i}$, and is passed through a BSC($\gamma$), whose output is $\x$. The adversary observes $\z$ after $\x$ is passed through a BSC($q$). Given $\z$, the adversary can decode reliably to $\t_{i}$, but there remains ambiguity about the realization of the sequence $\x$.}
\end{figure}

In order to avoid being detected, the adversary can either send $s_{0}$ or $s=x$, and in order to cause a decoding error, it must manipulate enough symbols that the receiver decodes to an incorrect codeword $\t_{j\neq i}$. Because the adversary cannot be certain of the exact realization of the transmitted sequence, we conclude that if it acts, it is detected with high probability. In other words, this channel has positive capacity. By making $\gamma$ sufficiently small, the capacity of the concatenated channel approaches the capacity of the underlying BSC($p$), allowing our coding scheme to have rates approaching the latter.
\end{exmp}

Example \ref{example:stoch_enc_pos_rate} is high-stakes: if the adversary makes any mistake whatsoever, it is discovered. Section \ref{section:MBAC} examines a more forgiving scenario, where the adversary may flip the BSC channel statistics in whichever time slots it chooses, without needing to know the transmitted sequence. We show that a stochastic encoder also allows for positive (and, in fact, no-adversary) authentication capacity in this case.


\section{Relationships between myopic adversaries}
\label{section:compare_adversaries}

A myopic adversary spans the gap between oblivious adversaries (as in \cite{KK18}) and omniscient adversaries, who have perfect access to the transmitted sequence. Thus, when the legitimate transmitter and receiver have use of a particular communication channel $W_{Y|X,S}$, there may be some point at which the adversary becomes capable enough that the channel is $U$-overwritable, bringing the authentication capacity to zero. Since not every pair of potential channels to the adversary are directly comparable, we address this transition with the use of a partial ordering on channels $U_{Z|X}$ given by stochastic degradation: formally, $U^{2}_{Z|X} \leq U^{1}_{Z|X}$ if and only if $U^{2}_{Z|X}$ is stochastically degraded with respect to $U^{1}_{Z|X}$. Examples \ref{example:max_min} and \ref{example:totally_ordered_BSC} illustrate some features of this partial order.

\begin{exmp}
\label{example:max_min}
The channel to the adversary in the oblivious case (i.e. the case in which $Z$ is independent of $X$) is stochastically degraded with respect to any other channel $U_{Z|X}$. On the other hand, every channel $U_{Z|X}$ is stochastically degraded with respect to the omniscient channel ($U_{Z|X}=\mathbbm{1}_{Z|X}$). Thus, these extremes give the unique maximum (omniscient adversary) and minimum (oblivious adversary) elements of our partial order. 
\end{exmp}

\begin{exmp}
\label{example:totally_ordered_BSC}
As an example of a totally ordered chain within the partial order, consider the set of binary symmetric channels. The BSC($p$) is stochastically degraded with respect to the BSC($q$) if and only if $0.5\geq p\geq q\geq 0$. The maximum element of the chain is the BSC($0$) (omniscient adversary), and the minimum is the BSC($0.5$) (oblivious adversary). See Figure \ref{figure:BSC_example}.
\end{exmp}

Using the partial order of stochastic degradation, we may compare channels and then use these comparisons to draw conclusions about $U$-overwritability.

\begin{thm}
\label{theorem:stoch_deg_overwritability}
If $U^{2}_{Z'|X}$ is stochastically degraded with respect to $U^{1}_{Z|X}$, and $W_{Y|X,S}$ is $U^{2}$-overwritable, then $W_{Y|X,S}$ is also $U^{1}$-overwritable.
\end{thm}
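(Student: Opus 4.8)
The plan is to build the required overwriting distribution for $U^{1}$ by composing the one guaranteed for $U^{2}$ with the degrading channel. First I would unpack the two hypotheses into usable equations: stochastic degradation of $U^{2}_{Z'|X}$ with respect to $U^{1}_{Z|X}$ gives a conditional distribution $P_{Z'|Z}$ with $U^{2}(z'\mid x)=\sum_{z}P_{Z'|Z}(z'\mid z)\,U^{1}(z\mid x)$ for all $x,z'$; and $U^{2}$-overwritability of $W_{Y|X,S}$ gives a distribution $P_{S\mid X',Z'}$ with $\sum_{s,z'}U^{2}(z'\mid x)P_{S\mid X',Z'}(s\mid x',z')W(y\mid x,s)=W(y\mid x',s_{0})$ for all $x,x',y$.

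Next I would define the candidate distribution for $U^{1}$-overwritability by
\[
P_{S\mid X',Z}(s\mid x',z):=\sum_{z'}P_{Z'|Z}(z'\mid z)\,P_{S\mid X',Z'}(s\mid x',z'),
\]
i.e. first pass the observation $Z$ through the degrading channel to produce $Z'$, then apply the $U^{2}$-overwriting rule. I would check this is a legitimate conditional distribution: it is nonnegative, and summing over $s$ gives $\sum_{z'}P_{Z'|Z}(z'\mid z)\cdot 1=1$ for each $z$, since $P_{S\mid X',Z'}(\cdot\mid x',z')$ is a probability distribution for each $(x',z')$.

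Then the verification is a direct substitution and interchange of finite sums: starting from $\sum_{s,z}U^{1}(z\mid x)P_{S\mid X',Z}(s\mid x',z)W(y\mid x,s)$, substitute the definition of $P_{S\mid X',Z}$, reorder the (finite) summations to group $\sum_{z}U^{1}(z\mid x)P_{Z'|Z}(z'\mid z)$ together, recognize this inner sum as $U^{2}(z'\mid x)$ by the degradation identity, and then apply $U^{2}$-overwritability to conclude the whole expression equals $W(y\mid x',s_{0})$ for all $x,x',y$. This exhibits the distribution required by Definition \ref{U-overwritable} for $U^{1}$, proving $W_{Y|X,S}$ is $U^{1}$-overwritable.

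I do not anticipate a substantive obstacle here: the only ``design'' choice is the composition above, and the rest is bookkeeping. The one point to handle carefully is confirming that the composed object is a bona fide conditional distribution (so that it is an admissible choice in Definition \ref{U-overwritable}), and making sure the sum interchanges are over finite alphabets so no convergence issues arise, which they are since all alphabets in the model are finite.
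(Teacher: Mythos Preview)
Your proposal is correct and follows essentially the same approach as the paper: define $P_{S\mid X',Z}(s\mid x',z)=\sum_{z'}P_{Z'|Z}(z'\mid z)P_{S\mid X',Z'}(s\mid x',z')$ and verify the $U^{1}$-overwritability identity by substituting, swapping finite sums, and applying the degradation and $U^{2}$-overwritability identities in turn. Your additional remarks (checking the composed object is a valid conditional distribution and that sum interchange is unproblematic over finite alphabets) are sound and simply make explicit what the paper leaves implicit.
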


\begin{proof}
Let $P_{S|X',Z'}$ be the distribution guaranteed by $U^{2}$-overwritability, and let $P_{S|X',Z}(s|x',z)=\sum_{z'}P_{S|X',Z'}(s|x',z')P_{Z'|Z}(z'|z)$, where $P_{Z'|Z}$ is the distribution guaranteed by stochastic degradation. Then, 
\begin{align*}
\sum_{s,z}U^{1}(z|x)P_{S|X',Z}(s|x',z)W(y|x,s) &= \sum_{s,z,z'}P_{Z'|Z}(z'|z)U^{1}(z|x)P_{S|X',Z'}(s|x',z')W(y|x,s) \\
&= \sum_{s,z'}U^{2}(z'|x)P_{S|X',Z'}(s|x',z')W(y|x,s) \\
&= W(y | x', s_{0}).
\end{align*}
We conclude that $W_{Y|X,S}$ is $U^{1}$-overwritable.
\end{proof}

Theorem \ref{theorem:stoch_deg_overwritability} implies that overwritability is a stronger condition than $U$-overwritability, formalized in the following corollary.

\begin{co}
If a channel $W_{Y|X,S}$ is overwritable, then it is $U$-overwritable for every channel $U_{Z|X}$.
\end{co}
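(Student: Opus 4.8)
The plan is to derive this corollary directly from Theorem \ref{theorem:stoch_deg_overwritability} by recalling that the oblivious channel is the minimum element of the stochastic-degradation partial order, as observed in Example \ref{example:max_min}. First I would note that ordinary overwritability is precisely $U$-overwritability in the case where $U_{Z|X} = U^{\text{obl}}_{Z|X}$ is the oblivious channel (i.e., $Z$ independent of $X$): indeed, if $Z$ is independent of $X$ with distribution $P_Z$, then the defining equation of $U$-overwritability becomes $\sum_{s,z} P_Z(z) P_{S|X',Z}(s|x',z) W(y|x,s) = W(y|x',s_0)$, and setting $\tilde P_{S|X'}(s|x') := \sum_z P_Z(z) P_{S|X',Z}(s|x',z)$ recovers exactly the overwritability condition; conversely any overwriting distribution $P_{S|X'}$ trivially extends to one that ignores $z$.

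Next, I would invoke Example \ref{example:max_min}: every channel $U_{Z|X}$ is such that the oblivious channel $U^{\text{obl}}_{Z|X}$ is stochastically degraded with respect to $U_{Z|X}$. Now apply Theorem \ref{theorem:stoch_deg_overwritability} with $U^2 = U^{\text{obl}}$ and $U^1 = U$: since $W_{Y|X,S}$ is $U^{\text{obl}}$-overwritable (which is just overwritability, by the previous paragraph) and $U^{\text{obl}}$ is stochastically degraded with respect to $U$, we conclude $W_{Y|X,S}$ is $U$-overwritable. Since $U$ was arbitrary, this is exactly the claim.

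There is essentially no obstacle here — the entire content is bookkeeping that identifies overwritability with $U^{\text{obl}}$-overwritability and then cites the already-proven theorem. The only point requiring a sentence of care is the first one (that plain overwritability coincides with overwritability against the oblivious channel), since one must check both directions: an overwriting $P_{S|X'}$ gives a $U^{\text{obl}}$-overwriting distribution by ignoring $z$, and a $U^{\text{obl}}$-overwriting $P_{S|X',Z}$ collapses to an overwriting distribution by averaging out $z$ against the (input-independent) marginal $P_Z$. A one-line proof suffices.
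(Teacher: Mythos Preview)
Your proposal is correct and follows essentially the same approach as the paper: observe that the oblivious channel is stochastically degraded with respect to every $U_{Z|X}$ and then apply Theorem~\ref{theorem:stoch_deg_overwritability}. The paper's proof is a two-line version of yours; you simply make explicit the identification of ordinary overwritability with $U^{\text{obl}}$-overwritability, which the paper leaves implicit.
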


\begin{proof}
The channel to an oblivious adversary is stochastically degraded with respect to every channel $U_{Z|X}$. 
The result follows by Theorem \ref{theorem:stoch_deg_overwritability}.
\end{proof}

Clearly, if the adversary can successfully generate a false message given a noisy version of a transmission, it is also successful in the noiseless case. This is formalized in the following corollary.

\begin{co}
\label{cor:implies_I-over}
If a channel $W_{Y|X,S}$ is $U$-overwritable for some $U_{Z|X}$, then it is also $I$-overwritable.
\end{co}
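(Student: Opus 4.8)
The plan is to identify $I$-overwritability with the instance of $U$-overwritability in which $U_{Z|X}$ is the omniscient channel $\mathbbm{1}_{Z|X}$, and then to invoke Theorem \ref{theorem:stoch_deg_overwritability} together with the observation from Example \ref{example:max_min} that the omniscient channel is the maximum element of the stochastic-degradation partial order.

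First I would settle the bookkeeping: substituting $U(z\mid x)=\mathbbm{1}(z\mid x)$ into Definition \ref{U-overwritable} collapses the sum over $z$ to its single term $z=x$, so the defining identity becomes $\sum_{s}P_{S\mid X',Z}(s\mid x',x)W(y\mid x,s)=W(y\mid x',s_{0})$ for all $x,x',y$. Writing $P_{S\mid X',X}(s\mid x',x):=P_{S\mid X',Z=x}(s\mid x',x)$ recovers exactly the condition in Definition \ref{definition:I-over}, and the correspondence runs in both directions. Hence a channel is $I$-overwritable precisely when it is $\mathbbm{1}_{Z|X}$-overwritable.

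Next I would record that every channel $U_{Z|X}$ is stochastically degraded with respect to $\mathbbm{1}_{Z|X}$, as noted in Example \ref{example:max_min}: the degrading channel can be taken to be $U_{Z|X}$ itself, since $\sum_{z}U(z'\mid z)\,\mathbbm{1}(z\mid x)=U(z'\mid x)$. Applying Theorem \ref{theorem:stoch_deg_overwritability} with $U^{1}_{Z|X}=\mathbbm{1}_{Z|X}$ and $U^{2}_{Z'|X}=U_{Z'|X}$ then yields that if $W_{Y|X,S}$ is $U$-overwritable (i.e. $U^{2}$-overwritable), it is also $\mathbbm{1}_{Z|X}$-overwritable (i.e. $U^{1}$-overwritable), which by the first step means $I$-overwritable.

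There is essentially no real obstacle; the only points requiring care are getting the direction of the degradation right — we need $U$ to lie \emph{below} $\mathbbm{1}_{Z|X}$ in the partial order, which holds because $\mathbbm{1}_{Z|X}$ is the top element — and the routine identification of Definition \ref{definition:I-over} with the $U=\mathbbm{1}_{Z|X}$ case of Definition \ref{U-overwritable}. The corollary then follows immediately from Theorem \ref{theorem:stoch_deg_overwritability}.
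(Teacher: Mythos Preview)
Your proposal is correct and follows essentially the same approach as the paper: observe that every $U_{Z|X}$ is stochastically degraded with respect to $\mathbbm{1}_{Z|X}$, then apply Theorem~\ref{theorem:stoch_deg_overwritability}. The only difference is that you spell out explicitly the identification of $I$-overwritability with $\mathbbm{1}_{Z|X}$-overwritability and the verification of the degradation direction, which the paper leaves implicit.
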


\begin{proof}
Every channel $U_{Z|X}$ is stochastically degraded with respect to $\mathbbm{1}_{Z \mid X}$. The result follows by Theorem \ref{theorem:stoch_deg_overwritability}.
\end{proof}

\begin{exmp}
Extending Example \ref{example:totally_ordered_BSC}, we see that if $W_{Y|X,S}$ is $U$-overwritable for $U_{Z|X}=$BSC($p$), then it is $U$-overwritable for $U_{Z|X}=$BSC($q$) for all $q\leq p$, and the authentication capacity is equal to zero whenever the channel to the adversary is a BSC($q$) for $q\leq p$. See Figure \ref{figure:BSC_example}.
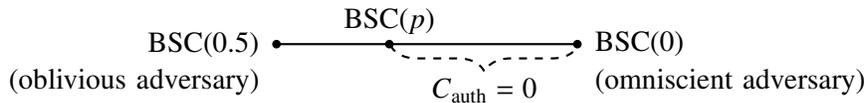
\begin{figure}
    \centering
    \begin{tikzpicture}
        \draw[fill=black] (0,0) circle (.05);
        \node at (-.1,0) [left] {BSC($0.5$)};
        \node at (-.1,-0.5) [left] {(oblivious adversary)};
        \draw[fill=black] (4,0) circle (.05);
        \node at (4.1,0) [right] {BSC($0$)};
        \node at (4.1,-0.5) [right] {(omniscient adversary)};
        \draw[fill=black] (1.5,0) circle (.05);
        \node at (1.5,0) [above] {BSC($p$)};
        \draw[thick] (0,0) -- (4,0);
        \draw [thick,dashed,decorate,decoration={brace,amplitude=10pt,mirror},xshift=0.4pt,yshift=-0.4pt](1.5,0) -- (4,0) node[black,midway,yshift=-0.6cm] {$C_{\text{auth}}=0$};
    \end{tikzpicture}
    \caption{The binary symmetric channels are totally ordered by stochastic degradation. If $W_{Y|X,S}$ is $U$-overwritable for $U_{Z|X}=$BSC($p$), then the authentication capacity $C_{\text{auth}}$ is equal to zero for that $U_{Z|X}$ and all binary symmetric channels with smaller crossover probability.}
    \label{figure:BSC_example}
\end{figure}
\end{exmp}

\begin{exmp}
By the contrapositive of Corollary \ref{cor:implies_I-over}, if a channel is not $I$-overwritable, then it is not $U$-overwritable for any $U_{Z|X}$. As an example,
consider the following AVC: $\mathcal{X}=\F_{2}$, $\mathcal{Y}=\F_{2}\cup \{ \varepsilon\}$, $\mathcal{S}=\{0,1,s_{0}\}$, and $W(y\mid x, s)$ is a binary erasure channel (BEC) operating on $x$ with erasure probability $s$ if $s=0,1$. If $s=s_{0}$, we define the erasure probability to be equal to $p$.

First, we observe that if the channel is $I$-overwritable, then $p=1$: indeed, if there exists $P_{S|X',X}$ such that 
\[\sum_{s}P_{S|X',X}(s\mid x',x)W(y\mid x,s)=W(y\mid x',s_{0})\] 
for all $x,x',y$, then, letting $x'=y=0$ and $x=1$, we have $0=1-p$; we conclude $p=1$. Thus, for all $p<1$, the channel is not $I$-overwritable. It cannot then be $U$-overwritable for any channel $U_{Z|X}$ to the adversary.

In fact, we can show that the authentication capacity is equal to $C=1-p$ for any value of $p$ and any channel to the adversary. The converse follows from the capacity of the underlying BEC($p$). To see achievability, design a code of rate $1-p-\delta$ for a BEC($p$) with vanishing error probability, and decode as follows:
\begin{enumerate}[(1)]
\item If there is a single codeword consistent with the observed sequence, decode to that codeword.
\item If there is more than one codeword consistent with the observed sequence, declare adversarial interference.
\end{enumerate}

The only error that may occur is when the adversary is not present, and we declare an error in step (2). However, this would constitute a regular decoding error in our code for the BEC($p$). Thus, as $n\to \infty$, the probability of this type of error goes to zero. In all, then, we have shown that rates arbitrarily close to $1-p$ are achievable.
\end{exmp}

\begin{remark}
In general, it is straightforward to see that the existence of a distribution $P_{S|X',Z}$ (resp. $P_{S|X'X}$) satisfying the equality in Definition \ref{U-overwritable} (resp. Definition \ref{definition:I-over}), and thus the $U$-overwritability (resp. $I$-overwritability) of a channel, may be determined via linear programming. Once $U$-overwritability has been determined for a particular choice of $U_{Z|X}$, Theorem \ref{theorem:stoch_deg_overwritability} allows us to determine $U$-overwritability for less degraded channels.
\end{remark}


\section{A Myopic Binary Adversarial Channel}
\label{section:MBAC}

In this section, we examine in detail a binary model that we believe will provide insight into the more general case. In this model, the adversary views the transmitted codeword through a BSC$(q)$, and decides on a binary state sequence $\s$, which is added to the transmission $\x$. The sequence $\x\oplus\s$ is then transmitted across a BSC$(p)$. We call this the myopic binary adversarial channel with parameters $p$ and $q$, and denote it by MBAC$_{p,q}$.

\begin{remark}
\label{rmk:MBAC-U-over}
Consider an MBAC$_{p,q}$ such that $0\leq p,q \leq 0.5$. We first claim that $W_{Y\mid X,S}$ is not $U$-overwritable as long as $q>0$ and $p<0.5$. 

Indeed, suppose $W_{Y\mid X,S}$ is overwritable. Then there exists $P_{S|X',Z}$ such that for any choice of $x, x'$, and $y$,
\[ \sum_{s, z}U(z\mid x)P_{S|X',Z}(s\mid x', z)W(y\mid x,s)=W(y\mid x',s_{0}).\]
If $x=x'=y=0$, then the above reduces to:
\[ \sum_{s, z}U(z\mid 0)P_{S|X',Z}(s\mid 0, z)W(0\mid 0,s)=W(0\mid 0,0).\]
We then have
\begin{align*}
1-p &= \sum_{s, z}U(z\mid 0)P_{S|X',Z}(s\mid 0, z)W(0\mid 0,s) \\
&= (1-2p)(1-q)P_{S|X',Z}(0\mid 0,0)+q(1-2p)P_{S|X',Z}(0\mid 0, 1)+p. \\
\end{align*}
So,
\[ 1=(1-q)P_{S|X',Z}(0\mid 0,0)+qP_{S|X',Z}(0\mid 0, 1).\]
If $q>0$, this can only occur if $P_{S|X',Z}(0\mid 0,0)=P_{S|X',Z}(0\mid 0,1)=1$, and thus, $P_{S|X',Z}(1\mid 0,0)=P_{S|X',Z}(1\mid 0,1)=0$. With this in mind, let $x=1$ and $x'=y=0$. In this case, we may show that
\begin{align*}
1-p
&=p=0.5.
\end{align*}
Thus, if $q>0$ and $p<0.5$, the channel is not $U$-overwritable, where $U_{Z|X}$ is a BSC($q$). \\

Next, consider the boundary cases: if $q>0$ and $p=0.5$, the channel has non-adversarial capacity $1-H(0.5)=0$. The final case is that in which $q=0$. In this case, the channel is $U$-overwritable, or equivalently, $I$-overwritable: indeed, choose $P_{S |X',Z=X}$ to be deterministic such that $s=x'+x$.  
\end{remark}

\subsection{When the adversary is more myopic than the receiver}

We begin with the case in which the channel to the adversary is stochastically degraded with respect to the non-adversarial channel $W_{Y|X,S=s_{0}}$ between legitimate users. That is, $0\leq p<q\leq 1/2$. By our arguments in Remark \ref{rmk:MBAC-U-over}, the channel is never BSC($q$)-overwritable in this case, and so we cannot make use of Theorem \ref{theorem:U_over}. In fact, we show that the authentication capacity is not only nonzero when $q>p$, but that we can achieve the non-adversarial capacity with a deterministic encoder.

\begin{thm}
\label{theorem:simple_case}
If $0\leq p<q\leq 1/2$, the authentication capacity $C_{\text{auth}}$ is equal to the non-adversarial capacity $C=1-H(p)$. Moreover, this rate can be achieved with a deterministic encoder.
\end{thm}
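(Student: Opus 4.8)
The plan is to prove the two inequalities separately. The converse is immediate: when the adversary is silent the state is $\s_{0}=\mathbf{0}$ and the decoder must return the transmitted message with probability tending to one, so any authentication code is in particular a channel code for the $\mathrm{BSC}(p)$; hence every achievable rate obeys $R\le 1-H(p)$, i.e.\ $C_{\mathrm{auth}}\le C=1-H(p)$. For the achievability it suffices to show that every rate $R$ with $1-H(q)<R<1-H(p)$ is achievable with a deterministic encoder: since $p<q\le 1/2$ implies $H(p)<H(q)$, this interval is nonempty with supremum $1-H(p)$, so together with the converse it gives $C_{\mathrm{auth}}=1-H(p)$, and the encoder we use is deterministic, settling the ``moreover'' clause. (As context: the channel is not overwritable --- cf.\ Remark~\ref{rmk:MBAC-U-over} with $q=1/2$ --- so by Theorem~\ref{KK18_theorem} an \emph{oblivious} adversary cannot push the authentication capacity below $C=1-H(p)$; the theorem asserts that a $\mathrm{BSC}(q)$-noisy view with $q>p$ is likewise useless to the adversary.)

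For achievability I would take a standard random code with a bounded-distance decoder. Fix $R\in(1-H(q),1-H(p))$, set $\rho:=p(1-q)+q(1-p)$ --- the crossover probability of the cascade of a $\mathrm{BSC}(p)$ and a $\mathrm{BSC}(q)$, so $\rho\ge q>p$ --- and pick $\delta>0$ small enough that $R<1-H(p+\delta)$ and $p+\delta<\rho$. Draw $M=2^{nR}$ codewords i.i.d.\ uniform on $\F_{2}^{n}$, and let $\phi(\y)=j$ if $\x_{j}$ is the unique codeword with $\|\y\oplus\x_{j}\|\le n(p+\delta)$, and $\phi(\y)=0$ otherwise. With the adversary silent, $\y=\x_{i}\oplus\mathbf{e}$ for an i.i.d.\ $p$-Bernoulli vector $\mathbf{e}$; a Chernoff bound ($\|\mathbf{e}\|\le n(p+\delta)$ w.h.p.) together with the sphere-packing estimate $M\,2^{-n(1-H(p+\delta))}\to 0$ shows the no-adversary error vanishes for almost every codebook, and the same estimate dominates the ``$\s=\s_{0}$'' term of $e(i,J)$.

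The real work is the adversarial term. First reduce to deterministic adversaries: for each observation $\z$ the error is affine in $J(\cdot\mid\z)$, so the worst case is achieved by a deterministic $g\colon\mathcal{Z}^{n}\to\mathcal{S}^{n}$ and $\sup_{J}e(J)=e(J_{\s_{0}})+\mathbb{E}_{\z}[\max_{\s\neq\s_{0}}\beta(\z,\s)]$, where $\beta(\z,\s)=\sum_{i}\Pr(I=i\mid\z)\,\Pr_{\mathbf{e}}[\phi(\x_{i}\oplus\s\oplus\mathbf{e})\notin\{i,0\}]$ and the posterior $\Pr(I=i\mid\z)$ is proportional to the $\mathrm{BSC}(q)$ likelihood $q^{\|\x_{i}\oplus\z\|}(1-q)^{n-\|\x_{i}\oplus\z\|}$. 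Two facts drive the bound. (i) For the decoder to be fooled into declaring $j\neq i$, $\x_{j}$ must be the unique codeword within $n(p+\delta)$ of $\y=\x_{i}\oplus\s\oplus\mathbf{e}$; since $\mathbf{e}$ is $p$-Bernoulli, this has non-negligible probability over $\mathbf{e}$ only when $\|\x_{i}\oplus\x_{j}\oplus\s\|=O(\delta n)$, i.e.\ when the committed state $\s$ is essentially $\x_{i}\oplus\x_{j}$ for the \emph{true} $i$; in particular the ``natural'' attack $\s=\z\oplus\x_{j}$ leaves a residual offset of weight $\|\x_{i}\oplus\z\|\approx nq$ and, after the channel noise, a distance $\ge n\rho-o(n)>n(p+\delta)$, so the adversary really would have to have guessed the transmitted codeword. (ii) Because $R>1-H(q)$, a method-of-types count shows that for almost every codebook the typical shell $\{i:\|\x_{i}\oplus\z\|\approx nq\}$ contains $\approx 2^{n(R-1+H(q))}$ codewords and carries essentially all the posterior mass, roughly uniformly --- equivalently, the adversary cannot list-decode $\z$ to a subexponential list. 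Combining: for a committed state $\s$ the set $V(\s)=\{i:\exists j\neq i,\ \|\x_{i}\oplus\x_{j}\oplus\s\|=O(\delta n)\}$ of messages the adversary can fool is, by a random-coding estimate, small, and intersecting it with the posterior support yields $\beta(\z,\s)\le\Pr(I\in V(\s)\mid\z)+o(1)\le 2^{n(R-1+H(O(\delta)))}+o(1)$, which vanishes for $\delta$ small. A first-moment/expurgation argument then produces one deterministic-encoder codebook with both the no-adversary error and the worst-case adversarial error tending to $0$.

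The main obstacle I expect is making the bound on $\Pr(I\in V(\s)\mid\z)$ rigorous and uniform in $\s$: the adversary knows the \emph{entire} codebook, so $\s=g(\z)$ may depend on every codeword, which rules out a naive union bound over the doubly-exponentially many strategies $g$ --- and also over the $2^{nR}$ candidate wrong codewords. The way through is the myopic-AVC confusability analysis adapted to this additive binary model: one exploits the decoder's option to output $0$, so that unless the adversary's single committed state happens to be aligned with the transmitted codeword --- an event made exponentially unlikely by the spread of the posterior, i.e.\ by $R>1-H(q)$ --- the received word lands far from every codeword and the decoder safely declares interference. The rate window $1-H(q)<R<1-H(p)$ is precisely what this consumes: the upper endpoint keeps the bounded-distance balls sphere-packed (so genuine decoding, and the no-adversary analysis, go through), and the lower endpoint keeps the adversary myopic enough that it cannot locate the transmitted codeword.
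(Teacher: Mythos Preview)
Your outline is essentially the paper's strategy: a random code from the Bernoulli$(1/2)$ type class, a bounded-distance decoder that outputs $0$ whenever no codeword is uniquely close, and an argument that for any committed state $\s$ only a vanishing fraction of the posterior-consistent messages can be fooled. The converse and the no-adversary analysis are fine. The gap is exactly in the step you flag, and your displayed inequality $\Pr(I\in V(\s)\mid\z)\le 2^{n(R-1+H(O(\delta)))}$ is not correct as stated. That exponent is the \emph{unconditional} density $|V(\s)|/M$, which one can indeed get from a Csisz\'ar--Narayan bound on $|V(\s)|$ alone. But the adversary chooses $\s$ knowing both $\z$ and the codebook, and nothing you have said prevents it from aligning $V(\s)$ with the posterior support $S(\z,d)=\{i:\|\x_i\oplus\z\|=d\}$. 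The quantity you must control is $|V(\s)\cap S(\z,d)|/|S(\z,d)|$, and using only $|V(\s)|\lesssim 2^{n(2R-1+H(O(\delta)))}$ together with $|S(\z,d)|\approx 2^{n(R-1+H(q))}$ gives merely $2^{n(R-H(q)+H(O(\delta)))}$, which does not vanish whenever $H(p)+H(q)<1$ (e.g.\ $p=0.01$, $q=0.05$).

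The paper closes this in two moves you are missing. First, it \emph{strengthens} the adversary, handing it an oracle for the exact distance $d=\|\x_i\oplus\z\|$ and also the BSC$(p)$ error pattern; this makes the posterior exactly uniform on $S(\z,d)$ and lets one absorb the channel noise into $\s$, removing both your ``roughly uniform'' heuristic and the averaging over $\mathbf{e}$. Second, and decisively, it applies the Csisz\'ar--Narayan lemma (Lemma~\ref{lemma:csiszar_narayan}) with the pair $(S,Z)$ in the role of $S$: one bounds $|\{i:\exists j\ne i,\ (\x_i,\x_j,\s,\z)\in\tau_{XX'SZ}\}|$ uniformly over \emph{all} $(\s,\z)$ and all joint types obeying $\mathbb{E}[X\oplus Z]=d/n$ and $\mathbb{E}[X\oplus S\oplus X']\le p+\tilde\delta$. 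A short type calculation (Lemma~\ref{lemma:numerator_bound}) then yields $|S(\z,d)\cap E(\s)|\le 2^{n(H(q)-H(p)-c\delta)}$, which is just barely smaller than the lower bound $|S(\z,d)|\gtrsim 2^{n(H(q)-H(p)-c'\delta)}$ from Lemma~\ref{lemma:sufficiently_myopic}, so the ratio vanishes---but only like $2^{-\Theta(n\delta_n)}$ as $R\uparrow 1-H(p)$, not at the rate $\approx H(p)$ your estimate claims. The moral is that bounding $|V(\s)|$ without tracking its joint dependence on $\z$ is insufficient; the joint-type Csisz\'ar--Narayan bound is the missing tool.
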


\begin{proof}[Theorem \ref{theorem:simple_case} converse proof]
Since any authentication code must also be an error-correcting code for the underlying non-adversarial channel, we have $C_{\text{auth}}\leq C_{\text{BSC($p$)}} =1-H(p)$.
\end{proof}

Per the MBAC$_{p,q}$ model, the adversary can see a noisy version of the transmitted codeword. In our proof of achievability, we will strengthen the adversary in order to simplify some arguments. Proving achievability for a stronger adversary simultaneously proves the result for any weaker adversary. Specifically, we introduce an oracle who will reveal to the adversary the exact distance $d$ of the transmitted codeword, $\x_{i}$, from the received word $\z$. Generally speaking, even given this information, there remain enough potentially transmitted words to make the adversary's task difficult. We will also allow the adversary to be aware of the exact error pattern, $\mathbf{e}$, of the BSC$(p)$ between the transmitter and receiver, so that it can design the state knowing the exact difference between the transmission and what the receiver will see. That is, the adversary can add the state $\mathbf{s}'=\mathbf{s}\oplus\mathbf{e}$, so that the receiver will see $\x_{i}\oplus\s'\oplus\mathbf{e}=\x_{i}\oplus\s$. To simplify our analysis, we will simply assume the adversary adds $\s$ and there is no additional channel noise from the BSC$(p)$. However, the decoder will still assume there has been some channel noise and decode appropriately (i.e. the receiver does not have any increased knowledge).

We first present two lemmas that will allow us to choose a good codebook. The first is a variation of Lemma 3 from \cite{CN88}. In each, let $M:=2^{nR}$.

\begin{lem}[\cite{CN88}]
\label{lemma:csiszar_narayan}
Let $\epsilon_{1}>0$ and let $\x_1,\ldots,\x_M$ be drawn uniformly at random from the type class of type $P_X$. With high probability, this codebook satisfies the following. For any type class $\tau_{XX'S}$ and any sequence $\s$, 
\begin{equation}
\label{equation:csiszar_narayan}
|\{i:\exists j\neq i \text{ s.t. } (\x_i,\x_j,\s)\in \tau_{XX'S}\}|
\leq 2^{n\left|R-I(X;X'S)+\left|R-I(X';S)\right|^+\right|^+ + n\epsilon_{1}}.
\end{equation}
\end{lem}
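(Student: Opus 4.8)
The statement to prove is Lemma~\ref{lemma:csiszar_narayan}, which bounds the number of codeword indices $i$ for which there exists another codeword $\x_j$ making $(\x_i,\x_j,\s)$ jointly typical of a prescribed type $\tau_{XX'S}$, uniformly over all state sequences $\s$. My plan is to follow the random-coding argument of Csisz\'ar--Narayan's Lemma~3 essentially verbatim, adapting notation from a generic AVC to the constant-composition codebook drawn from $\tau_{P_X}$.

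The approach is a first-moment / union-bound argument combined with a large-deviations estimate. First I would fix a type $\tau_{XX'S}$ (there are only polynomially many in $n$, so a union over types at the end costs nothing in the exponent) and a state sequence $\s$. For each ordered pair $i\neq j$, the probability over the random codebook that $(\x_i,\x_j,\s)$ lies in $\tau_{XX'S}$ is, by standard type-counting, roughly $2^{-nI(X;X'S)}$ when we condition on $\x_i$ having the right joint type with $\s$ — more precisely, $\Pr[(\x_i,\x_j,\s)\in\tau_{XX'S}] \le 2^{-n(I(X;X'S)-\epsilon')}$ for $n$ large. I would then consider, for a fixed $i$, the indicator that \emph{some} $j\neq i$ works: its probability is at most $\min\{1, (M-1)\cdot\max_{\x_i}\Pr[(\x_i,\x_j,\s)\in\tau_{XX'S}\mid \x_i]\}$, which after the conditioning argument gives the $|R - I(X';S)|^+$ term in the exponent (the $|\cdot|^+$ because the count of good $\x_j$'s saturates at $1$). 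Summing the resulting Bernoulli-type indicators over $i$, the expected size of the bad set is at most $2^{n|R - I(X;X'S) + |R-I(X';S)|^+|^+}$, and then a Markov/Chernoff bound over the (nearly) independent contributions shows the actual size concentrates around its mean, so that with probability tending to one the bound holds with the extra slack $n\epsilon_1$.

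The two subtleties I would need to handle carefully are: (i) making the union over all state sequences $\s$ work — there are exponentially many $\s$, so a naive union bound fails. The standard fix, which I would import from \cite{CN88}, is that the bound only needs to hold for each \emph{joint type} with the codewords, and the deviation probability for a fixed $\s$ is doubly-exponentially small (of order $\exp(-2^{n\delta})$), which beats the merely exponential number of candidate $\s$; this is exactly where the $|\cdot|^+$ truncation and the concentration (not just expectation) bound are essential. (ii) The conditioning step that produces the nested $|R-I(X';S)|^+$ inside the outer $|\cdot|^+$: one must be careful that the number of $j$'s achieving a given conditional type is itself a sum of indicators that concentrates, and that when $R < I(X';S)$ the "typical" number of such $j$ is less than one, so the relevant event is the rare event that at least one exists.

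I expect the main obstacle to be reproducing the double-exponential concentration estimate cleanly enough to survive the union over $\s$ and over types, while keeping the bookkeeping of the two nested positive-part operations correct; the underlying probability estimates are routine type-counting, so the real work is organizing the deviation argument. Since this is stated as "a variation of Lemma~3 from \cite{CN88}," I would lean on that reference for the concentration machinery and focus the written proof on verifying that the constant-composition setup here satisfies the hypotheses of that lemma, then citing it for the conclusion.
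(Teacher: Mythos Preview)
Your proposal is a reasonable outline of the Csisz\'ar--Narayan argument, but note that the paper does not actually give a proof of this lemma: it is stated as ``a variation of Lemma~3 from \cite{CN88}'' and simply cited. So there is no paper-proof to compare against; the paper treats the result as imported wholesale from \cite{CN88}. Your concluding instinct---to verify that the constant-composition codebook setting meets the hypotheses of \cite[Lemma~3]{CN88} and then cite that lemma for the conclusion---is exactly what the paper does, and is all that is required here.

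That said, your sketch of the underlying argument is essentially correct: the proof in \cite{CN88} proceeds by a random-coding construction with type-counting estimates, the nested $|R-I(X';S)|^+$ arises from first bounding (for fixed $i$ and $\s$) the number of $j$ with $(\x_i,\x_j,\s)\in\tau_{XX'S}$, the outer $|\cdot|^+$ from then counting the $i$'s, and the union over the exponentially many $\s$ is handled by a doubly-exponential concentration bound (Lemma~1 of \cite{CN88}). Your identification of subtleties (i) and (ii) is accurate. If you wanted to write out the details rather than cite, the only point to be careful about is that the concentration lemma in \cite{CN88} is stated for i.i.d.\ codewords drawn uniformly from a type class, which is precisely the setting here, so no adaptation is actually needed.
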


The proof of the following lemma appears in Appendix \ref{appendix-lemma:sufficiently_myopic}.

\begin{lem}
\label{lemma:sufficiently_myopic}
Let codewords $\x_1, \dots ,\x_M$ be drawn independently and uniformly at random from some type class $\tau_X$. 
For each type class $\tau_Z$, and each $\z \in \tau_Z$, the number of messages $i$ such that $\|\x_{i}\oplus \z\|=d$ is, with high probability in $n$, bounded below by 
\begin{equation*}
\left \lfloor \frac{1}{(n+1)^8} 2^{n(R-I(X;Z))} \right \rfloor ,
\end{equation*}
where we let $p_{X|Z}$ be the conditional distribution of pairs of words in their respective type classes that are distance $d$ apart, as follows:
\begin{align*}
p_{X|Z}(1|0) &= \frac{d}{2 np_{Z}(0)} + \frac{1}{2} - \frac{p_{X}(0)}{2p_{Z}(0)}\\
p_{X|Z}(0|1) &= \frac{d}{2 np_{Z}(1)} + \frac{1}{2} - \frac{p_{X}(1)}{2p_{Z}(1)}.
\end{align*}
\end{lem}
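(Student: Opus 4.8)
The plan is to reduce the statement to a single-codeword calculation and then apply a concentration inequality. Fix a type class $\tau_Z$, a word $\z\in\tau_Z$, and a target distance $d$, and write $N_{\z}:=|\{i:\|\x_i\oplus\z\|=d\}|=\sum_{i=1}^{M}\mathbbm{1}\{\|\x_i\oplus\z\|=d\}$. Since the codewords are drawn independently, $N_{\z}$ is a sum of $M$ i.i.d. Bernoulli random variables with common parameter $\mu:=\Pr[\|\x\oplus\z\|=d]$, where $\x$ is uniform on $\tau_X$; so it suffices to estimate $\mu$ and then show $N_{\z}$ does not fall far below its mean. The first step is to identify $\mu$ combinatorially: a word $\x\in\tau_X$ has $\|\x\oplus\z\|=d$ exactly when its joint type with $\z$ has marginals $p_X,p_Z$ and the number $a$ of coordinates with $(\x,\z)=(1,0)$ plus the number $b$ of coordinates with $(\x,\z)=(0,1)$ equals $d$. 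The equations $a+b=d$ and $a+(np_Z(1)-b)=np_X(1)$ pin down $a$ and $b$ uniquely, and dividing by $np_Z(0)$ and $np_Z(1)$ respectively reproduces precisely the conditional distribution $p_{X|Z}$ in the statement. (For $d$ outside the range where $a,b$ are non-negative integers the count is $0$, and one interprets the right-hand side accordingly.)

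Next I would estimate $\mu$ by standard type counting. The number of admissible $\x$ is $\binom{np_Z(0)}{a}\binom{np_Z(1)}{b}$, which is at least $(n+1)^{-2}\,2^{nH(X|Z)}$ because $\binom{m}{k}\ge (m+1)^{-1}2^{mH(k/m)}$ and $p_Z(0)H(p_{X|Z}(1\mid 0))+p_Z(1)H(p_{X|Z}(0\mid 1))=H(X|Z)$. Dividing by $|\tau_X|\le 2^{nH(X)}$ gives $\mu\ge (n+1)^{-2}\,2^{-nI(X;Z)}$, hence $\mathbb{E}[N_{\z}]=M\mu\ge (n+1)^{-2}\,2^{n(R-I(X;Z))}$.

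To finish: if $2^{n(R-I(X;Z))}<(n+1)^8$ the claimed bound $\lfloor (n+1)^{-8}2^{n(R-I(X;Z))}\rfloor$ equals $0$ and there is nothing to prove; otherwise $M\mu\ge (n+1)^{-2}2^{n(R-I(X;Z))}\ge (n+1)^6$. In the latter case the multiplicative Chernoff bound gives $\Pr[N_{\z}<\tfrac12 M\mu]\le\exp(-M\mu/8)\to 0$, while $\tfrac12 M\mu\ge (n+1)^{-8}2^{n(R-I(X;Z))}$ for all $n\ge 1$; so with high probability $N_{\z}$ is at least the claimed bound. The same computation with a union bound over the at most $2^n$ words $\z$ and the at most $(n+1)^2$ pairs $(\tau_Z,d)$ — affordable because $\exp(-M\mu/8)\le\exp(-(n+1)^6/8)$ decays faster than $2^{-n}$ — yields the stronger statement that, with high probability, the bound holds simultaneously for all $\z$, $\tau_Z$, and $d$, which is the form actually used in the codebook construction.

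The routine part is the type-counting estimate of $\mu$; the place that needs care is the bookkeeping of polynomial factors, and the generous $(n+1)^8$ in the statement is there precisely to absorb the $(n+1)^2$ loss from type counting, the factor $\tfrac12$ from the Chernoff step, and the union-bound overhead. The only genuinely delicate point is making the union bound over the exponentially many words $\z$ legitimate: this works because, in exactly the regime where the asserted bound is non-vacuous, $M\mu$ is at least $(n+1)^6$, so the failure probability per word is doubly dominated by the $2^n$ union-bound factor.
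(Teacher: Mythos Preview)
Your proof is correct and follows essentially the same route as the paper: identify the set $\{\x\in\tau_X:\|\x\oplus\z\|=d\}$ with a conditional type class, lower bound the single-codeword probability by type counting, apply a Chernoff-type concentration inequality, and finish with a union bound over $\z$. The only difference is in packaging: the paper re-derives the concentration step from scratch via two preparatory lemmas (Bernstein's trick plus a convexity bound on $D(\zeta p\|p)$), whereas you invoke the multiplicative Chernoff bound directly; your binomial estimate also yields a slightly tighter polynomial prefactor $(n+1)^{-2}$ than the paper's $(n+1)^{-4}$, but both are absorbed by the $(n+1)^{8}$ in the statement.
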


Combining Lemma~\ref{lemma:sufficiently_myopic} with the fact that
$|H\left( t \right) - H(t\pm\epsilon) | \leq - \epsilon \log_2(\epsilon/2)$
for $\epsilon < 1/4$ yields the following corollary.
\begin{co}
\label{co:sufficiently_myopic}
Given $\epsilon < 1/4$ and $X,Z\sim \mathrm{Bernoulli}(1/2)$, if $\mathbf{z} \in \mathcal{T}_{\epsilon}^{(n)}(Z)$, $|d - nq | \leq n \epsilon $, and $R = 1 - H(p) + 5 \epsilon \log_2 \epsilon/2 $, then the number of codewords messages distance $d$ from $\z$ is, with high probability, bounded below by
\begin{equation*}
\label{equation:denominator_bound:co}
\left \lfloor \frac{1}{(n+1)^8} 2^{n\left(R-1+H(d/n) + \epsilon \log_2 \epsilon /2 \right)} \right \rfloor \geq   \left \lfloor \frac{1}{(n+1)^8} 2^{n\left(H(q) - H(p)    + 7 \epsilon \log_2 \epsilon/2 \right)} \right \rfloor .
\end{equation*}
\end{co}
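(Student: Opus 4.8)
The plan is to derive Corollary~\ref{co:sufficiently_myopic} as an essentially mechanical specialization of Lemma~\ref{lemma:sufficiently_myopic} to the uniform case $X,Z\sim\mathrm{Bernoulli}(1/2)$, together with two quantitative facts: the stated Lipschitz-type bound $|H(t)-H(t\pm\epsilon)|\le -\epsilon\log_2(\epsilon/2)$ for $\epsilon<1/4$, and the mutual-information identity $I(X;Z)=H(X)-H(X\mid Z)$. First I would substitute $p_Z(0)=p_Z(1)=1/2$ and $p_X(0)=1/2$ into the formulas for $p_{X|Z}$ in Lemma~\ref{lemma:sufficiently_myopic}, which collapses them to $p_{X|Z}(1\mid 0)=p_{X|Z}(0\mid1)=d/n$; hence the conditional law of $X$ given $Z$ is $\mathrm{Bernoulli}(d/n)$, so $H(X\mid Z)=H(d/n)$ and, since $H(X)=1$, we get $I(X;Z)=1-H(d/n)$. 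Plugging this into the exponent $R-I(X;Z)$ from the lemma gives the exponent $R-1+H(d/n)$, which is the quantity appearing in the first displayed bound of the corollary once we also account for the $\epsilon\log_2\epsilon/2$ slack coming from the next step.

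Next I would handle the discrepancy between $d/n$ and $q$. We are given $|d-nq|\le n\epsilon$, i.e. $|d/n-q|\le\epsilon$, and also that $\z\in\mathcal T^{(n)}_\epsilon(Z)$ with $Z$ uniform — the latter is what legitimizes applying the lemma with the type $p_Z\approx(1/2,1/2)$ up to the $\epsilon$-slack, and it is also why the small perturbations in $p_{X|Z}$ away from $d/n$ are absorbed. Using the Lipschitz bound with $t=d/n$ and shift at most $\epsilon$ (combined with the error from $p_Z$ not being exactly uniform, which contributes another $O(\epsilon)$ term in the exponent after a Taylor estimate of $H$), I would replace $H(d/n)$ by $H(q)$ at a cost of at most a constant multiple of $-\epsilon\log_2(\epsilon/2)=\epsilon\log_2(2/\epsilon)$ in the exponent; bookkeeping the constants is exactly what produces the ``$+\epsilon\log_2\epsilon/2$'' in the middle expression and the ``$+7\epsilon\log_2\epsilon/2$'' in the rightmost one. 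Then substituting the prescribed rate $R=1-H(p)+5\epsilon\log_2\epsilon/2$ into $R-1+H(d/n)+\epsilon\log_2\epsilon/2$ and again swapping $H(d/n)\to H(q)$ gives $H(q)-H(p)+7\epsilon\log_2\epsilon/2$, which is the final bound.

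The only genuinely delicate point is constant-tracking: ensuring that the accumulated slack from (i) $p_Z$ deviating from uniform by $\epsilon$, (ii) $d/n$ deviating from $q$ by $\epsilon$, and (iii) the choice of $R$ really does collapse into the single coefficient $7$ on $\epsilon\log_2\epsilon/2$, with all signs correct (note $\epsilon\log_2\epsilon/2<0$ for $\epsilon<2$, so these are all \emph{negative} corrections shrinking the exponent, consistent with a lower bound that weakens as $\epsilon$ grows). I would organize this by first writing the exponent exactly in terms of $d/n$ and the true type of $\z$, then invoking the Lipschitz inequality once for each of the two replacements, and finally checking that $5+1+1=7$. Everything else — the floor, the factor $(n+1)^{-8}$, the ``with high probability'' qualifier — is inherited verbatim from Lemma~\ref{lemma:sufficiently_myopic} and requires no further argument.
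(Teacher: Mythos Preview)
Your proposal is correct and follows the same approach as the paper, which simply states that the corollary follows from combining Lemma~\ref{lemma:sufficiently_myopic} with the Lipschitz bound $|H(t)-H(t\pm\epsilon)|\le -\epsilon\log_2(\epsilon/2)$ for $\epsilon<1/4$. Your elaboration---specializing the conditional-type formulas to the near-uniform case to obtain $I(X;Z)\approx 1-H(d/n)$, then absorbing the deviations of $p_Z$ from uniform and of $d/n$ from $q$ via the Lipschitz bound, with the bookkeeping $5+1+1=7$---is exactly what the paper's one-line justification unpacks to.
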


We now prove achievability with a deterministic encoder.

\begin{proof}[Theorem \ref{theorem:simple_case} achievability proof]

Let $\tilde \delta_n $ be an arbitrary sequence such that $n\tilde \delta_n \geq \sqrt{n} \log_2 n $ and $\lim_{n\rightarrow \infty} \tilde \delta_n = 0$  
For convenience, set $\delta_n := -\tilde \delta_n \log_2 \tilde \delta_n/2$.

Suppose $0\leq p<q\leq 1/2$. For $n$ sufficiently large, and without loss of generality, let $R = 1 - H(p) - 5 \delta_n  > 1 - H(q) + \frac{8}{n} \log_2 (n+1) + 2 \delta_n$. We construct a $(M:=2^{nR}, n)$ code family with vanishing probability of error.\\

\noindent\textit{Encoding:}  By Lemmas \ref{lemma:csiszar_narayan} and \ref{lemma:sufficiently_myopic}, for $n$ sufficiently large, there exist codewords $\x_{1},\ldots, \x_{M}$ from the type class $\tau_X$, where $X \sim \text{Bernoulli}(1/2)$, such that: (1) for every $\z \in \mathcal{T}^{(n)}_{\tilde \delta_{n}}(Z)$ where $Z\sim$ Bernoulli($1/2$), the number of codewords that are distance $d$ from $\z$, where $|d-nq|\leq  n\tilde \delta_n$, is bounded below by $2^{n\left(H(q)-H(p)-7\delta_n- \frac{8}{n} \log_2 (n+1) \right)}$, and (2) for any type class $\tau_{XX'S}$ and any sequence $\s$, \eqref{equation:csiszar_narayan} holds. Given a message $i\in [M]$, transmit $\x_{i}$.\\

\noindent\textit{Decoding:} Let $\epsilon>0$ be sufficiently small. Given an output $\y\in \{0,1\}^{n}$, decode to message $i\in [M]$ if $i$ is unique with the property that $\|\x_{i}\oplus\y\|<n(p+\tilde \delta_n)$. Otherwise, declare adversarial interference by outputting ``0''.\\

\noindent\textit{Probability of error analysis:} 
Define
\[S(\z,d):=\{ i \in [M] : \|\z\oplus \x_{i}\|=d\},\]
\[E(\s):=\{i \in [M] : \exists j\neq i \text{ s.t. } \|\x_{i}\oplus \s \oplus \x_{j}\|< n(p+\tilde \delta_n) \}.\]
That is, $S(\z,d)$ is the set of messages in $[M]$ whose corresponding codewords are distance $d$ from $\z \in \{0,1\}^{n}$,  and $E(\s)$ is the set of messages $i$ in $[M]$ such that adding $\s$ to $\x_{i}$ results in the decoder potentially  confusing the intended message with a false message $j$. 
Let $J_{\mathbf{S}|\mathbf{Z},D}$ be adversary's choice of distribution given knowledge of the distance to the transmitted codeword, which is given by the random variable $D$. 
For fixed $J$, $e(i,J)$ is the probability of decoding error given that message $i$ was sent. Then,
\begin{align}
e(J)  
&= \frac{1}{M}\sum_{i=1}^{M}P(\text{error} \mid i ) \\
&=\sum_{i,\s, \z, d}P(\text{error} \mid i , \s, \z, d)P(i \mid \s, \z, d)J(\s\mid \z, d)P(\z, d) \\
&=\sum_{i,\s, \z, d}P(\text{error} \mid i , \s, \z, d)P(i \mid \z, d)J(\s\mid \z, d)P(\z, d) \label{eq:OHLAWDY}
\end{align}
Since every message whose codeword is distance $d$ from $\z$ is equiprobable
\[P(i \mid \z, d)=
\begin{cases}
\frac{1}{|S(\z,d)|} & \text{ if }  i \in S(\z,d), \\
0 & \text{ otherwise.}
\end{cases}
\]
Observe that $P(\text{error} \mid i, \s, \z, d)=0$ if for every $j\neq i$, $\|\x_{i}\oplus \s \oplus \x_{j}\| \geq  n(p+\tilde \delta_n)$. In other words, $P(\text{error} \mid i , \z, \s, d)=0$ if $i\notin E(\s)$. Then, \eqref{eq:OHLAWDY} is bounded as follows:
\begin{align}
\sum_{\s, \z, d}J(\s\mid \z, d)P(\z, d)\sum_{i=1}^{M}P(\text{error} \mid i ,\s, \z, d)P(i \mid \z, d) &\leq \sum_{\s, \z, d }J(\s\mid \z, d)P(\z, d) \cdot \frac{|S(\z,d)\cap E(\s)|}{|S(\z,d)|}\label{eq:prehappysnowman0}\\
&\leq \sum_{\substack{\s, \\ d:  |d-nq| \leq n \tilde \delta_n \\ \z\in \mathcal{T}_{ \tilde \delta_n}^{(n)}} }J(\s\mid \z, d)P(\z, d) \cdot \frac{|S(\z,d)\cap E(\s)|}{|S(\z,d)|}  \label{eq:prehappysnowman1} \\
&\quad+ \sum_{\substack{\s, d \\ \z\notin \mathcal{T}_{ \tilde \delta_n}^{(n)}} }J(\s\mid \z, d)P(\z, d)  \label{eq:prehappysnowman2}\\
&\quad+ \sum_{\substack{\s, \z \\ d: |d-nq| > n \tilde \delta_n} }J(\s\mid \z, d)P(\z, d)  \label{eq:prehappysnowman3},
\end{align}
since $\frac{|S(\z,d)\cap E(\s)|}{|S(\z,d)|} \leq 1.$

Now, observe that we can choose $\tilde \delta_n$ such that \eqref{eq:prehappysnowman2} and~\eqref{eq:prehappysnowman3} are typicality conditions, and more specifically
$$\lim_{n\rightarrow \infty} \left(\eqref{eq:prehappysnowman2}+\eqref{eq:prehappysnowman3}\right) \leq \lim_{n\rightarrow \infty} \left(\Pr \left(\mathbf{Z} \notin \mathcal{T}_{\\ \tilde \delta_n}^{(n)}\right) + \Pr \left( |D-nq| > n \\ \tilde \delta_n \right) \right)= 0.$$
On the other hand, to show that \eqref{eq:prehappysnowman1} (and thus \eqref{eq:OHLAWDY}) also approaches zero as $n$ increases, we make use of the following lemma, whose proof appears in Appendix~\ref{app:numerator_bound}.
\begin{lem}
\label{lemma:numerator_bound}
Let the codewords $\x_{1},\ldots,\x_{M}$ be constructed as above. Then, for any $\z\in \mathcal{T}_{\tilde \delta_{n}}^{(n)}$ and $\s\in \F_{2}^{n}$, $n$ sufficiently large, and $d$ such that $|d-nq|\leq \tilde \delta_n n$,
\begin{equation}
\label{equation:num_lemma}
|S(\z,d)\cap E(\s)|\leq 2^{n\left(H(q)-H(p)-7.5\delta_n\right)}.
\end{equation}
\end{lem}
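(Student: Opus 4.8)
The plan is to bound $|S(\z,d)\cap E(\s)|$ by a union-bound/counting argument over type classes, leveraging Lemma~\ref{lemma:csiszar_narayan} to control the number of codewords that can be ``confused'' under the shift $\s$. First I would observe that a message $i\in S(\z,d)\cap E(\s)$ satisfies two constraints simultaneously: (i) $\|\x_i\oplus\z\|=d$, which since $|d-nq|\le \tilde\delta_n n$ and $\z\in\mathcal T^{(n)}_{\tilde\delta_n}$ forces the pair $(\x_i,\z)$ into one of polynomially many type classes near $\tau_{XZ}$ with $X\sim\mathrm{Bernoulli}(1/2)$ and the conditional $p_{X|Z}$ of Lemma~\ref{lemma:sufficiently_myopic}; and (ii) there exists $j\neq i$ with $\|\x_i\oplus\s\oplus\x_j\|<n(p+\tilde\delta_n)$, i.e. $(\x_i,\x_j,\s)$ lands in some joint type $\tau_{XX'S}$ with the $XX'$-marginal Hamming distance below $n(p+\tilde\delta_n)$. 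So $S(\z,d)\cap E(\s)\subseteq \bigcup_{\tau_{XX'S}} \{i:\exists j\neq i,\ (\x_i,\x_j,\s)\in\tau_{XX'S}\}$, where the union is over the (polynomially many) joint types meeting constraint (ii), and then I would intersect with $S(\z,d)$.

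The heart of the argument is to apply Lemma~\ref{lemma:csiszar_narayan} to each such type class to get at most $2^{n|R-I(X;X'S)+|R-I(X';S)|^+|^+ + n\epsilon_1}$ confusable messages, and then show that over the admissible types this exponent is strictly below $n(H(q)-H(p)-7.5\delta_n)$. Since $R = 1-H(p)-5\delta_n$, and the relevant codewords are (nearly) Bernoulli$(1/2)$ with $X'$ at distance $<n(p+\tilde\delta_n)$ from $X\oplus\s$, one has $I(X;X'S)\ge 1-H(p+\tilde\delta_n)$ roughly — the shifted pair $\x_i\oplus\s,\x_j$ are close, so $X'$ given $X,S$ has low entropy — which makes $R-I(X;X'S) \lesssim H(p+\tilde\delta_n)-H(p)-5\delta_n$, and by the modulus-of-continuity bound $|H(p\pm\epsilon)-H(p)|\le -\epsilon\log_2(\epsilon/2)$ together with the choice $\delta_n=-\tilde\delta_n\log_2\tilde\delta_n/2$, this difference is at most a small multiple of $\delta_n$ and in particular strictly negative for large $n$, so the outer $|\cdot|^+$ collapses that branch to (almost) $0$. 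Separately, intersecting with $S(\z,d)$ cannot help the adversary beyond what Lemma~\ref{lemma:sufficiently_myopic}/Corollary~\ref{co:sufficiently_myopic} already say about $|S(\z,d)|\approx 2^{n(H(q)-H(p)+O(\delta_n))}$; I would carry the $S(\z,d)$ constraint through the type-counting so that the bound comes out as $2^{n(H(q)-H(p)-7.5\delta_n)}$ after absorbing the polynomial $(n+1)^{\#\text{types}}$ and $2^{n\epsilon_1}$ factors into the $\delta_n$ slack (legitimate since $n\tilde\delta_n\ge\sqrt n\log_2 n$ dominates $\log n$ terms).

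The main obstacle I anticipate is the bookkeeping of the exponent: one must simultaneously track the joint type $\tau_{XX'S}$ (for the confusion event), the type $\tau_{XZ}$ (for membership in $S(\z,d)$), and the arbitrary $\s$, and verify that the worst-case joint type — the adversary's best choice of $\s$ and induced correlations — still yields an exponent bounded by $H(q)-H(p)$ minus a constant-times-$\delta_n$ term. The delicate point is that $\s$ is adversarially chosen \emph{after} seeing (a noisy version of) the codeword, so one cannot simply average; instead the uniform-codebook guarantee of Lemma~\ref{lemma:csiszar_narayan} (which holds for \emph{all} $\s$ simultaneously with high probability) is exactly what rescues the argument, and the remaining work is to check that the $|R-I(X';S)|^+$ term — which accounts for how many $\x_j$'s can be close to a given shifted $\x_i\oplus\s$ — does not inflate the count, using that $R<1-H(q)+2\delta_n+\tfrac{8}{n}\log_2(n+1)$ relative to the entropy of the relevant conditional. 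I would conclude by collecting the polynomially-many type-class contributions, each bounded by $2^{n(\text{something}<H(q)-H(p)-7.5\delta_n-\epsilon)}$, into the stated bound.
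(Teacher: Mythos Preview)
Your overall plan---decompose by joint type and invoke Lemma~\ref{lemma:csiszar_narayan}---is the right one and matches the paper, and your treatment of the branch where the inner $|R-I(X';S)|^+$ vanishes is fine. The gap is in the other branch. You apply Lemma~\ref{lemma:csiszar_narayan} to the triple $(\x_i,\x_j,\s)$ and defer the $S(\z,d)$ constraint to a separate intersection step. But nothing in the constraints $\mathbb{E}[X]=\mathbb{E}[X']=\tfrac12$, $\mathbb{E}[X\oplus S\oplus X']\le p+\tilde\delta_n$ forces $I(X';S)$ to be large, and for types with $I(X';S)=0$ the exponent becomes
\[
\bigl|2R-I(X;S)-I(X';XS)\bigr|^+;
\]
with $I(X';XS)\ge 1-H(p+\tilde\delta_n)$ but $I(X;S)$ possibly $0$, this is roughly $2R-(1-H(p))\approx R$, so the Csisz\'ar--Narayan bound is vacuous there. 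Intersecting afterwards with $S(\z,d)$ gives only $|S(\z,d)\cap E(\s)|\le|S(\z,d)|$, and a concentration upper bound on $|S(\z,d)|$ lands at $2^{n(H(q)-H(p)-c\delta_n)}$ with $c\approx 3$---strictly \emph{above} the required $2^{n(H(q)-H(p)-7.5\delta_n)}$, so the ratio in the main achievability proof would not vanish. (The inequality you invoke for this step is also backwards: by construction $R=1-H(p)-5\delta_n>1-H(q)+\ldots$, not $<$.)

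The paper's resolution is to apply Lemma~\ref{lemma:csiszar_narayan} with the \emph{pair} $(\s,\z)$ in the role of the state sequence, so the exponent reads $\bigl|R-I(X;X'SZ)+|R-I(X';SZ)|^+\bigr|^+$. In the troublesome branch $R>I(X';SZ)$ one then obtains, via the chain rule and data processing,
\[
\bigl|2R-I(X;SZ)-I(X';XSZ)\bigr|^+\le\bigl|2R-I(X;Z)-I(X';X\oplus S)\bigr|^+\le\bigl|2R-(1-H(d/n))-(1-H(p+\tilde\delta_n))\bigr|^+,
\]
and the distance constraint $\mathbb{E}[X\oplus Z]=d/n$ now supplies the missing $1-H(q)$ term, yielding $H(q)-H(p)-8\delta_n$. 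In short, the $S(\z,d)$ constraint must be folded \emph{into} the Csisz\'ar--Narayan count, not intersected after the fact.
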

Combining Corollary \ref{co:sufficiently_myopic} and Lemma \ref{lemma:numerator_bound}, we see that 
\begin{align}
\frac{|S(\z,d)\cap E(\s)|}{|S(\z,d)|} \leq (n+1)^8 2^{-0.5 n\delta_n} \label{eqn:ratio_ub}
\end{align}
which converges to $0$ since $0.5 n\delta_n = -0.5n\tilde \delta_n \log_2 \tilde \delta_n /2 \geq 0.5\sqrt{n} \log_2 n.$ Since \eqref{eqn:ratio_ub} also serves as an upper bound for \eqref{eq:prehappysnowman1}, we are done.

\end{proof}

\begin{remark}
\label{remark:model_diff}
The authors of \cite{DJL19} examine capacity under the following model: the adversary views the transmitted codeword through a BSC$(q)$, and decides on a state sequence $\s$ such that $\|\s\|\leq tn$ for fixed parameter $t$ (to avoid confusion, we use ``$t$'' rather than ``$p$'' as used in \cite{DJL19}). This state sequence is added to $\x$ and sent noiselessly to the receiver. This differs from our model in two significant ways: (1) the adversary is power constrained and the no-adversary case is noiseless, and (2) error correction rather than authentication is considered. 

Consider this model in the authentication setting. Because the channel is not overwritable for any power constraint $t$, the oblivious case of $q=1/2$ yields an authentication capacity of $C_{\text{auth}}=1$, the non-adversarial capacity, by Theorem \ref{KK18_theorem}. In fact, Theorem \ref{theorem:simple_case} shows that even if the adversary can flip any number of bits (i.e. $t=1$), as long as $q>0$, the authentication capacity is equal to 1. Interestingly, the power constraint $t$, which is instrumental in the general AVC case, is immaterial in the authentication case.
\end{remark}


\subsection{When the adversary is less myopic than the receiver}

Now, suppose that the non-adversarial channel between legitimate users is the worse channel. In this case, the authentication capacity is dependent on whether the encoder must be deterministic, or allowed to be stochastic.

\begin{thm}
\label{thm:MBAC_deterministic_q<p}
If $0\leq q\leq p \leq 1/2$, and the encoder is deterministic, $C_{\text{auth}}=0$.
\end{thm}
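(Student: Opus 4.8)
The plan is to deduce this directly from Theorem~\ref{theorem:essentially_omniscient}, whose hypotheses are already in hand; the only work is to verify two structural facts about the $\text{MBAC}_{p,q}$ model, namely that it is $I$-overwritable and that its no-adversary channel is stochastically degraded with respect to the channel to the adversary.

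First I would check $I$-overwritability. In the $\text{MBAC}_{p,q}$ model the output channel is $W(y\mid x,s)=\text{BSC}(p)$ acting on $x\oplus s$. Take $P_{S\mid X',X}$ to be the deterministic distribution placing all mass on $s=x\oplus x'$; then $x\oplus s=x'$, so $\sum_{s}P_{S\mid X',X}(s\mid x',x)W(y\mid x,s)=W(y\mid x',s_{0})$ for all $x,x',y$. This is exactly the observation already used for the $q=0$ case in Remark~\ref{rmk:MBAC-U-over}, and it does not involve $q$, so it holds for every $p$ and every $q$. Next I would check the degradedness condition: the no-adversary channel here is $\text{BSC}(p)$ and the channel to the adversary is $U_{Z\mid X}=\text{BSC}(q)$; since $0\le q\le p\le 1/2$, Example~\ref{example:totally_ordered_BSC} gives that $\text{BSC}(p)$ is stochastically degraded with respect to $\text{BSC}(q)$ (concretely, composing $\text{BSC}(q)$ with $\text{BSC}\!\left((p-q)/(1-2q)\right)$, whose crossover probability lies in $[0,1/2]$ precisely because $q\le p\le 1/2$, reproduces $\text{BSC}(p)$).

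With both conditions verified, Theorem~\ref{theorem:essentially_omniscient} applies, and since the encoder is assumed deterministic we conclude $C_{\text{auth}}=0$. The only loose end is the degenerate boundary $q=1/2$, which together with $q\le p\le 1/2$ forces $p=1/2$; there the non-adversarial capacity $1-H(1/2)=0$ already pins $C_{\text{auth}}=0$, so this case may be disposed of separately (or simply absorbed into the statement). I do not expect a genuine obstacle: all the heavy lifting was done in proving Theorem~\ref{theorem:essentially_omniscient}, and what remains is the routine verification above.
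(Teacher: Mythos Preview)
Your proposal is correct and takes essentially the same approach as the paper: verify $I$-overwritability via the deterministic choice $s=x\oplus x'$, verify that $W_{Y\mid X,S=s_0}=\text{BSC}(p)$ is stochastically degraded with respect to $U_{Z\mid X}=\text{BSC}(q)$ when $q\le p$, and then invoke Theorem~\ref{theorem:essentially_omniscient}. Your version is in fact slightly more careful than the paper's, which states the degradedness direction backwards and does not spell out the explicit intermediate $\text{BSC}\!\left((p-q)/(1-2q)\right)$ or address the $q=1/2$ boundary.
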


\begin{proof}
It is straightforward to show that if $0\leq q\leq p\leq 1/2$, 
$U_{Z|X}$ is stochastically degraded with respect to $W_{Y|X, S=s_{0}}$
and the channel is $I$-overwritable: to see $I$-overwritability, choose $P_{S |X',X}$ to be deterministic such that $s=x'\oplus x$. Then, by Theorem \ref{theorem:essentially_omniscient}, $C_{\text{auth}}=0$.
\end{proof}

\begin{remark}
The authors of \cite{DJL19} show that for their similar model (which was detailed above in Remark \ref{remark:model_diff}), if $q<p$, then the deterministic coding capacity is equal to the capacity of the channel with an omniscient adversary. This is also what we have shown for authentication (and our modified model) in Theorem \ref{thm:MBAC_deterministic_q<p}, since the authentication capacity of the MBAC$_{p,q=0}$ is equal to zero.
\end{remark}

By the comments at the end of Remark \ref{rmk:MBAC-U-over} and Theorem \ref{theorem:U_over}, if $q=0$ or $p=1/2$, the authentication capacity of the MBAC$_{p,q}$ is zero, regardless of the type of encoder. However, allowing a stochastic encoder, we can achieve positive capacity as long as $q>0$ and $p<1/2$. Similarly to the strategy in Example \ref{example:stoch_enc_pos_rate}, we will send our initial codeword, $\t$, through an artificial channel, before sending that through the channels $U_{Z|X}$ and $W_{Y|X,S}$. Here, however, there must be some asymmetry to our artificial channel; otherwise the error pattern of the artificial channel is independent of $\t_{i}$, and the adversary may choose $\s=\t_{i}\oplus \t_{j}$ to reliably deceive the decoder.

\begin{thm}
\label{theorem:MBACp>q}
If $0< q\leq p < 1/2$, and the encoder is allowed to be stochastic, $C_{\text{auth}}=1-H(p)$, where this authentication capacity holds with error probability measured either as average or maximum over all messages.
\end{thm}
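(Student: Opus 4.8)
The converse is immediate: an authentication code is in particular an ordinary error-correcting code for the no-adversary channel $\mathrm{BSC}(p)$, so $C_{\text{auth}}\le 1-H(p)$. For achievability the plan is to implement the artificial-channel idea of Example~\ref{example:stoch_enc_pos_rate}, but — as the discussion preceding the theorem warns — with a deliberately \emph{asymmetric} artificial channel, and then to run essentially the probability-of-error analysis of Theorem~\ref{theorem:simple_case}; the difference is that the adversary's irreducible uncertainty now comes from the artificial randomization rather than from its own channel noise.

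\emph{Construction.} Fix $\gamma_n\downarrow 0$ and a typicality parameter $\tilde\delta_n$ as in the proof of Theorem~\ref{theorem:simple_case}, with $\gamma_n$ chosen so that $n\gamma_n$ dominates $n\delta_n$ (where $\delta_n:=-\tilde\delta_n\log_2(\tilde\delta_n/2)$) while still being small enough that the list-size estimate~\eqref{equation:csiszar_narayan} applies with a gap at rate $R=1-H(p)-\Theta(\gamma_n)$. Let $N_n$ be an asymmetric artificial channel — for concreteness, the $Z$-channel that maps $0$ to $1$ with probability $\gamma_n$ and fixes $1$ — and let $Q_n:=N_n\circ\mathrm{BSC}(p)$, a binary asymmetric channel whose mutual information under a near-balanced input tends to $1-H(p)$. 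Using Lemmas~\ref{lemma:csiszar_narayan} and~\ref{lemma:sufficiently_myopic}, choose a codebook $\t_1,\dots,\t_M$, $M=2^{nR}$, from an (essentially Bernoulli$(1/2)$) type class that is reliable over $Q_n$ and satisfies~\eqref{equation:csiszar_narayan} with a gap. The stochastic encoder transmits message $i$ by passing $\t_i$ through a simulated copy of $N_n$, obtaining $\x_i=\t_i\oplus\mathbf n$ with $\mathbf n$ supported on the zero-coordinates of $\t_i$ and of weight $\approx\gamma_n\|\t_i^{(0)}\|$. The receiver outputs the unique $j$ whose conditional type $\y\mid\t_j$ lies within $\delta_n$ of the transition law of $Q_n$, and ``$0$'' otherwise. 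With no adversary the effective channel is exactly $Q_n$, so decoding succeeds with high probability and the rate loss $\Theta(\gamma_n)$ vanishes, giving achievability of $1-H(p)$.

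\emph{Why a myopic adversary fails.} Because $q\le p$ and $N_n$ is asymptotically benign, the adversary's observation $\z$ pins down $\t_i$ w.h.p., so we may assume it knows the message; we may even grant it, oracle-style, the type of $\z$. The one thing we must \emph{not} hand it is the realized $\mathrm{BSC}(p)$ error $\mathbf e$ (nor, of course, $\mathbf n$), and this is the crux. If the receiver is to decode to some $j\ne i$ rather than declare interference, then $\y$ must fall in the decoding region of $\t_j$; since $\|\t_i\oplus\t_j\|\approx n/2$ while $\y=\t_i\oplus\mathbf n\oplus\mathbf s\oplus\mathbf e$ with $\mathbf n,\mathbf e$ light, this forces $\mathbf s$ to agree with $\t_i\oplus\t_j$ except on $O((p+\gamma_n)n)$ coordinates. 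But then $\mathbf s$ has transplanted $\mathbf n$ — which lives on the zero-coordinates of $\t_i$ — onto the coordinates that are \emph{one}-coordinates of $\t_j$ but zero-coordinates of $\t_i$ (a set of size $\approx n/4$), where the law of $Q_n$ calls for crossover frequency exactly $p$. There the observed crossover frequency is the frequency of $\mathbf n\oplus\mathbf e$ XOR-ed with whatever correction the adversary adds, and since that correction is (essentially) independent of $\mathbf n\oplus\mathbf e$ — the adversary has at best an $O(n\gamma_n)=o\!\big(nH(\gamma_n)\big)$-bit hint about $\mathbf n$ from $\z$, because $\gamma_n\ll q$ — XOR-ing can only pull the frequency toward $1/2$, hence to at least $p+\Theta(\gamma_n)$, never down to $p$. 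So $\y\mid\t_j$ is bounded away from the $Q_n$ law by $\Theta(\gamma_n)\gg\delta_n$ and the receiver declares interference. (This is exactly what a \emph{symmetric} artificial channel loses: there $\mathbf n$ is independent of $\t_i$, the state $\mathbf s=\t_i\oplus\t_j$ produces a perfectly $Q_n$-typical $\y$, and the adversary wins — the very degeneracy underlying the $q=0$ case in Theorem~\ref{thm:MBAC_deterministic_q<p}.)

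\emph{Execution and obstacle.} To turn the above into a bound on $\sup_J e(J)$ I would follow the $\frac{|S(\z,d)\cap E(\s)|}{|S(\z,d)|}$ template of the proof of Theorem~\ref{theorem:simple_case}: dispatch the atypical-$\z$ and atypical-noise terms by typicality; for the remaining terms, condition on $\z$, use the super-polynomially large conditional spread of $\mathbf n$ given $\z$ together with the fully independent $\mathbf e$ to show that the ``wrong-codeword'' event for each candidate $j$ has probability at most $2^{-\Omega(n\gamma_n^2)}$ (a Hoeffding bound on the $\approx n/4$ coordinates above), and use~\eqref{equation:csiszar_narayan}, valid with a gap by the choice of $R$, to bound the number of candidates $j$ any single $\mathbf s$ can reach, so the union over $j$ still tends to $0$; then average over the random codebook. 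The claim for $e_{\max}(J)$ follows because the codebook properties and the per-message estimates are uniform in $i$ (alternatively, expurgate half the messages at negligible rate loss). I expect the main obstacle to be precisely this balancing act: $\gamma_n$ must be large enough that the asymmetry budget $\Theta(\gamma_n)$ beats the typicality slack $\delta_n$ and that $2^{-\Omega(n\gamma_n^2)}$ beats the candidate count, yet small enough that the rate loss $\Theta(\gamma_n)$ is negligible and~\eqref{equation:csiszar_narayan} retains a gap at $R=1-H(p)-\Theta(\gamma_n)$; and one must check that \emph{no} state $\mathbf s$ — not merely the naive $\t_i\oplus\t_j$ — circumvents the crossover-frequency obstruction. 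Everything else is a reprise of the typicality-and-counting bookkeeping already in place for Theorem~\ref{theorem:simple_case}.
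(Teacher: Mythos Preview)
Your high-level architecture is the paper's: an asymmetric Z-channel as artificial noise, grant the adversary the message, and argue that the residual uncertainty about the artificial bits defeats any attack; the converse is identical.

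The error analysis, however, diverges substantially, and this is where your proposal has genuine gaps. The paper does not argue about crossover frequencies on coordinate subsets or invoke Hoeffding. It first strengthens the codebook lemma (Lemma~\ref{lem:eliminate_codewords}, not Lemmas~\ref{lemma:csiszar_narayan}--\ref{lemma:sufficiently_myopic}) to additionally guarantee $H(T_i\mid T_j)\ge\epsilon$ for every pair $i\ne j$. Then, for fixed $i,\z,\s$, it decomposes the event $\{\exists j\ne i:(\t_j,\Y)\in\mathcal T_\epsilon^{(n)}\}$ over joint types $\tau_{TT'ZSXY}$ and bounds each term by $2^{n(|R-I(T';TSZ)|^+-I(XY;T'|TSZ)+O(\epsilon))}$ via the joint typicality lemma and the list-size property. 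The only types that survive have $I(XY;T'|TSZ)\le 2\epsilon$, i.e.\ $T'\to TSZ\to XY$ approximately Markov; together with $(\t_j,\Y)$ being $P_{TY}$-typical this forces a single-letter identity $\sum_{t,x,z,s}P(t,x)U(z\mid x)Q(t',s\mid t,z)W(y\mid x,s)=\sum_x P(t',x)W(y\mid x,s_0)$, which for the Z-channel/BSC structure with $q>0$ is shown to imply $Q(t'\mid t)=\mathbf 1(t'=t)$, contradicting the codebook guarantee $H(T_i\mid T_j)\ge\epsilon$. This disposes of \emph{all} adversary strategies in one stroke.

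Your proposal is missing both ingredients. First, without the pruning to $H(T_i\mid T_j)\ge\epsilon$, you cannot assume ``$\|\t_i\oplus\t_j\|\approx n/2$'': since you hand the adversary $i$, it may target a $j$ with nearly-degenerate joint type, for which your ``$\approx n/4$'' coordinate set collapses. Second, the obstacle you flag yourself --- ruling out every $\s$, not just the naive $\t_i\oplus\t_j$ --- is the entire content of the argument, and ``XOR-ing can only pull toward $1/2$'' is not a proof: $s_k$ may depend on $z_k$, which carries information about $n_k$, and the adversary is free to balance the conditional $(\t_j,\y)$-type across the four $(\t_i,\t_j)$-coordinate classes jointly. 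The paper's single-letter reduction is exactly what converts that search over all $\s$ into a finite check. Finally, the $|S(\z,d)\cap E(\s)|/|S(\z,d)|$ template from Theorem~\ref{theorem:simple_case} does not transplant: there the ratio counts codewords (the randomness is over which codeword at distance $d$ was sent), whereas here the codeword $\t_i$ is fixed and the randomness is in $\mathbf n$, an object Lemma~\ref{lemma:sufficiently_myopic} says nothing about.
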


Notice that we state the capacity holds even if we consider maximum error probability over all messages: that is, considering error to be $\sup_{J}e_{\max}(J)$. Since the maximum error approaching zero with increasing block length implies average error doing the same, this is a stronger statement. We are able to make such a statement here because we may assume the adversary knows exactly which message is being sent, and claim that we can authenticate even in this case.

\begin{proof}[Theorem \ref{theorem:MBACp>q} converse proof]
As before, since any authentication code must also be an error-correcting code for the underlying non-adversarial channel, we have  $C_{\text{auth}}\leq C_{\text{BSC($p$)}} =1-H(p)$.
\end{proof}

For this case, we will again strengthen the adversary slightly in order to simplify some arguments. Specifically, we assume that the adversary can determine the message $i$ with perfect accuracy, and we allow the adversary's choice of distribution, $J_{S^{n}|Z^{n}}$, to be a function of the exact joint type of $\t_{i}$, the transmitted word $\x$, and the observed sequence $\z$.

We first present a result that will allow us to choose a good codebook; the proof of the following lemma may be found in Appendix~\ref{app:eliminate_codewords}. 

\begin{lem}
\label{lem:eliminate_codewords}
Let $P_{Y|T}$ be a discrete memoryless channel with capacity $C>0$ and capacity-achieving input distribution $P_{T}$. Let $R>0$ such that $R<C$, let $\epsilon>0$ be sufficiently small, let $n$ be sufficiently large, and let $M:=2^{nR}$. Then there exist codewords $\t_1,\ldots,\t_M\in \tau_T$ such that
\begin{enumerate}[(a)]
\item  for any type class $\tau_{TT'S}$, $i\in [M]$, and any sequence $\s$, 
\begin{equation}
|\{j:(\t_i,\t_j,\s)\in \tau_{TT'S}\}|
\leq 2^{n\left|R-I(T';TS)\right|^{+}+ n\epsilon},
\end{equation}
\item $H(T_{i}\mid T_{j}) \geq \epsilon$ for all $i\neq j$, where $T_{i}$ and $T_{j}$ are artificial random variables with joint distribution given by the empirical distribution of $\t_{i}$ and $\t_{j}$, and
\item with a typical set decoder, the {maximum} error probability for transmission over $P_{Y|T}$ is bounded above by $\epsilon$.
\end{enumerate}
\end{lem}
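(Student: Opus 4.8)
The plan is to obtain this codebook via a random coding argument, drawing the codewords $\t_1,\ldots,\t_M$ i.i.d.\ uniformly from the type class $\tau_T$ of the (near) capacity-achieving input distribution $P_T$, and then showing that each of the three properties (a), (b), (c) holds with probability tending to $1$; a union bound over the three ``bad'' events then yields a single codebook satisfying all of them simultaneously for $n$ large. Property (a) is the analogue of Lemma~\ref{lemma:csiszar_narayan} (Csisz\'ar--Narayan, Lemma~3 of \cite{CN88}) but with the roles slightly specialized: I would fix a type class $\tau_{TT'S}$ and a state $\s$, observe that for a fixed $i$ the number of $j$ with $(\t_i,\t_j,\s)\in\tau_{TT'S}$ is a sum of i.i.d.\ indicators each of expectation roughly $2^{-nI(T';TS)}$ (given $\t_i$, the probability a uniformly random codeword $\t_j$ in $\tau_T$ is jointly typical with the fixed pair $(\t_i,\s)$ of the right joint type), so the expected count is $\approx 2^{n(R-I(T';TS))}$; a Chernoff/Markov bound then shows the count exceeds $2^{n|R-I(T';TS)|^+ + n\epsilon}$ with doubly-exponentially small probability, and since there are only polynomially many type classes $\tau_{TT'S}$ and at most exponentially many... — here one must be slightly careful, because there are exponentially many sequences $\s$, but the standard trick (as in \cite{CN88}) is that the count only depends on $\s$ through its joint type with the codewords, or alternatively one absorbs the $2^{n\epsilon}$ slack to beat the $|\mathcal{S}|^n$ union bound since the per-$\s$ failure probability is doubly exponential. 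This last point — making the union bound over all $\s\in\mathcal{S}^n$ go through — is the main obstacle, and I would handle it exactly as Csisz\'ar and Narayan do, relying on the double-exponential decay dominating the single-exponential number of sequences.

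For property (b), I would note that for a fixed pair $i\neq j$ with $\t_i$ fixed and $\t_j$ uniform on $\tau_T$, the empirical conditional entropy $H(T_i\mid T_j)$ is, with high probability, close to $H(T\mid T')$ where $(T,T')$ are independent copies of $P_T$; since $P_T$ is capacity-achieving for a channel with $C>0$ it is not a point mass, so $H(T\mid T')=H(T)>0$, and hence $H(T_i\mid T_j)\geq\epsilon$ for $\epsilon$ small, with probability tending to $1$ for each pair. The number of pairs is only $M^2=2^{2nR}$, which is beaten by the (at least single-)exponential concentration of the joint type of two independent uniform draws from a type class around the product type; so a union bound over pairs still gives probability $\to 1$. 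Property (c) is the classical channel coding theorem with maximum-error refinement: for i.i.d.\ random codewords from $\tau_T$ and a typical-set (joint-typicality) decoder over the DMC $P_{Y|T}$, the average error probability over the codebook and messages is below $\epsilon/2$ for $R<C$ and $n$ large; then the standard expurgation argument (throw away the worst half of the codewords) converts this to a maximum error probability below $\epsilon$ at the cost of halving $M$, i.e.\ reducing the rate by $1/n\to 0$, which is absorbed into the assumption $R<C$ with room to spare.

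Finally I would combine: choosing $\epsilon$ small enough and $n$ large enough that each of the three events fails with probability $<1/3$, the union bound produces a codebook in $\tau_T$ satisfying (a), (b), and (c) simultaneously; after the expurgation step in (c) one re-checks that (a) and (b) still hold for the expurgated codebook (they are inherited, since both are ``for all $i$'' / ``for all $i\neq j$'' statements that are only made easier by deleting codewords). The one genuine subtlety to flag is the expurgation ordering — I would first fix a codebook satisfying (a) and (b) deterministically and showing (c) holds in expectation over that same random draw, then expurgate; since (a) and (b) are monotone under deletion of codewords while (c)'s expurgation only removes codewords, all three survive. So the only real work is the Csisz\'ar--Narayan-style union bound in (a), and I would simply cite or mirror their Lemma~3 argument for it.
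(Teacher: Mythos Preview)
Your handling of (a) and (c) is fine and essentially matches the paper. The gap is in your argument for (b).

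You claim that for a fixed pair $i\neq j$, the empirical joint type of $(\t_i,\t_j)$ concentrates near the product type with single-exponential failure probability, and that this beats the union bound over $M^2=2^{2nR}$ pairs. But $2^{2nR}$ is itself single-exponential, so you need the \emph{exponent} of the per-pair failure probability to exceed $2R$, and this is exactly what fails. Concretely, for $\t_i,\t_j$ drawn independently and uniformly from $\tau_T$, the probability that their joint type lies in a class $\tau_{TT'}$ with $H(T\mid T')<\epsilon$ is of order $2^{-nI(T;T')}\le 2^{-n(H(T)-\epsilon)}$, so after the union bound over pairs you get roughly $2^{n(2R-H(T)+\epsilon)}$. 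This vanishes only when $R<(H(T)-\epsilon)/2$, whereas the lemma allows any $R<C$, and $C$ can be arbitrarily close to $H(T)$ (e.g.\ nearly noiseless channels). So for $R$ in the range $[(H(T)-\epsilon)/2,\,C)$ your argument breaks.

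The paper repairs this not by a sharper per-pair bound but by a second expurgation step, parallel to the one you already use for (c). It starts with a slightly larger codebook of rate $R'=R+\delta$ satisfying Csisz\'ar--Narayan's Lemma~3, and uses the \emph{other} conclusion of that lemma (the ``$I(T;T')>R'+\epsilon'$'' clause): for any joint type with $H(T\mid T')<\epsilon$, one checks $I(T;T')-R'\ge H(T\mid Y)-H(T\mid T')+\tfrac{C-R}{2}>\epsilon'$, so the \emph{fraction of indices $i$} admitting some $j$ with $(\t_i,\t_j)\in\tau_{TT'}$ is at most $2^{-(\epsilon'/2)n}$. Summing over the polynomially many bad types and deleting those $i$'s costs only a sub-exponential fraction of the codebook, after which every surviving pair satisfies $H(T_i\mid T_j)\ge\epsilon$. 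Then the half-expurgation for (c) is applied, and (a) is re-verified for the final rate $R$ (both pruning steps only help, as you correctly note). The key conceptual point you missed is that property (b) must be obtained by \emph{removing} offending codewords, not by hoping none exist; the same idea you invoked for (c) has to be used twice.
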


We now prove achievability.

\begin{proof}[Theorem \ref{theorem:MBACp>q} achievability proof]
Let $0<q\leq p<1/2$. Let $\gamma,\delta>0$ be sufficiently small, and let $R:=C'-\delta$ where $C'$ is the capacity of a binary asymmetric channel, BAC($p,\gamma+p-2\gamma p$). The  BAC($p,\gamma+p-2\gamma p$) is equivalent to a Z-channel with crossover probability $\gamma$ followed by a BSC($p$). We construct a $(M:=2^{nR}, n)$ code family with vanishing probability of error using Lemma \ref{lem:eliminate_codewords}.\\

\noindent\textit{Encoding:} Let $P_{T}\sim$Bernoulli($\alpha$), with $\alpha$ chosen according to the capacity-achieving distribution of the BAC($p,\gamma+p-2\gamma p$), and choose a codebook $\t_{1},\ldots, \t_{M}$ as in Lemma \ref{lem:eliminate_codewords}.
In order to send message $i\in [M]$, we pass $\t_{i}$ through $V_{X|T}$, a Z-channel with $V(0\mid 1)=\gamma$, and $V(1\mid 0)=0$.\\

\noindent\textit{Decoding:} Let $\epsilon>0$ be sufficiently small. Upon receiving vector $\y \in \F_{2}^{n}$, decode to message $i$ if and only if $i$ is unique such that $(\t_{i},\y)\in \mathcal{T}_{\epsilon}^{(n)}(T,Y)$, where $(T,Y) \sim P_{T}\times P_{Y|T}$, where $P_{T}\sim$Bernoulli($\alpha$) and $P_{Y|T}$ is given by $W_{Y|X,S=s_{0}}\times V_{X|T}$.\\

\noindent\textit{Probability of error analysis:} Let $e_{1}(\t_i,\x,\s)$ be the probability $i$ does not satisfy the decoding requirement, assuming message $i$ is chosen, $\x$ is sent, and $\s$ is transmitted by the adversary, and let $e_{2}(\t_i,\x,\s)$ be the probability that some $j\in [M]$, $j\neq i$ satisfies the decoding requirement, assuming message $i$ is chosen, $\x$ is sent, and $\s$ is transmitted by the adversary.
In other words,
\[ e_{1}(\t_{i},\x,\s)=W(\phi^{-1}(i)^{c} \mid \x,\s) \text{ and } e_{2}(\t_{i},\x,\s)=W(\phi^{-1}(0,i)^{c} \mid \x,\s).\]

Allowing the adversary to know $\t_{i}, \z$ and $P_{\t_{i},\X,\z}$, we have
\begin{align}
e(i, J) &=
\mathbb{E}_{\X| \t_i} \left[\sum_{\z}U(\z \mid \X) J(\s_{0} \mid \z, \t_{i}, P_{\t_{i},\X,\z}) e_{1}(\t_{i},\X,\s_0)+\sum_{\s \neq \s_{0}}\sum_{\z} U(\z \mid \X) J(\s \mid \z,\t_{i}, P_{\t_{i},\X,\z}) e_{2}(\t_{i},\X,\s)\right]\\
&=\sum_{\x,\z}P(\x \mid \t_i)U(\z|\x)\left(J(\s_{0}|\z, \t_{i}, P_{\t_{i},\x,\z})e_{1}(\t_{i},\x,\s_{0})+\sum_{\s\neq \s_{0}}J(\s|\z, \t_{i}, P_{\t_{i},\x,\z})e_{2}(\t_{i},\x,\s)\right) \\
&=\sum_{\x,\z}P(\x,\z \mid \t_{i})\left(J(\s_{0}|\z, \t_{i}, P_{\t_{i},\x,\z})e_{1}(\t_{i},\x,\s_{0})+\sum_{\s\neq \s_{0}}J(\s|\z, \t_{i}, P_{\t_{i},\x,\z})e_{2}(\t_{i},\x,\s)\right) \label{eqn:markov}\\
&=\sum_{\substack{\x,\z \\ (\t_i,\x,\z)\in \mathcal{T}_{\epsilon}^{(n)}(T,X,Z)}}P(\x,\z \mid \t_{i})\left(J(\s_{0}|\z, \t_{i}, P_{\t_{i},\x,\z})e_{1}(\t_{i},\x,\s_{0})+\sum_{\s\neq \s_{0}}J(\s|\z, \t_{i}, P_{\t_{i},\x,\z})e_{2}(\t_{i},\x,\s)\right) \label{eqn:in_typical}\\
& \quad + \sum_{\substack{\x,\z \\ (\t_i,\x,\z)\notin \mathcal{T}_{\epsilon}^{(n)}(T,X,Z)}}P(\x,\z \mid \t_{i})\left(J(\s_{0}|\z, \t_{i}, P_{\t_{i},\x,\z})e_{1}(\t_{i},\x,\s_{0})+\sum_{\s\neq \s_{0}}J(\s|\z, \t_{i}, P_{\t_{i},\x,\z})e_{2}(\t_{i},\x,\s)\right), \label{eqn:not_typical}
\end{align}
where \eqref{eqn:markov} follows because $T\markov X\markov Z$ 
is a Markov chain. Consider \eqref{eqn:not_typical}. With high probability in $n$, $(\mathbf{T}_i,\X,\mathbf{Z})$ lies in the typical set. Thus, since 
\[\eqref{eqn:not_typical} \leq \sum_{\substack{\x,\z \\ (\t_i,\x,\z)\notin \mathcal{T}_{\epsilon}^{(n)}(T,X,Z)}}P(\x,\z\mid \t_{i})=P\left\{(\mathbf{T}_i,\mathbf{X},\mathbf{Z})\notin \mathcal{T}_{\epsilon}^{(n)}(T,X,Z) \mid \mathbf{T}_i = \t_{i}\right\},\]
we have that \eqref{eqn:not_typical} approaches $0$ as $n\to \infty$. 
Notice that
\begin{align}
\eqref{eqn:in_typical} &\leq 
\sum_{\substack{\s,\x,\z \\ (\t_{i},\x,\z)\in\mathcal{T}_{\epsilon}^{(n)}(T,X,Z)}}P(\z\mid \t_{i})
J(\s|\z, \t_{i}, P_{\t_{i},\x,\z})P(\x\mid \t_{i},\z)W(\phi^{-1}(i)^{c} \mid \x,\s)
\end{align} 
If for every fixed choice of $\s$ and $\z$ we have $\sum_{\x }P(\x\mid \t_{i},\z)W(\phi^{-1}(i)^{c} \mid \x,\s)\to 0$ as $n\to \infty$, where the sum is over $\x$'s such that $(\t_i,\x,\z)\in\mathcal{T}_{\epsilon}^{(n)}(T,X,Z)$, then \eqref{eqn:in_typical} approaches zero asymptotically. Thus, it is sufficient to prove the following lemma, which states that any choice of $\s$ given knowledge of $\t_i$ and $\z$ causes decoding failure for a vanishing fraction of the transmissions $\x$ that are consistent with $\t_{i}$ and $\z$. A proof of the lemma appears in Appendix \ref{app:fixed_izs}. 

\begin{lem}
\label{lem:fixed_izs}
Let $i$, $\z$, and $\s$ be fixed such that $(\t_{i},\z) \in \mathcal{T}_{\epsilon}^{(n)}$, and suppose $\X$ and $\Y$ are drawn from the distribution $P(\x \mid \t_{i},\z)W(\y\mid \x,\s)$, where $P(\x \mid \t_{i},\z)$ is the conditional distribution for $\X$ from the Markov chain $T \markov X \markov Z$. Then,
\[P\left\{\exists j\neq i: (\t_j,\mathbf{Y})\in T_\epsilon^{(n)}\right\}\to 0 \text{ \ as \ } n\to \infty.\]
\end{lem}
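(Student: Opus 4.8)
The plan is to fix $i$, $\z$, and $\s$ satisfying the hypotheses, and to bound the probability that there is a \emph{competing} codeword $\t_j$, $j \neq i$, that looks typical with the received word $\Y = \X \oplus \s$ (recall that over the MBAC with the strengthened adversary there is no residual BSC$(p)$ noise, so $\Y = \X \oplus \s$ exactly). The source of randomness is $\X \sim P(\x \mid \t_i, \z)$, which is the conditional law coming from the Markov chain $T \markov X \markov Z$; since $\X$ is (conditionally) an i.i.d.-like draw from a conditional type, I would first condition on the joint type of $(\t_i, \X, \z)$ and work type-class by type-class, absorbing the polynomially many types at the end. For a fixed competitor $\t_j$, the event $(\t_j, \Y) \in \mathcal{T}_\epsilon^{(n)}$ forces the joint type of $(\t_j, \t_i, \X, \z, \s)$ into a restricted family; I would use a union bound over the codewords $\t_j$ that could possibly land in each such joint type, controlling their number via Lemma \ref{lem:eliminate_codewords}(a), namely $|\{j : (\t_i,\t_j,\s) \in \tau_{TT'S}\}| \leq 2^{n|R - I(T';TS)|^+ + n\epsilon}$, and bounding the probability that $\X$ (hence $\Y$) realizes the required conditional type by the standard conditional-type probability estimate $\doteq 2^{-nD(\cdot\|\cdot)}$, or equivalently by an entropy/mutual-information exponent.

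The heart of the argument is then an exponent comparison: the expected number of bad competitors is at most $\sum 2^{n(|R - I(T';TS)|^+ + \epsilon)} \cdot 2^{-n(\text{something involving } I(X;T'|Z,\t_i,\s))}$, and I need this exponent to be strictly negative for all admissible joint types. This is where the asymmetry of the artificial Z-channel $V_{X|T}$ enters decisively, together with property (b) of Lemma \ref{lem:eliminate_codewords}, $H(T_i \mid T_j) \geq \epsilon$ for $i \neq j$. The point is that for $\t_j$ to be confusable, the adversary would essentially need $\s$ to carry $\t_i \oplus \t_j$ worth of information, but $\s$ is chosen knowing only $\z$ (a BSC$(q)$-degraded version of $\X$, itself a Z-channel-degraded version of $\t_i$), so $\s$ cannot be well-correlated with the \emph{difference} $\t_i \oplus \t_j$; the Z-channel's one-sided crossovers mean the error pattern $\X \oplus \t_i$ is not independent of $\t_i$, which is exactly what blocks the naive attack $\s = \t_i \oplus \t_j$ flagged in the paragraph preceding Theorem \ref{theorem:MBACp>q}. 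Quantitatively, I would show that whenever the exponent $|R - I(T';TS)|^+ - I(X; T' \mid Z, T, S)$ is nonnegative, the required conditional type for $\X$ is so atypical (or forces $\Y$ outside $\mathcal{T}_\epsilon^{(n)}$ relative to $\t_j$) that it contributes nothing; and the parameters $\gamma, \delta, \epsilon$ are chosen small enough (with $R = C' - \delta$ close to the BAC capacity) to make the surviving exponents strictly negative with room to spare.

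Concretely, the steps are: (1) write $P\{\exists j\neq i : (\t_j,\Y)\in\mathcal{T}_\epsilon^{(n)}\} \leq \sum_{j\neq i} P\{(\t_j, \X\oplus\s)\in\mathcal{T}_\epsilon^{(n)}\}$ and decompose the right side over joint types $\tau_{T_iT_jXZS}$ consistent with $(\t_i,\z)\in\mathcal{T}_\epsilon^{(n)}$ and with the fixed $\s$; (2) for each such type, bound the number of eligible $j$ by Lemma \ref{lem:eliminate_codewords}(a) and the probability that $\X$ realizes the needed conditional type, given $(\t_i,\z)$, by $2^{-n(I(X;T'\mid Z,T,S) - \epsilon')}$ using the conditional-type lemma and the Markov structure $T\markov X\markov Z$; (3) combine to get an exponent, and invoke (b) plus the one-sidedness of $V_{X|T}$ to show every exponent is negative, uniformly over types; (4) multiply by the polynomial number $(n+1)^{|\mathcal{T}|^{5}}$ of types, which is absorbed, concluding the probability $\to 0$. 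The main obstacle I anticipate is Step (3): carefully identifying the information quantity that $\s$ would need in order to make $\t_j$ confusable, and proving that the degradedness chain $\t_i \to \X \to \z \to \s$ together with the asymmetry of $V_{X|T}$ and the codebook separation (b) genuinely makes that quantity unattainable — this is the place where all the model-specific structure (Z-channel asymmetry, $q \le p$, the BAC capacity choice of $R$) has to be pinned down rather than handled by soft generalities.
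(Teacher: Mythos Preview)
Your overall architecture matches the paper's: decompose by joint type of $(\t_i,\t_j,\z,\s,\X,\Y)$, bound the number of eligible competitors via Lemma~\ref{lem:eliminate_codewords}(a), bound the probability that $(\X,\Y)$ realizes each conditional type by a joint-typicality exponent, and show each resulting exponent is strictly negative. Two corrections are needed, one minor and one substantive.

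The minor one: the BSC$(p)$ noise is \emph{not} removed here. The ``adversary sees the BSC error pattern'' strengthening you invoke was used in the proof of Theorem~\ref{theorem:simple_case}, not of Theorem~\ref{theorem:MBACp>q}; in the present lemma $\Y$ is drawn from $W(\y\mid\x,\s)$, so you must keep $Y$ as a separate coordinate in the type and use the exponent $I(XY;T'\mid TZS)$ rather than just $I(X;T'\mid Z,T,S)$. Relatedly, the paper applies Lemma~\ref{lem:eliminate_codewords}(a) with the pair $(\z,\s)$ in the role of the ``$\s$'' argument, giving the tighter count $2^{n|R-I(T';STZ)|^{+}}$; with your looser bound $2^{n|R-I(T';TS)|^{+}}$ the Case~1 chain-rule collapse to $R-I(T';Y)$ does not go through cleanly.

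The substantive gap is your Step~(3). The paper's mechanism is not a direct information-inequality showing the exponent is negative; it is a \emph{contradiction} showing the dangerous types do not exist. After splitting on the sign of $R-I(T';STZ)$, the case $R\ge I(T';STZ)$ collapses to exponent $R-I(T';STZXY)\le R-I(T';Y)<0$ because $\tau_{T'Y}\subset\mathcal{T}_\epsilon^{(n)}$ forces $I(T';Y)\ge C'-\epsilon'$. In the remaining case the exponent is $-I(XY;T'\mid STZ)$, so the only worry is types with $I(XY;T'\mid STZ)\le 2\epsilon$, i.e.\ for which the approximate empirical Markov chain $T'\markov STZ\markov XY$ holds. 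Combining this with the true-distribution constraints $Q(x,y\mid t,z,s)\overset{\epsilon}{=}P(x\mid t,z)W(y\mid x,s)$ and the typicality constraint on $Q(t',y)$ produces, as $\epsilon\to 0$, the equation
\[
\sum_{t,x,z,s}P(t,x)\,U(z\mid x)\,Q(t',s\mid t,z)\,W(y\mid x,s)=\sum_{x}P(t',x)\,W(y\mid x,s_0),
\]
which is precisely a $U$-overwritability-type identity in the $(t,t')$ variables. Specializing to $t'=0$ (where the Z-channel forces $X=0$, hence the right side is the BSC$(p)$ law) and inverting the BSC$(p)$ (possible since $p\ne 1/2$) yields $\sum_{t,x,z,s}P(t,x)U(z\mid x)Q(t'{=}0,s\mid t,z)\mathbf{1}(x\oplus s{=}1)=0$. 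Because the Z-channel has $\gamma>0$ we have $P(t{=}1,x)>0$ for both $x$, and because $0<q$ we have $U(z\mid x)>0$ for all $z,x$; hence every term vanishes and $Q(t'{=}0\mid t{=}1)=0$, forcing $Q(t'\mid t)=\mathbf{1}(t'{=}t)$. That is exactly what Lemma~\ref{lem:eliminate_codewords}(b) forbids. Your heuristic that ``$\s$ cannot carry $\t_i\oplus\t_j$'' is morally right, but without isolating this overwritability-style equation and the ensuing algebraic reduction you will not be able to close the argument; this is where the Z-channel asymmetry, the condition $q>0$, and property~(b) are each used in a single, specific line, rather than as soft structural facts.
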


With Lemma \ref{lem:fixed_izs} in hand, letting $\delta$ and $\gamma$ approach zero gives us a code with rate arbitrarily close to $\lim_{\delta,\gamma\to 0}C'-\delta=1-H(p)$ and vanishing error probability, proving achievability.

\end{proof}

\section{Conclusions}
\label{section:conclusions}

In this paper, we considered keyless authentication over an AVC where the adversary sees the transmitted sequence through a noisy channel $U_{Z|X}$. We introduced the channel condition $U$-overwritability as a generalization of the oblivious-adversary condition overwritability, and showed that $U$-overwritability is a sufficient condition for zero authentication capacity. We also showed that if users are restricted to deterministic encoders, there are additional cases in which the authentication capacity is zero: namely, when the adversary is able to reliably decode, and the AVC is vulnerable to an omnicient adversary ($I$-overwritable). However, allowing for stochastic encoders can allow for positive authentication capacity in these cases.   

Next, we compared adversaries to one another, and showed that once an adversary has a channel $U_{Z|X}$ such that the AVC is $U$-overwritable, the AVC is also $U$-overwritable for any less degraded channel to the adversary. As a consequence, if an AVC is overwritable, it is also $U$-overwritable for every $U_{Z|X}$. This can allow us to quickly determine $U$-overwritability for a large group of channels to the adversary.

Finally, we examined a myopic binary adversarial channel in detail. Interestingly, for this case the authentication capacity is always equal to the non-adversarial capacity of the underlying channel as long as the channel to the adversary is not perfect and we allow stochastic encoders. Furthermore, in this case the maximum error authentication capacity is equal to the average error authentication capacity. If we restrict to deterministic encoders, the authentication capacity drops to zero for the cases in which the non-adversarial channel between users is stochastically degraded with respect to the channel to the adversary, as the adversary is essentially omniscient in this scenario. 
An open question is whether $U$-overwritability is more generally a necessary condition for zero authentication capacity when stochastic encoders are allowed, and whether the authentication capacity is always equal to the non-adversarial capacity when it is positive.

\bibliographystyle{IEEEtran}
\bibliography{Myopic_refs}

\appendices 

\section{Lemma~\ref{lemma:sufficiently_myopic}} 
\label{appendix-lemma:sufficiently_myopic}

First we derive two lemmas which will help considerably in establishing Lemma
~\ref{lemma:sufficiently_myopic}
by simplifying the analysis that results from using Bernstein's trick when bounding sums of random variables.

\begin{lem}\label{lemma:swimmingmonkeysadness}
Let $0<t,p<1$, let $s \in \{-1,1\}$, and let $B_1,\dots,B_n$ be independent Bernoulli$(p)$ random variables.
$$\min_{h>0} \frac{\mathbb{E} \left[e^{h \sum_{i=1}^n s B_i}  \right]}{e^{hs nt }} \leq e^{-n D(t||p)}  , $$
where $D(t||p) := t \ln \left(\frac{t}{p}\right) + (1-t) \ln \left(\frac{1-t}{1-p}\right).$
\end{lem}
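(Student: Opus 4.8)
The plan is to prove this by the standard Chernoff/Bernstein exponential-moment argument. First I would write $\mathbb{E}\left[e^{h\sum_{i=1}^n sB_i}\right] = \left(\mathbb{E}\left[e^{hsB_1}\right]\right)^n = \left(1-p+pe^{hs}\right)^n$ using independence and the fact that each $B_i$ is Bernoulli$(p)$. Dividing by $e^{hsnt}$, the quantity to minimize over $h>0$ is $\left(\left(1-p+pe^{hs}\right)e^{-hst}\right)^n$, so it suffices to minimize the base $g(h) := \left(1-p+pe^{hs}\right)e^{-hst}$ over $h>0$ and show the minimum is at most $e^{-D(t\|p)}$.

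The key step is the optimization. Setting $u := e^{hs}$ (so $u$ ranges over $(1,\infty)$ if $s=1$ and over $(0,1)$ if $s=-1$), we minimize $\left(1-p+pu\right)u^{-t}$. Differentiating and setting equal to zero gives the stationarity condition $p u^{-t} - t(1-p+pu)u^{-t-1} = 0$, i.e. $pu = t(1-p+pu)$, which solves to $u^\star = \frac{t(1-p)}{p(1-t)}$. I would note this $u^\star$ is a genuine interior candidate: if $s=1$ the relevant regime is $t>p$ giving $u^\star>1$, and if $s=-1$ the relevant regime is $t<p$ giving $u^\star<1$; in the other (wrong-sign) regimes the minimum over $h>0$ is bounded by the value at $h=0$, which is $1$, and one checks $D(t\|p)\le 0$ is impossible unless... — actually the cleanest route is simply to plug $u^\star$ in and verify the bound holds as an inequality regardless. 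Substituting $u^\star$ back, $1-p+pu^\star = 1-p + \frac{t(1-p)}{1-t} = \frac{1-p}{1-t}$, and $(u^\star)^{-t} = \left(\frac{p(1-t)}{t(1-p)}\right)^t$, so the product is $\frac{1-p}{1-t}\left(\frac{p(1-t)}{t(1-p)}\right)^t = \left(\frac{p}{t}\right)^t\left(\frac{1-p}{1-t}\right)^{1-t} = e^{-D(t\|p)}$, using the definition of $D(t\|p)$ with natural logarithms. Raising to the $n$-th power gives the claim.

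The main obstacle — really a bookkeeping point rather than a deep difficulty — is handling the constraint $h>0$ cleanly: the unconstrained minimizer $u^\star$ only lies in the range achievable by $h>0$ when the sign $s$ and the relation between $t$ and $p$ are "compatible." Since the lemma is stated for all $s\in\{-1,1\}$ and all $0<t,p<1$, I would dispatch the incompatible case by observing that there $g$ is increasing (resp. decreasing appropriately) on $h>0$ so $\inf_{h>0} g(h) \le g(0^+) = 1$, and separately that in exactly those cases one has $D(t\|p)$... — more robustly, I would just remark that the infimum over $h>0$ is at most the infimum over all real $h$ plus a limiting argument, or restrict attention to the cases actually used in Lemma~\ref{lemma:sufficiently_myopic} (where $t$ corresponds to a distance fraction and the relevant sign makes $u^\star$ admissible). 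Either way the computation above is the heart of it; everything else is routine.
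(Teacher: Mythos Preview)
Your approach is essentially the same as the paper's: factor by independence to get $(1-p+pe^{hs})^n$, optimize the base over $h$, find the critical point $e^{hs}=u^\star=\frac{t(1-p)}{p(1-t)}$ (equivalently $h=s\ln\frac{t(1-p)}{p(1-t)}$), and evaluate to $e^{-D(t\|p)}$. The paper's proof is in fact terser than yours and simply asserts this is the unique critical point with positive second derivative, without discussing whether the critical $h$ is actually positive.

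You are right to flag the $h>0$ constraint as a bookkeeping issue, and your observation that admissibility of $u^\star$ requires $s$ and the sign of $t-p$ to be compatible is correct. However, one of your proposed fixes is backwards: you write that ``the infimum over $h>0$ is at most the infimum over all real $h$,'' but of course the infimum over a smaller set is \emph{at least} the infimum over the larger set, so this does not help. Your other fallback---restricting to the compatible case that is actually used downstream in Lemma~\ref{lemma:sufficiently_myopic} (there $s=-1$ and $t<p$, so $h^\star>0$)---is the correct resolution, and is implicitly what the paper relies on as well. In the incompatible cases the stated inequality genuinely fails (the left side is $1$ while the right is $<1$), so the lemma should really be read with the sign compatibility understood.
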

\begin{proof}
First, note that since the $B_{i}$'s are independent, for fixed $h>0$ we have
\[ \mathbb{E} \left[ e^{h \sum_{i=1}^n s B_i}\right]  = \prod_{i=1}^n \mathbb{E} \left[e^{hsB_i}\right] = ( 1 - p + pe^{hs})^n. \]
Thus,
\begin{align}\label{eq:swimmingmonkeysadness1} 
\min_{h>0} \frac{\mathbb{E} \left[e^{h \sum_{i=1}^n s B_i}  \right]  }{e^{hs nt }} = \min_{h>0} \left( ( 1 - p + pe^{hs})^n e^{-hs nt}\right).
\end{align}
Solving, we find that as long as $t < 1$,
\begin{align}\label{eq:swimmingmonkeysadness2}
\min_{h>0} \frac{\mathbb{E}\left[e^{h \sum_{i=1}^n s B_i}  \right]}{e^{hs nt }} \leq e^{-n D(t||p)}.
\end{align}
since the only critical point of \eqref{eq:swimmingmonkeysadness1} occurs at $ h = s \ln \frac{t(1-p)}{p(1-t)}$ and the second derivative is always positive. 
\end{proof}

In order to apply Lemma~\ref{lemma:swimmingmonkeysadness} to obtain Lemma~\ref{lemma:sufficiently_myopic} it will be helpful to lower bound $M D(t||p)$.

\begin{lem}\label{lemma:flyingmonkeyhappiness}
Let $ t  := \zeta p$ for some positive number $\zeta \in (0,1)$. Then,
$$ D(t||p) \geq p  ( \zeta \ln \zeta - \zeta + 1) .$$
\end{lem}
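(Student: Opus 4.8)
The plan is to substitute $t = \zeta p$ directly into the definition of $D(t\|p)$ and then reduce the claimed inequality to an elementary one-variable calculus fact. Writing out $D(\zeta p \| p) = \zeta p \ln\zeta + (1-\zeta p)\ln\frac{1-\zeta p}{1-p}$, the first term is already exactly $p\cdot \zeta\ln\zeta$, so it suffices to show the remaining term $(1-\zeta p)\ln\frac{1-\zeta p}{1-p}$ is at least $p(1-\zeta) = p - \zeta p$. In other words, after the substitution the whole lemma is equivalent to the inequality
\[
(1-\zeta p)\ln\frac{1-\zeta p}{1-p} \geq (1-\zeta p) - (1-p),
\]
since the right-hand side simplifies to $p - \zeta p$.

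The key step is to recognize this as an instance of the standard bound $a\ln\frac{a}{b} \geq a - b$ for positive reals $a,b$ (equivalently $\ln x \geq 1 - 1/x$, or $-\ln x \leq x - 1$ applied to $x = b/a$). I would apply it with $a = 1-\zeta p$ and $b = 1-p$; both are positive because $\zeta \in (0,1)$ and $p \in (0,1)$ force $\zeta p < 1$ and $p < 1$. This immediately gives $(1-\zeta p)\ln\frac{1-\zeta p}{1-p} \geq (1-\zeta p) - (1-p) = p - \zeta p = p(1-\zeta)$. Adding the exact term $p\,\zeta\ln\zeta$ from above yields $D(\zeta p\|p) \geq p(\zeta\ln\zeta - \zeta + 1)$, which is the claim.

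There is essentially no obstacle here: the only thing to be careful about is verifying the positivity/domain conditions so that the logarithm bound applies, and double-checking the algebra that $(1-\zeta p) - (1-p)$ collapses to $p(1-\zeta)$. If one prefers a self-contained argument over quoting $a\ln(a/b)\geq a-b$, one can instead fix $p$ and $\zeta$-differentiate $g(\zeta) := D(\zeta p\|p) - p(\zeta\ln\zeta - \zeta+1)$, check $g(1) = 0$, and show $g'(\zeta) = p\ln\frac{1-\zeta p}{1-p} \le 0$ for $\zeta \le 1$ (so $g$ decreases to $0$ at $\zeta=1$, hence $g\geq 0$ on $(0,1)$); but the direct substitution plus the elementary log inequality is the cleanest route.
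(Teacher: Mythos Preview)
Your main argument is correct and is a genuinely different route from the paper's. The paper fixes $\zeta$ and observes that $p\mapsto D(\zeta p\|p)$ is convex (citing a lemma from Csisz\'ar--K\"orner), then lower bounds it by its tangent line at $p=0$, whose slope one computes to be $\zeta\ln\zeta-\zeta+1$. Your approach instead substitutes $t=\zeta p$ directly and handles the two summands of $D(\zeta p\|p)$ separately: the first is exactly $p\,\zeta\ln\zeta$, and the second is bounded via the elementary inequality $a\ln(a/b)\ge a-b$ with $a=1-\zeta p$, $b=1-p$. Your route is more self-contained (it only needs $\ln x\le x-1$, not convexity of divergence), while the paper's convexity argument would generalize more readily to non-binary divergences.

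One small correction to your parenthetical alternative: the derivative should be $g'(\zeta)=p\ln\frac{1-p}{1-\zeta p}$ (equivalently $-p\ln\frac{1-\zeta p}{1-p}$), not $p\ln\frac{1-\zeta p}{1-p}$. Since $1-\zeta p>1-p$ for $\zeta<1$, this is indeed negative, so your conclusion that $g$ decreases to $g(1)=0$ and hence $g\ge 0$ on $(0,1)$ stands; only the sign in the displayed formula is off.
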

\begin{proof}
Borrowing from~\cite[Lemma~17.9]{CK11}, observe that $D(\zeta p||p)$ is a convex function of $p$, and thus a linear approximation at $p=0$ will provide a lower bound. 
Therefore,
$$D(t||p) \geq D(0||0) + \left.\frac{\mathrm{d} D( \zeta p||p)}{\mathrm{d} p}\right|_{p=0} p = (\zeta \ln \zeta - \zeta + 1 ) p. $$
\end{proof}

\begin{proof}[Proof of Lemma~\ref{lemma:sufficiently_myopic}]

Fix $\mathbf{z} \in \tau_{Z}$ and suppose all codewords $\X_{i}$ are independently chosen uniformly at random from $ \tau_{X}$. 
Before continuing further, we will show that the there exists a conditional type set $\tau_{X|Z}(\z)$ such that the set of all messages Hamming distance $d$ from $\z$ is equal to $\{i: \x_i \in \tau_{X|Z}(\z)\}$ since $\tau_X$, $\tau_Z$, and $d$ are fixed.
Indeed, define $\tau_{X|Z}(\mathbf{z})$ by
\begin{align}
p_{X|Z}(1|0) &= \frac{d}{2 np_{Z}(0)} + \frac{1}{2} - \frac{p_{X}(0)}{2p_{Z}(0)}\label{eqn:1given0}\\
p_{X|Z}(0|1) &= \frac{d}{2 np_{Z}(1)} + \frac{1}{2} - \frac{p_{X}(1)}{2p_{Z}(1)}.\label{eqn:0given1} 
\end{align}
For any $\x\in \tau_X$ that is distance $d$ from $\z$, observe that $d$ can be written as $d_0+d_1$, where $d_0$ is the number on indices where $\z$ is $0$ and $\x$ is $1$, and $d_1$ is the be the number of indices where $\z$ is $1$ and $\x$ is $0$.
Furthermore, because $d_0$ and $d_1$ must satisfy the linear equations
\begin{align}
np_{Z}(0) + d_1 - d_0 &= np_{X}(0)\\
d_1 + d_0 &= d,
\end{align}
we have
\begin{align}
d_0 &= \frac{d+ n [p_{Z}(0) - p_{X}(0)]}{2}\\
d_1 &= \frac{d+ n [p_{Z}(1) - p_{X}(1)]}{2},
\end{align}
and we see that the empirical distributions $p_{X|Z}(1|0)$ and $p_{X|Z}(0|1)$ are as in equations \eqref{eqn:1given0} and \eqref{eqn:0given1} above.
We conclude that $\x \in \tau_{X|Z}(\z)$. Because $\x$ was chosen arbitrarily from the type class, we see that any codeword distance $d$ from $\z$ must also be in $\tau_{X|Z}(\z)$. 
Now, let 
\[A_i = \begin{cases} 1 & \text{if } \X_i \in \tau_{X|Z}(\z), \\ 
0 &\text{else}, \end{cases}\]
so that we may write $|\{i : \|\z \oplus \X_i\| = d\}| = \sum_{i=1}^M A_i.$
Recall that
$\X_1,\dots, \X_M$ are independent, and note that by~\cite[Lemma~2.5]{CK11}, 
$$2^{n(H(X|Z) - \epsilon)}  \leq |\tau_{X|Z}(\z)|, \text{ \ \ and \ } |\tau_{X}| \leq 2^{nH(X)},$$
where $\epsilon = 4 \frac{\log_2 (n+1)}{n}  $. We conclude that the $A_i$ are independent Bernoulli$(b)$ random variables, where 
$$ b \geq \frac{|\tau_{X|Z}(\z)|}{|\tau_{X}|} \geq 2^{-n(I(X;Z) + \epsilon)}.$$
Hence, the probability that there are fewer than $\lfloor 2^{n(R- I(X;Z)- 2\epsilon)} \rfloor $ messages distance $d$ from $\z \in \tau_{Z}$ is equal to $\Pr \left( \sum_{i=1}^M A_i < \lfloor 2^{n(R- I(X;Z)- 2\epsilon)} \rfloor \right) .$

Notice that if $R< I(X;Z) + 2\epsilon$, then $\lfloor 2^{n(R- I(X;Z)- 2\epsilon)} \rfloor = 0$. Since the $A_i$'s are nonnegative, we have
\begin{align}
\Pr \left( \sum_{i=1}^M A_i < 0 \right) = 0. \label{eq:sm:lammmmmme}
\end{align}
On the other hand, if $R> I(X;Z) + 2\epsilon $ then:
\begin{align}
\Pr \left( \sum_{i=1}^M A_i < \left\lfloor 2^{n(R- I(X;Z)- 2\epsilon)}\right\rfloor  \right)&\leq \Pr \left( \sum_{i=1}^M A_i < 2^{n(R- I(X;Z)- 2\epsilon)}  \right) \\
&= \min_{h>0} \Pr \left( e^{-h \sum_{i=1}^M A_i} > e^{-h 2^{n(R- I(X;Z)- 2\epsilon)} } \right) \label{eq:sm:bt}\\
&\leq \min_{h>0}  \mathbb{E}\left[ e^{-h \sum_{i=1}^M A_i}\right]e^{h 2^{n(R- I(X;Z)- 2\epsilon)} } \label{eq:sm:mi} \\
&\leq e^{- M D\left(  2^{-n(I(X;Z)+ 2\epsilon)}  \middle|\middle| 2^{-n( I(X;Z) +\epsilon)} \right) } \label{eq:sm:sms} \\
&\leq e^{- 2^{n(R- I(X;Z)-\epsilon)} \left[1 - 2^{-n\epsilon}( 1+ n\epsilon\ln 2)\right]} \label{eq:sm:fmh}
\end{align}
where~\eqref{eq:sm:bt} is from Bernstein's trick;~\eqref{eq:sm:mi} is from Markov's Inequality;~\eqref{eq:sm:sms} is from Lemma~\ref{lemma:swimmingmonkeysadness} and the convexity of divergence (since $2^{-n(I(X;Z)+2\epsilon)}<2^{-n(I(X;Z)+\epsilon)}< b$); and~\eqref{eq:sm:fmh} is Lemma~\ref{lemma:flyingmonkeyhappiness}.
Combining the bounds~\eqref{eq:sm:lammmmmme} and~\eqref{eq:sm:fmh}, we obtain 
\begin{align}
\Pr \left( |\{i : \|\z \oplus \X_i\| = d\}| < \lfloor 2^{n(R- I(X;Z)- 2\epsilon)} \rfloor  \right) \leq e^{-2^{n\epsilon}[1 - 2^{-n\epsilon}(1+n\epsilon\ln 2)]} = e^{-(n+1)^4 + 1 + 4\ln (n+1)}.
\end{align}
From the union bound, then,
\begin{align}
\Pr \left(\exists \z \text{ such that } |\{i : \|\z \oplus \X_i\| = d\}| < \lfloor 2^{n(R- I(X;Z)- 2\epsilon)} \rfloor  \right) \leq e^{-(n+1)^4 + nR + 1 + 4\ln(n+1)}.\label{eqn:no_such_z_asymptotic'ly}
\end{align}
Since \eqref{eqn:no_such_z_asymptotic'ly} goes to zero as $n$ goes to infinity, we find that, with high probability, the number of messages that are distance $d$ from any $\mathbf{z}\in \tau_{Z}$ is at least
\[ \left\lfloor 2^{n(R- I(X;Z)- 2\epsilon)}\right\rfloor =\left\lfloor \frac{1}{(n+1)^{8}}2^{n(R- I(X;Z))}\right\rfloor,\]
where $p_{X,Z}$ is defined by the conditional distributions in \eqref{eqn:1given0} and \eqref{eqn:0given1}.
\end{proof}

\section{Lemma~\ref{lemma:numerator_bound}} \label{app:numerator_bound}

\begin{proof}[Proof of Lemma \ref{lemma:numerator_bound}]
We can upper bound the left-hand side of \eqref{equation:num_lemma} by splitting into different type classes. Namely,
\begin{equation*}
|S(\z,d)\cap E(\s)|
\leq \sum_{X,X',S,Z} |\{i: (\x_i,\x_j,\s,\z)\in \tau_{XX'SZ}\text{ for some }j\neq i\}|,
\end{equation*}
where the binary random variables $X,X',S,Z$ satisfy
\begin{equation}
\label{equation:rv_constraints}
\mathbb{E}[X]=1/2,\quad \mathbb{E}[X']=1/2,\quad 
\mathbb{E}[X\oplus Z]=d/n,\quad
\mathbb{E}[X\oplus S\oplus X']\leq p+\tilde{\delta_{n}}.
\end{equation}
Let $\epsilon_{1,n}>0$ be such that $\epsilon_{1,n}<0.5\delta_{n}$. Applying Lemma \ref{lemma:csiszar_narayan} with $(S,Z)$ in place of $S$, we have
\begin{equation}
\label{equation:type_bd}
|\{i: (\x_i,\x_j,\s,\z)\in \tau_{XX'SZ}\text{ for some }j\neq i\}|
 \leq 2^{n\left|R-I(X;X'SZ)+\left|R-I(X';SZ)\right|^+\right|^+ + n\epsilon_{1,n}}.
\end{equation}
We now want to bound this for random variables satisfying \eqref{equation:rv_constraints}. In particular, consider two cases. First, if $R\leq I(X';SZ)$, then we have that $\left|R-I(X;X'SZ)+\left|R-I(X';SZ)\right|^+\right|^+$ is equal to
\begin{align}
&=\left|R-I(X;X'SZ)\right|^+ \nonumber \\
&\leq \left|R-I(X;S\oplus X')\right|^+\label{equation:suff_small_1a}\\
&\leq  \left|R-(1-H(p+\tilde{\delta_{n}}))\right|^+\label{equation:suff_small_1b}\\
& \leq {|R - 1 +H(p) + \delta_n|^+}\\
&= 0, \label{equation:suff_small_1c}
\end{align}
where \eqref{equation:suff_small_1a} follows from the data processing inequality, \eqref{equation:suff_small_1b} from the last condition of \eqref{equation:rv_constraints}, and \eqref{equation:suff_small_1c} holds by the definition of $R$. Second, if $R>I(X';SZ)$, then we have that $\left|R-I(X;X'SZ)+\left|R-I(X';SZ)\right|^+\right|^+$ is equal to
\begin{align}
&=\left|2R-I(X;X'SZ)-I(X';SZ)\right|^+\nonumber\\
&=\left|2R-I(X;SZ)-I(X;X'|SZ)-I(X';SZ)\right|^+\label{equation:suff_small_2b}
\\&=\left|2R-I(X;SZ)-I(X';XSZ)\right|^+\label{equation:suff_small_2c}
\\&\leq \left|2R-I(X;Z)-I(X';X\oplus S)\right|^+\label{equation:suff_small_2d}
\\&\leq \left|2R-(1-H(d/n))-(1-H(p+\tilde{\delta_{n}}))\right|^+\label{equation:suff_small_2e}
\\&\leq { \left|2R + H(d/n) + H(p) -2  +\delta_n \right|^+} 
\\&\leq {\left|  H(d/n) - H(p)  - 9 \delta_n\right|^+} 
\\&\leq H(d/n)-H(p)-9 \delta_n \label{equation:suff_small_2f} \\
&\leq { H(q)-H(p)- 8 \delta_n},
\end{align}
where \eqref{equation:suff_small_2b} and \eqref{equation:suff_small_2c} follow from the chain rule, \eqref{equation:suff_small_2d} from the data processing inequality, \eqref{equation:suff_small_2e} from the last two conditions of \eqref{equation:rv_constraints}, and \eqref{equation:suff_small_2f} holds for $n$ sufficiently large.
In either case, \eqref{equation:type_bd} is upper bounded by
$
2^{n\left(H(q)-H(p)-8\delta_{n}+\epsilon_{1,n}\right)}$.
 Since $\epsilon_{1,n}< 0.5\delta_{n}$ and there are only a polynomial number of types, for sufficiently large $n$, we have shown \eqref{equation:num_lemma}.

\end{proof}

\section{Lemma~\ref{lem:eliminate_codewords}}\label{app:eliminate_codewords} 

In order to prove Lemma \ref{lem:eliminate_codewords}, we first give an extension of Lemma 3 of \cite{CN88}.

\begin{lem}
\label{lemma:csiszar_narayan_p>q}
Let $P_{Y|T}$ be a discrete memoryless channel with capacity $C>0$ and capacity-achieving input distribution $P_{T}$. Let $\epsilon''>0$, let $R',\epsilon'>0$ such that $\epsilon'<R'<C$, let $n$ be sufficiently large, and let $M:=2^{nR'}$. Then there exist $\t_1,\ldots,\t_M$ of length $n$ from type class $\tau_{T}$ that simultaneously satisfy the following for any type class $\tau_{TT'S}$:
\begin{enumerate}[(1)]
\item for any $i\in [M]$, and any sequence $\s$, 
\begin{equation}
\label{equation:csiszar_narayan2a}
|\{j:(\t_i,\t_j,\s)\in \tau_{TT'S}\}|
\leq 2^{n\left|R'-I(T';TS)\right|^{+}+ n\epsilon'}, 
\end{equation}
\item if $I(T;T')-R'>\epsilon'$, then 
\begin{equation}
\label{equation:csiszar_narayan2b}
\frac{1}{M}|\{i:(\t_i,\t_j)\in \tau_{TT'} \text{ for some }j\neq i\}|\leq 2^{-(\epsilon' /2)n}, \text{ and }
\end{equation} 
\item with a typical set decoder, the {average} error probability for transmission over $P_{Y|T}$ is bounded above by $\epsilon''$.
\end{enumerate}
\end{lem}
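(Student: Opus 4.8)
The plan is to combine a random-coding argument with three separate expurgation steps, one for each of the three desired properties. I would draw $2M$ codewords $\t_1,\dots,\t_{2M}$ independently and uniformly at random from the type class $\tau_T$, and then argue that with positive (in fact high) probability at least $M$ of them survive all three expurgations. For property (1), I would follow the proof of Lemma~3 in \cite{CN88} essentially verbatim: for a fixed $i$, a fixed sequence $\s$, and a fixed joint type $\tau_{TT'S}$, the expected number of indices $j$ with $(\t_i,\t_j,\s)\in\tau_{TT'S}$ is at most $2^{n(R'-I(T';TS)+o(1))}$, since each $\t_j$ ($j\neq i$) independently lands in the appropriate conditional type class with probability roughly $2^{-nI(T';TS)}$; a Markov/Chernoff bound plus a union bound over the polynomially many types and over $i$ (the dependence on $\s$ is absorbed because the bound only depends on the joint type of $(\t_i,\t_j,\s)$, and for each $i$ and each type there are only polynomially many ``profiles'' that matter) shows that the fraction of bad $i$'s is small, so we expurgate those. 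For property (2), note that if $I(T;T')-R'>\epsilon'$, then $\mathbb{E}|\{(i,j): i\neq j,\ (\t_i,\t_j)\in\tau_{TT'}\}| \le M^2 2^{-nI(T;T')} \le M\,2^{-n\epsilon'}$, so by Markov's inequality the number of indices $i$ participating in such a collision is, with high probability, at most $M\,2^{-n\epsilon'/2}$ after summing over the polynomially many types $\tau_{TT'}$ with $I(T;T')-R'>\epsilon'$; these get expurgated too. For property (3), standard random-coding/typicality-decoder analysis over the DMC $P_{Y|T}$ with input type $P_T$ (the capacity-achieving distribution) gives average error probability at most $\epsilon''$ for the full random code of rate $R'<C$ once $n$ is large; a final expurgation removes the worst half of codewords to control the remaining fraction, or — since here we only need \emph{average} error — we simply keep the average bound and note that throwing away a vanishing fraction of codewords for properties (1) and (2) changes the average error only negligibly.

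The bookkeeping step is to make the constants fit together: I would start from $2M$ codewords, show each of the three ``bad'' events removes at most, say, a $1/4$ fraction in expectation (for (2) and (3) use Markov; for (1) use that the expurgated fraction $\to 0$), so that by a union bound the probability that more than $3/4$ of the $2M$ codewords are bad is less than $1$; hence there is a realization with at least $2M\cdot(1/4) \ge M$ good codewords (for $n$ large), which we relabel $\t_1,\dots,\t_M$. One must double-check that expurgating codewords can only \emph{help} properties (1) and (2) — removing $\t_j$'s only shrinks the sets counted — and that it keeps the code inside $\tau_T$, which it trivially does; the error-probability claim (3) must be re-examined after expurgation, but since we are content with average error and only a vanishing fraction is removed, the average over the surviving codewords is still $\le \epsilon''$ (adjusting $\epsilon''$ by $o(1)$ if desired).

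The main obstacle I anticipate is property (1), specifically handling the universality over the adversarial sequence $\s$: naively union-bounding over all $|\mathcal{S}|^n$ sequences $\s$ is far too lossy. The resolution, exactly as in \cite{CN88}, is that for fixed $i$ the quantity $|\{j:(\t_i,\t_j,\s)\in\tau_{TT'S}\}|$ depends on $\s$ and $\t_i$ only through their joint type with the (random) $\t_j$'s, so the ``per-$i$'' failure probability can be controlled by a Chernoff bound whose exponent beats the polynomial count of types, and then one union-bounds over $i\in[2M]$ only. I would cite Lemma~\ref{lemma:csiszar_narayan} / \cite[Lemma~3]{CN88} for this combinatorial core and adapt its statement to carry the extra requirement that the input type be the capacity-achieving $P_T$ (rather than an arbitrary type), which is harmless since that proof never used anything special about the type. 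The remaining pieces — (2) is a routine second-moment/first-moment collision bound, and (3) is textbook joint-typicality decoding — are straightforward.
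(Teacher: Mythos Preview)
Your approach is correct in spirit---random coding from $\tau_T$ is the right idea, and citing \cite[Lemma~3]{CN88} for the combinatorial core is exactly what the paper does---but the expurgation scaffolding is more elaborate than necessary, and your description of the expurgation step for property~(1) slightly misreads what \cite[Lemma~3]{CN88} actually delivers. That lemma already guarantees that a codebook drawn uniformly from $\tau_T$ satisfies both \eqref{equation:csiszar_narayan2a} and \eqref{equation:csiszar_narayan2b} \emph{uniformly over all $i$, all $\s$, and all types} with probability $1-o(1)$ (the doubly-exponential concentration there is precisely what beats the union bound over $\s\in\mathcal{S}^n$); so there are no ``bad $i$'s'' to expurgate for property~(1), and property~(2) is already stated as a fractional bound, so nothing to expurgate there either. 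The paper's proof is therefore just a union bound: properties~(1) and~(2) hold with high probability by \cite{CN88}, property~(3) holds with positive probability by the standard random-coding average-error analysis (Markov on the random average error), and the intersection of these events is nonempty. Your route---start with $2M$ codewords and expurgate down to $M$---would also succeed, since removing codewords can only help~(1) and~(2) and removing a vanishing fraction perturbs the average error in~(3) negligibly; it simply does more work than is needed once you are willing to invoke \cite{CN88} as a black box.
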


\begin{proof}
The existence of a codebook having properties (1) and (2) follows directly from Lemma 3 of \cite{CN88}. In fact, not only does there exist such a codebook, but a codebook whose codewords are chosen uniformly at random from $\tau_{T}$ has both properties with high probability. Furthermore, choosing codewords uniformly at random from $\tau_{T}$ will result with high probability in a code whose average error probability over $P_{Y|T}$ vanishes as block length goes to infinity. Thus, a randomly chosen codebook simultaneously possesses properties (1)-(3) with high probability.
\end{proof}

To construct a codebook satisfying (a)-(c) of Lemma \ref{lem:eliminate_codewords}, we will take a codebook with properties (1)-(3) of Lemma \ref{lemma:csiszar_narayan_p>q}, and eliminate all codewords $\t_{i}$ such that the decoding error probability given $\t_{i}$ was sent is bounded away from zero, and also all $\t_{i}$ for which there exists $j\neq i$ with the property that $H(T_{i}\mid T_{j}) < \epsilon$, where $T_{i}$ and $T_{j}$ are artificial random variables with joint distribution given by the empirical distribution of $\t_{i}$ and $\t_{j}$.

\begin{proof}[Proof of Lemma \ref{lem:eliminate_codewords}]Let $\epsilon,\delta>0$ be sufficiently small, with $\delta<\epsilon<\delta'$, where $\delta':=\frac{C-R}{2}$ and $Y$ is distributed according to $P_{Y|T}$.
Let $\epsilon'>0$ be such that $\epsilon'<\min\{R+\delta,\epsilon-\delta,\delta'-\epsilon\}$. Letting $R'=R+\delta$, Lemma \ref{lemma:csiszar_narayan_p>q} states that there exists a codebook of size $2^{n(R+\delta)}$ from the type class $\tau_{T}$ that satisfies properties (1)-(3), letting $\epsilon'':=\epsilon/2$ in property (3).

For each type class $\tau_{TT'}$
such that $H(T|T')<\epsilon$, 
\begin{align}
    I(T;T')-(R+\delta)&>H(T)-H(T|T')-(R+\delta)\\
    &>H(T)-H(T|T')-(C-\delta')\\
    & =H(T)-H(T|T')-I(T;Y)+\delta'\\
    &=H(T|Y)-H(T|T')+\delta'\\
    &> H(T|Y)+\delta'-\epsilon\\
    &>\epsilon'.
\end{align}  
Thus, by property (2) of Lemma \ref{lemma:csiszar_narayan_p>q}, for each such $\tau_{TT'}$, \[ \frac{1}{M}|\{i:(\t_i,\t_j)\in \tau_{TT'} \text{ for some } j\neq i\}|\leq 2^{-(\epsilon' /2)n}.\]
Since there are a polynomial number of joint type classes $\tau_{TT'}$ with $H(T|T')<\epsilon$,
\begin{equation}
\label{eqn:eliminated_set}
    \frac{1}{M}|\{i:(\t_i,\t_j)\in \tau_{TT'} \text{ for some } T, T' \text{ s.t. } H(T|T')<\epsilon, \text{ and some } j\neq i\}|\leq n^{O(1)}2^{-(\epsilon' /2)n}.
\end{equation} 
Removing all codewords $\t_{i}$ such that $i$ falls in the set on the left hand side of \eqref{eqn:eliminated_set}, we have at least $2^{n(R+\delta)}\left(1-n^{O(1)}2^{-(\epsilon'/2)n}\right)$ codewords remaining. For $n$ sufficiently large, 
\[ 1-n^{O(1)}2^{-(\epsilon'/2)n} \geq 2^{-(\delta/2) n}.\]
Thus, the number of remaining codewords is at least $2^{n(R+\delta/2)}$. 

With our initial choice of codebook sequence, the average error probability of the code for transmission over $P_{Y|T}$ with typical set decoding was bounded above by $\epsilon/2$; this remains true for our now-smaller codebook of size $2^{n(R+\delta/2)}$. Denote the average error probability for this new codebook by $\overline{P_{e}^{n}}$. For each codeword $\t_{i}$ remaining in the codebook, there is an associated decoding error probability $P_{e}^{n}(\t_i)$. Remove from the codebook half of the codewords: in particular, those that have the highest error probabilities. We claim that for each remaining codeword, we have $P_{e}^{n}(\t_i)\leq 2\overline{P_{e}^{n}}\leq \epsilon$. Indeed, were this not the case, then 
\[ \overline{P_{e}^{n}}
=\frac{\sum_{i\in S}P_{e}^{n}(\t_i)+\sum_{i\in [2^{n(R+\delta/2)}]\setminus S}P_{e}^{n}(\t_i)}{2^{n(R+\delta/2)}}
> \frac{\epsilon+(2^{n(R+\delta/2)-1}\cdot \epsilon)}{2^{n(R+\delta/2)}}>\epsilon/2,\]
where $S$ is the set of $2^{n(R+\delta/2)-1}$ codewords with smallest error probability. Thus, we conclude that each remaining codeword in our codebook of size $2^{n(R+\delta/2)-1}$ has error probability bounded above by $\epsilon$. Since $2^{n(R+\delta/2)-1}> 2^{nR}$ for $n$ sufficiently large, we may select $M:=2^{nR}$ codewords from those remaining, and these have maximum error probability bounded above by $\epsilon$. Call these $\t_{1},\ldots,\t_{M}$.


Finally, we show (a). Let $\tau_{TT'S}$ be some type class. If $R+\delta > I(T';TS)$, Lemma \ref{lemma:csiszar_narayan_p>q} gives that for any $i\in [M]$, and any sequence $\s$, 
\begin{align}
  |\{j:(\t_i,\t_j,\s)\in \tau_{TT'S}\}|
  &\leq 2^{n\left|(R+\delta)-I(T';TS)\right|^{+}+ n\epsilon'}  \\
  &= 2^{n(R-I(T';TS))+ n(\delta+\epsilon')}  \\
  &< 2^{n\left|R-I(T';TS)\right|^{+}+ n\epsilon},\label{eqn:upper_bd_e'}
\end{align}
where \eqref{eqn:upper_bd_e'} follows from the upper bound $\epsilon'<\epsilon-\delta$. On the other hand, if $R+\delta \leq I(T';TS)$, then $R < I(T';TS)$, and 
\begin{align}
     |\{j:(\t_i,\t_j,\s)\in \tau_{TT'S}\}|
  &\leq 2^{n\left|(R+\delta)-I(T';TS)\right|^{+}+ n\epsilon'} \\
  &= 2^{n\left|R-I(T';TS)\right|^{+}+ n\epsilon'} \\
  &< 2^{n\left|R-I(T';TS)\right|^{+}+ n\epsilon}.
\end{align}
\end{proof}

\section{Lemma~\ref{lem:fixed_izs}}\label{app:fixed_izs}

\begin{proof}[Proof of Lemma \ref{lem:fixed_izs}]
We will use the following notation: if $|A-B|<\epsilon$, we write $A \overset{\epsilon}{=}B$. For distributions, ``$\overset{\epsilon}{=}$'' indicates that this holds for each set of realizations of the random variables involved.

Let $\tau_{TZS}$ be the joint type class for the sequences $\t_{i},\z$, and $\s$. Note that here $T,Z,S$ represent artificial random variables. We will use $Q$ to denote the distribution of these artificial random variables (e.g. $Q(t,z,s)$ is the joint type), and $P$ to denote the actual distribution that variables are drawn from. So, for example, we can assume that $\t_i,\z$ are jointly typical with respect to $P(t,z)$, i.e., $Q(t,z)\overset{\epsilon}{=} P(t,z)$. Let $\X,\mathbf{Y}$ be random vectors drawn from the distribution
\begin{equation}
P(\x|\t_i,\z) W(\y|\x,\s) 
\end{equation}
where $P(\x|\t_i,\z)$ is the conditional distribution for $\X$ from the Markov chain $T\markov X\markov Z$. We wish to upper bound
\begin{equation}
P\left\{\exists j\neq i: (\t_j,\mathbf{Y})\in T_\epsilon^{(n)}\right\}. 
\end{equation}
We can split this probability based on the joint type class of $(\t_i,\t_j,\z,\s,\X,\Y)$, restricted to those for which $\t_{j}$ is jointly typical with $\Y$:
\begin{equation}\label{eq:type_sum}
\sum_{\substack{\tau_{TT'ZSXY} \\ \text{ s.t. }\tau_{T'Y}\subset T_\epsilon^{(n)}}} P\left\{\exists j\ne i: (\t_i,\t_j,\z,\s,\X,\mathbf{Y})\in \tau_{TT'ZSXY}\right\}.
\end{equation}
Since there are only polynomially many types, we only need to show that each of the probabilities in \eqref{eq:type_sum} is exponentially vanishing. Note that, by the law of large numbers, for $n$ sufficiently large, $Q(x,y|t,z,s)\overset{\epsilon}{=} P(x|t,z) W(y|x,s)$. Given any joint type, we may write
\begin{align}
P\left\{\exists j\ne i: (\t_i,\t_j,\z,\s,\X,\mathbf{Y})\in \tau_{TT'ZSXY}\right\}
&\leq \sum_{\substack{j\ne i \text{ s.t} \\ (\t_i,\t_j,\z,\s)\in \tau_{TT'ZS}}} P\left\{(\t_i,\t_j,\z,\s,\X,\mathbf{Y})\in \tau_{TT'ZSXY}\right\} \nonumber \\
&\leq |\{j\ne i: (\t_i,\t_j,\z,\s)\in \tau_{TT'ZS}\}| 2^{n(-I(XY;T'|TZS)+\epsilon)}\label{eq:joint_typicality} \\
&\leq 2^{n(|R-I(T';STZ)|^+-I(XY;T'|STZ)+2\epsilon)}\label{eq:set_cardinality_bound}
\end{align} 
where \eqref{eq:joint_typicality} follows from the joint typicality lemma, and \eqref{eq:set_cardinality_bound} follows from our choice of codebook satisfying Lemma \ref{lem:eliminate_codewords}. 

We now have two cases: the first is that $R\geq I(T';STZ)$. If this holds, then
\begin{align}
P\left\{\exists j\ne i: (\t_i,\t_j,\z,\s,\X,\mathbf{Y})\in \tau_{TT'ZSXY}\right\}
&\leq 2^{n(R-I(T';STZ)-I(XY;T'|STZ)+2\epsilon)}\\
&=2^{n(R-I(T';STZXY)+2\epsilon)}\\
&\leq 2^{n(R-I(T';Y)+2\epsilon)}. \label{eqn:case1}
\end{align}
Recall that $\tau_{T'Y}\subset T_{\epsilon}^{(n)}$, and so $I(T';Y)\geq C'-\epsilon'$ for some $\epsilon'>0$ such that $\epsilon'\to 0$ as $\epsilon \to 0$. Thus, for $\epsilon$ sufficiently small, $R<I(T';Y)$ and $I(T';Y)-R>2\epsilon$ simultaneously, so \eqref{eqn:case1} vanishes exponentially, and we are done.

The second case is that $R<I(T';STZ)$. Here, \eqref{eq:set_cardinality_bound} reduces to
\begin{equation}
P\left\{ \exists j\ne i: (\t_i,\t_j,\z,\s,\X,\mathbf{Y})\in \tau_{TT'ZSXY}\right\} \leq 2^{n(-I(XY;T'|STZ)+2\epsilon)}.\label{eqn:case2}
\end{equation}
If $I(XY;T'|TZS)>2\epsilon$, \eqref{eqn:case2} will be exponentially vanishing, and we are again done. So, now consider just the type classes $\tau_{TT'ZSXY}$ such that $I(XY;T'|TZS)\leq 2\epsilon$. That is, the Markov chain $T'\to STZ\to XY$ approximately holds for the empirical distribution, so that
\begin{equation}
Q(t,t',z,s,x,y)\overset{\epsilon'}{=} Q(t,t',z,s) Q(x,y|t,z,s).
\end{equation}
Where $\epsilon'$ goes to 0 with $\epsilon$. By the assumption that $Q(x,y|t,z,s)\overset{\epsilon}{=} P(x|t,z) W(y|x,s)$, we have
\begin{equation}
Q(t,t',z,s,x,y)\overset{\epsilon_{2}}{=} Q(t,t',z,s) P(x|t,z) W(y|x,s)=Q(t,z)P(x|t,z) Q(t',s|t,z) W(y|x,s)
\end{equation}
where $\epsilon_{2}=\epsilon'+\epsilon$. We can also assume that $Q(t,x,z)\overset{\epsilon}{=} P(t,x) U(z|x)$, and $Q(x|t,z)\overset{\epsilon}{=}P(x|t,z)$ so we have
\begin{equation}
Q(t,t',z,s,x,y)\overset{\epsilon_{3}}{=} P(t,x) U(z|x) Q(t',s|t,z)W(y|x,s),
\end{equation}
where $\epsilon_{3}=\epsilon'+3\epsilon$. In addition, we know that 
\begin{equation}
Q(t',y)\overset{\epsilon}{=} \sum_x P(t',x) W(y|x,s_0).
\end{equation}
Therefore,
\begin{equation}
\sum_{t,x,z,s} P(t,x) U(z|x) Q(t',s|t,z)W(y|x,s) \overset{\epsilon_{4}}{=} \sum_x P(t',x) W(y|x,s_0)
\end{equation}
where $\epsilon_{4}=16(\epsilon'+3\epsilon)+\epsilon$. However, we will show that as a consequence of our code design, there is in fact no conditional distribution $Q(t',s|t,z)$ that satisfies the above. In particular, we can make $\epsilon$ as small as we like, so it is enough to show that if $Q(t',s|t,z)$  satisfies
\begin{equation}
\sum_{t,x,z,s} P(t,x) U(z|x) Q(t',s|t,z)W(y|x,s) = \sum_x P(t',x) W(y|x,s_0)\label{eq:equality_constraint}
\end{equation}
then $Q(t'|t)=\mathbf{1}(t'=t)$, which is a contradiction to our choice of codebook. 

Define $\tilde{Y}=X\oplus S$, so the channel $W_{Y|X,S}$ is broken down into a deterministic part from $(X,S)$ to $\tilde{Y}$, followed by a BSC($p$) from $\tilde{Y}$ to $Y$. We may then rewrite \eqref{eq:equality_constraint} as
\begin{equation}
\sum_{t,x,z,s,\tilde{y}} P(t,x) U(z|x) Q(t',s|t,z)\mathbf{1}(\tilde{y}=x\oplus s)W(y|\tilde{y}) = \sum_x P(t',x) W(y|x,s_0).
\end{equation}
Consider $t'=0$. By our encoding procedure, it is then the case that $X=0$ deterministically, and we can write
\begin{equation}
\sum_{t,x,z,s,\tilde{y}} P(t,x) U(z|x) Q(t'=0,s|t,z)\mathbf{1}(\tilde{y}=x\oplus s)W(y|\tilde{y}) = \begin{cases} 
1-p & y=0,\\ 
p & y=1.
\end{cases}
\end{equation}
Define $a(\tilde{y}):=\sum_{t,x,z,s} P(t,x) U(z|x) Q(t'=0,s|t,z)\mathbf{1}(\tilde{y}=x\oplus s)$. Since $W(y|\tilde{y})$ is a $\text{BSC}(p)$, and $p\neq 0, 1/2$, then the above equation simultaneously holding for $y=0$ and $y=1$ implies that $a(0)(1-p)+a(1)p=1-p$ and $a(0)p+a(1)(1-p)=p$. Solving this system yields $a(1)=0$, or, expanded:
\begin{equation}
\sum_{t,x,z,s} P(t,x) U(z|x) Q(t'=0,s|t,z)\mathbf{1}(1=x\oplus s)=0.
\end{equation}
Since each term in this sum is non-negative, they all must equal 0. In particular, suppose there exists a pair $(z,s)$ where $Q(t'=0,s\mid t=1,z)>0$. Then, noting that we only need to consider $x=s\oplus 1$, so we have
\begin{equation}
P(t=1,x=s\oplus 1) U(z\mid x=s\oplus 1) Q(t'=0,s\mid t=1,z)=0 \ \ \text{ \ which implies \ } \ \ P(t=1,x=s\oplus 1) U(z\mid x=s\oplus 1)=0.
\end{equation}
But this is impossible, since $P(t=1,x)>0$ for each $x$, and $U(z\mid x)>0$ for each $x$ (here we use the assumption that $0<q<1$). Thus, we must have $Q(t'=0,s|t=1,z)=0$ for all $s,z$. Therefore, $Q(t'=0|t=1)=0$, so $Q(t'=1|t=1)=1$. Since $T$ and $T'$ must have the same marginal distribution of $Q(t)=Q(t')=0.5$ for all $t,t'$, we have $Q(t=1|t'=1)=1$, which in turn implies that $Q(t=0 \mid t'=1)=0$, and $Q(t'=1 \mid t=0)=0$. In other words, $Q(t'|t)=\mathbf{1}(t'=t)$. However, this is impossible based on the fact that our codebook satisfies Lemma \ref{lem:eliminate_codewords}. In other words, there is no consistent type class $\tau_{TT'ZSXY}$ such that $I(XY;T'|STZ)\leq 2\epsilon$. This completes the proof of the lemma. 
\end{proof}

\end{document}